\DeclareMathOperator*{\argmax}{arg\,max}
\algrenewcommand\algorithmicrequire{\textbf{Requisito:}}
\algrenewcommand\algorithmicensure{\textbf{Observação:}}
\algrenewcommand\algorithmicfunction{\textbf{function}}
\algrenewcommand\algorithmicfor{\textbf{for}}
\algrenewcommand\algorithmicdo{\textbf{do}}
\algrenewcommand\algorithmicend{\textbf{end}}
\algrenewcommand\algorithmicreturn{\textbf{return}}
\algrenewcommand\algorithmicprocedure{\textbf{procedure}}
\algrenewcommand\algorithmicif{\textbf{if}}
\algrenewcommand\algorithmicelse{\textbf{else}}
\algrenewcommand\algorithmicthen{\textbf{then}}
\algrenewcommand\algorithmicwhile{\textbf{while}}
\renewcommand*{\ALG@name}{Algorithm}
\newclass{\Hard}{hard}
\newclass{\pNP}{paraNP}
\newclass{\Hness}{hardness}
\newcommand{\NPH}{\NP\text{-}\Hard}
\newcommand{\NPHness}{\NP\text{-}\Hness}
\newclass{\Complete}{complete}
\newclass{\Cness}{completeness}
\newcommand{\NPc}{\NP-\Complete}
\newcommand{\NPcness}{\NP-\Cness}
\newfunc{\YES}{YES}
\newfunc{\NOi}{NO}
\newfunc{\tw}{tw}
\newfunc{\sift}{ref}
\newcommand{\pname}[1]{\textsc{#1}}
\newcommand{\bigO}[1]{{\mathcal{O}\!\left(#1\right)}}
\newcommand{\td}[1]{{\mathbb{#1}}}
\newfunc{\opt}{opt}
\newfunc{\rmcx}{rmc}
\newcommand{\union}{\uplus}
\newcommand{\bigunion}{\biguplus}
\newfunc{\ins}{ins}
\newfunc{\shift}{shift}
\newfunc{\glue}{glue}
\newfunc{\proj}{proj}
\newfunc{\joinf}{join}
\newcommand{\join}{\sqcup}
\newcommand{\meet}{\sqcap}
\newcommand{\rmc}[1]{\rmcx\left(#1\right)}
\newcommand{\nproblem}[3]{{\centering\fbox{\pbox{\textwidth}{\pname{#1}\\\textit{Instance}: #2\\\textit{Question}: #3}}}}
\begin{document}
%


\title{Weighted Connected Matchings}

%
%
\author{Guilherme C. M. Gomes\inst{1}\and\\
Bruno P. Masquio\inst{2} \and\\
Paulo E. D. Pinto\inst{2} \and\\ Vinicius F. dos Santos\inst{1}\thanks{Partially supported by FAPEMIG and CNPq} \and\\ Jayme L. Szwarcfiter\inst{2,3}\thanks{Partially supported by FAPERJ and CNPq}}
\authorrunning{Gomes et al.}
%
\institute{
  Universidade Federal de Minas Gerais (UFMG) -- Belo Horizonte, MG -- Brazil 
  \email{\{gcm.gomes,viniciussantos\}@dcc.ufmg.br}
  \and
  Universidade do Estado do Rio de Janeiro (UERJ) -- Rio de Janeiro, RJ -- Brazil
  \email{\{brunomasquio,pauloedp\}@ime.uerj.br} \and
  Universidade Federal do Rio de Janeiro (UFRJ) -- Rio de Janeiro, RJ -- Brazil \email{jayme@nce.ufrj.br} \\
}
\maketitle              
\begin{abstract}
A matching $M$ is a $\mathscr{P}$-matching if the subgraph induced by the endpoints of the edges of $M$ satisfies property $\mathscr{P}$. As examples, for appropriate choices of $\mathscr{P}$, the problems \pname{Induced Matching}, \pname{Uniquely Restricted Matching}, \pname{Connected Matching} and \pname{Disconnected Matching} arise.
For many of these problems, finding a maximum $\mathscr{P}$-matching is a knowingly {\NPH} problem, with few exceptions, such as connected matchings, which has the same time complexity as the usual \pname{Maximum Matching} problem.
The weighted variant of \pname{Maximum Matching} has been studied for decades, with many applications, including the well-known \pname{Assignment} problem.
Motivated by this fact, in addition to some recent researches in weighted versions of acyclic and induced matchings, we study the \pname{Maximum Weight Connected Matching}.
In this problem, we want to find a matching $M$ such that the endpoint vertices of its edges induce a connected subgraph and the sum of the edge weights of $M$ is maximum.
Unlike the unweighted \pname{Connected Matching} problem, which is in {\P} for general graphs, we show that \pname{Maximum Weight Connected Matching} is {\NPH} even for bounded diameter bipartite graphs, starlike graphs, planar bipartite, and bounded degree planar graphs, while solvable in linear time for trees and subcubic graphs.
When we restrict edge weights to be non negative only, we show that the problem turns to be polynomially solvable for chordal graphs, while it remains {\NPH} for most of the cases when weights can be negative.
Our final contributions are on parameterized complexity.
On the positive side, we present a single exponential time algorithm  when parameterized by treewidth.
In terms of kernelization, we show that, even when restricted to binary weights, \pname{Weighted Connected Matching} does not admit a polynomial kernel when parameterized by vertex cover under standard complexity-theoretical hypotheses.

\keywords{Algorithms \and Complexity \and Induced Subgraphs \and Matchings}
\end{abstract}

\section{Introduction}

The problems involving matchings have a vast literature in both structural and algorithmic graph theory~\cite{edmonds_matching,10.1007/s00453-003-1035-4,LOZIN20027,BrunoMasquio:2019:TeseMestrado,vazirani,MOSER2009715,10.1007/978-3-030-48966-3_31}.
A matching is a subset $M \subseteq E$ of edges of a graph $G = (V,E)$ that do not share any endpoint.
A $\mathscr{P}$-matching is a matching such that $G[M]$, the subgraph of $G$ induced by the endpoints of edges of $M$, satisfies property $\mathscr{P}$.
The problem of deciding whether or not a graph admits a $\mathscr{P}$-matching of a given size has been investigated for many different properties $\mathscr{P}$ over the years. One of the most well known examples is the {\NPcness} even for bipartite of the \pname{Induced Matching}, where $\mathscr{P}$ is being $1$-regular~\cite{CAMERON198997}. Other {\NPH} problems include \pname{Acyclic Matching}~\cite{GODDARD2005129}, \pname{$k$-Degenerate Matching}~\cite{BASTE201838}, \pname{Uniquely Restricted Matching}~\cite{Golumbic2001}, and \pname{Disconnected Matching}. For the latter, we prove its {\NPHness} even for bipartite and chordal graphs in~\cite{disconnected_matchings}. One of the few exceptions of a $\mathscr{P}$-matching problem polynomially solvable is \pname{Connected Matcihng}, in which $G[M]$ has to be connected.

It is worth mentioning that the name \pname{Connected Matching} was also used for another problem where it is asked to find a matching $M$ such that every pair of edges of $M$ has a common adjacent edge~\cite{cameron_connected_matching}. However, we adopt the more recent meaning of \pname{Connected Matching}, given by Goddard et. al~\cite{GODDARD2005129}, who also proved that the sizes of a maximum matching and a maximum connected matching in a connected graph are the same. We strengthen this result in~\cite{disconnected_matchings}, showing that a maximum connected matching can be obtained with the same complexity as a maximum matching.

Recently, some $\mathscr{P}$-matchings concepts were extended to edge-weighted problems, in which, in addition to the matching to have a certain property $\mathscr{P}$, the sum of the weights of the matched edges is sufficiently large. It was shown that \pname{Maximum Weight Induced Matching} can be solved in linear time for convex bipartite graphs~\cite{klemz2022} and in polynomial time for circular-convex and triad-convex bipartite graphs~\cite{panda2020}. Also, \pname{Maximum Weight Acyclic Matching} was approached in~\cite{dieter2019}, showing that the problem is polynomially solvable for $P_4$-free graphs and $2P_3$-free graphs.

Motivated by these studies, in addition to the pertinence in {\P} of the unweighted version of the problem, \pname{Maximum Weight Matching}, we study problems where the graphs are edge-weighted. Thereby, we present the decision \pname{Weighted Connected Matching} and the optimization \pname{Maximum Weight Connected Matching} problems, abbreviated respectively by \pname{WCM} and \pname{MWCM}, which we formally define as follows:

\nproblem{Weighted Connected Matching(WCM)}{An edge weighted graph $G$ and an integer $k$.}{Is there a matching $M$ whose sum of its edge weights is at least $k$ and such that $G[M]$ is connected?}

\nproblem{Maximum Weight Connected Matching(MWCM)}{An edge weighted graph $G$.}{Which matching $M$ of $G$ has the maximum sum of its edge weights?}

Clearly, we can see that \pname{WCM} is in {\NP}, as we show in the next proposition.

\begin{proposition}\label{prop:wcm-np}
\pname{Weighted Connected Matching} is in {\NP}
\end{proposition}

In some cases, we approach separately \pname{Weighted Connected Matching} when negative weights are allowed or not, denoting, respectively as \pname{WCM} and \pname{WCM}$^+$. Note that, unlike most of the weighted matching problems, in connected matchings, it is relevant to consider graphs having negative edge weights also. 

Our results include the {\NPcness} of \pname{WCM}$^+$ for bounded diameter bipartite graphs and planar bipartite graphs. For the more general problem \pname{WCM}, we show that it can be solved in linear time for trees or subcubic graphs, while is {\NPc} for bounded degree planar graphs and starlike graphs. Unlikely, for the latter class, when we restrict the weights to be non-negative only, as in \pname{WCM}$^+$, the problem turns to be in {\P}, as we prove so for chordal graphs. Finally, we give a single exponential algorithm parameterizing by treewidth and show that \pname{WCM}$^+$ does not admit a polynomial kernel when parameterized by vertex cover. We summarize some of these results in Table~\ref{tab:p-matchings}.

\begin{table}[ht]
    \centering
\begin{tabular}{p{0.25\textwidth}|p{0.20\textwidth}|p{0.2\textwidth}|p{0.2\textwidth}} 
    \hline
    \multicolumn{2}{c|}{\multirow{2}{*}{Graph class}} & \multicolumn{2}{c}{Complexity} \\ \cline{3-4}
 \multicolumn{2}{c|}{} & \makecell{\pname{WCM}$^+$} & \makecell{\pname{WCM}} \\ \hline
 
 \multicolumn{2}{c|}{\makecell{General}}   & \multicolumn{2}{c}{\multirow{2}{*}{\makecell{{\NPc}\\ (Theorem~\ref{teo:wcm-bip})}}} \\ \cline{1-2}

\multicolumn{1}{c|}{\multirow{1}{*}{\makecell{Bipartite\text{ }}}} & \makecell{diameter at \\ most $4$} & \multicolumn{2}{c}{} \\ \cline{1-2}\cline{3-4}


\multicolumn{2}{c|}{\multirow{1}{*}{\makecell{Chordal}}}   & \multicolumn{1}{c|}{\multirow{2}{*}{\makecell{{\P}\\ (Theorem~\ref{teo:wcm-chordal-poly})}}} & 
\multicolumn{1}{c}{\multirow{2}{*}{\makecell{{\NPc}\\  (Theorem~\ref{teo:wcm-starlike})}}} \\ \cline{1-2}

\multicolumn{2}{c|}{\makecell{Starlike}}  & \multicolumn{1}{c|}{} & \multicolumn{1}{c}{} \\\hline

\multicolumn{1}{c|}{\multirow{2}{*}{\makecell{\\ Planar}}} & \makecell{bipartite} &  \multicolumn{2}{c}{\makecell{\NPc \\ (Theorem~\ref{theo:wcm-planar-nn})}} \\ \cline{2-2}\cline{3-4}

\multicolumn{1}{c|}{} & \makecell{$\Delta \geq 3$} & \makecell{?} & \makecell{{\NPc}\\ (Theorem~\ref{teo:wcm-planar})} \\ \hline

\multicolumn{2}{c|}{\makecell{$\Delta < 3$}} & \multicolumn{2}{c}{\makecell{{\P} \\ (Theorem~\ref{wcm-degree-3})}} \\ \hline

\multicolumn{2}{c|}{\makecell{Tree}} & \multicolumn{2}{c}{\makecell{{\P}\\ (Theorem~\ref{teo:wcm-tree-linear})}} \\ \hline

\end{tabular}
\caption{Complexities for \pname{Weighted Connected Matching} admitting negative weights(\pname{WCM}) or not(\pname{WCM$^+$}).}. \label{tab:p-matchings}
\end{table}

\noindent \textbf{Preliminaries}. For an integer $k$, we define $[k] = \{1, \dots, k\}$.
For parameterized complexity, we refer to~\cite{cygan_parameterized}.
We use standard graph theory notation and nomenclature as in~\cite{murty,classes_survey}.
Let $G = (V, E)$ be a graph, $W \subseteq V$, $M \subseteq E$, and $V(M)$ to be the set of endpoints of edges of $M$, which are also called $M$-saturated vertices, or just saturated. Let $\Delta(G)$ be the maximum vertex degree of $G$. We denote by $G[W]$ the subgraph of $G$ induced by $W$; in an abuse of notation, we define $G[M] = G[V(M)]$.
A matching is said to be maximum and maximum weight if there is no other matching of $G$ with greater cardinality and sum of edge weights, respectively. A matching is perfect if $V(M) = V(G)$. 
Also, $M$ is said to be connected if $G[M]$ is connected. Let $uv$ be an edge of $G$. We denote $w(uv)$ by the weight of the edge $uv$ and $w(M)$ by $\sum_{uv \in M} w(uv)$. The operations $G-uv$ and $G-v$ result, respectively, the graphs $G'=(V,E\setminus\{uv\})$ and $G[V\setminus\{v\}]$. We denote $K_{i,j}$ by a bipartite complete graph whose bipartition cardinalities are $i$ and $j$. A star graph is a graph isomorphic to $K_{1,i}$, for some $i$. The graphs $P_n$ and $C_n$ are path and cycle graphs having $n$ vertices.
A graph $G$ is $H$-free if $G$ has no copy of $H$ as an induced subgraph; $G$ is chordal if it has no induced cycle with more than three edges. A clique tree is a tree $T$ representing a chordal graph $G$ in which vertices and edges of $T$ correspond, respectively, to maximal cliques and minimal separators of $G$. A graph is planar if it can be embedded in the plane without edge crossings.
A graph is a starlike graph if it is chordal and has a clique tree that is also a star graph. A boolean formula is monotone if, for each of its clauses, literals are either all positive or all negative.

This paper is organized as follows.
In Section~\ref{sec:wcm}, we show that \pname{Weighted Connected Matching} is {\NPc} for starlike and bounded vertex degree, while polynomially solvable for trees and subcubic graphs. In Section~\ref{sec:wcm+}, we show that the problem remains {\NPc} for bounded diameter bipartite graphs, planar bipartite, while is in {\P} for chordal graphs. In Section~\ref{sec:kernel}, we show that \pname{Weighted Connected Matching} on parameterized by vertex cover does not admit a polynomial kernel, even if the input is restricted to bipartite graphs of bounded diameter. In Section~\ref{sec:single_exp}, we present a single exponential algorithm when parametrized by treewidth.
Finally, we present our concluding remarks and directions for future work in Section~\ref{sec:conclusions}.

\section{Weighted Connected Matching in some graph classes}
\label{sec:wcm}

\subsection{Starlike}

In this section, we prove that \pname{Weighted Connected Matching} is {\NPc} even for starlike graphs having edge weights in $\{ -1, +1 \}$.
A graph is said to be starlike if it is chordal and its clique tree is a star.
For the reduction, we use the {\NPc} problem \pname{3SAT}\cite{garey_johnson}, in which we are given a set $C$ of clauses with exactly three literals each. The question is if there is a truth assignment of the variables of $C$ such that at least one literal of each clause resolves to true. For an instance, we also denote $X$ as the set of variables of $C$.

For the \pname{Weighted Connected Matching} input, we use $k=|X| + |C|$ and the following reduction graph $G_{X,C}$.

\begin{enumerate}[(I)]
    \item For each variable $x_i \in X$, add a copy of $C_3$ whose vertices are labeled $x_i$, $x_i^+$ and $x_i^-$. Set weight $-1$ to the $x_i^+x_i^-$ and $+1$ to the other edges.
    \item For each pair of variables $x_i,x_j \in X$, add all possible edges between vertices of $\{ x_i^-,x_i^+ \}$ and $\{ x_j^-,x_j^+ \}$ and set its weights to $-1$.
    \item For each clause $c_i \in C$, add a copy of $K_2$ whose edge weight is $+1$ and label its endpoints as $c_i^+$, and $c_i^-$. Also, for each literal $x_j$ of $c_i$, connect by a weight $-1$ edge $c_i^-$ and $c_i^+$ to $x_j^-$ if $x$ is negated, or $x_j^+$ otherwise.
\end{enumerate}

This graph is indeed starlike, as its clique tree is a star, having as center the maximal clique containing the vertices $\{ x_i,x_i^+, x_i^- \mid x_i \in X \}$.

\begin{lemma}\label{lemma:wcm-star-ida}
Given a solution $R$ for the \pname{3SAT} instance $(X,C)$, we can obtain a connected matching $M$ in the $G_{X,C}$ having weight $|X|+|C|$.
\end{lemma}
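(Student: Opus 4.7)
The plan is to construct $M$ directly from the assignment $R$ by selecting, for each variable triangle, the unique positive-weight edge that corresponds to the truth value of the variable, together with the positive-weight edge inside each clause gadget. Specifically, I would set
\[
M \;=\; \bigl\{\, x_i x_i^{+} : R(x_i)=\text{true}\,\bigr\}\;\cup\;\bigl\{\, x_i x_i^{-} : R(x_i)=\text{false}\,\bigr\}\;\cup\;\bigl\{\, c_i^{+} c_i^{-} : c_i \in C\,\bigr\}.
\]
Only edges of weight $+1$ are chosen, so the weight computation is immediate: one $+1$ per variable (contributing $|X|$) and one $+1$ per clause (contributing $|C|$), giving $w(M)=|X|+|C|$ as required. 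That $M$ is a matching is also routine, since the three groups use pairwise disjoint vertex sets and within each group the chosen edges belong to vertex-disjoint gadgets.

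The substantive step is verifying that $G[M]$ is connected, and this is where the 3SAT solution is used. First, observe that $V(M)$ contains, for each variable $x_i$, the vertex $x_i$ together with exactly one of $x_i^{+}, x_i^{-}$, call it $\ell_i$; and for each clause $c_i$, both $c_i^{+}$ and $c_i^{-}$. Within each variable triangle, $x_i$ is adjacent in $G$ to $\ell_i$, so these two vertices lie in the same component of $G[M]$. Across different variables $x_i, x_j$, the construction (II) places an edge of $G$ between $\ell_i$ and $\ell_j$ (such an edge exists for every pair in $\{x_i^{+}, x_i^{-}\}\times\{x_j^{+}, x_j^{-}\}$), and this edge is present in $G[V(M)]$, connecting the two triangles. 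Hence the whole ``variable part'' of $V(M)$ lies in a single component.

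It remains to attach each clause gadget to this component, which is exactly where $R$ being a satisfying assignment matters. For every clause $c_i$, choose any literal that $R$ satisfies: if it is a positive literal $x_j$ then $R(x_j)=\text{true}$ so $\ell_j = x_j^{+}\in V(M)$, and by construction (III) the edges $c_i^{+}x_j^{+}$ and $c_i^{-}x_j^{+}$ belong to $G$; symmetrically if the satisfied literal is $\neg x_j$ then $\ell_j = x_j^{-}\in V(M)$ and the edges $c_i^{+}x_j^{-}, c_i^{-}x_j^{-}$ are in $G$. In either case $G[V(M)]$ contains an edge from $\{c_i^{+},c_i^{-}\}$ to the variable component, while the edge $c_i^{+}c_i^{-}\in M$ already joins $c_i^{+}$ and $c_i^{-}$ to each other. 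Hence every clause gadget attaches to the single component containing the variable part, and $G[M]$ is connected.

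The only non-mechanical ingredient is the connectivity argument, and even there the main obstacle is merely bookkeeping: making sure the case split on positive versus negated literals lines up with the convention ``pick $x_i^{+}$ iff $R(x_i)=\text{true}$'' so that a satisfying literal always produces a $G$-edge between a clause vertex and a vertex already in $V(M)$. Once this convention is fixed, the three properties (matching, weight $|X|+|C|$, connectedness) all follow from the construction.
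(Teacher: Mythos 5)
Your proposal is correct and follows essentially the same route as the paper: the same matching (one $+1$ edge per variable triangle chosen according to $R$, plus every clause edge $c_i^+c_i^-$), the same weight count, and the same use of the satisfying assignment to attach each clause gadget through a saturated literal vertex. If anything, your connectivity argument for the variable part (via $x_i\sim\ell_i$ inside each triangle and $\ell_i\sim\ell_j$ from rule (II)) is stated more carefully than the paper's brief claim that these edges ``induce a clique.''
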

\begin{proof}
We show how to obtain the matching $M$. (i) For each clause $c_i \in C$, add the edge $c_i^-c_i^+$ to $M$. Also, (ii) for each variable $x_i \in X$, if $x_i = T$, we saturate the edge $x_i^+x_i$; otherwise, $x_i^-x_i$.

This matching is connected. Edges from (ii) are connected as they induce a clique. Each edge from (i), obtained by clause $c_i \in C$, having $x_j$ as the variable related to the literal that resolves to true in $c_j$, is connected. This holds because, if $x_j$ is negated, then $c^+_ix_j^- \in E(G_{X,C})$ and $x_j^-$ is saturated. Otherwise, $c^+_ix_j^+ \in E(G_{X,C})$ and $x_j^+$ is saturated.
\end{proof}

\begin{lemma}\label{lemma:wcm-star-volta}
Given an input $(X,C)$ for \pname{3SAT} and a connected matching $M$ in $G_{X,C}$ having weight $|X|+|C|$, we can obtain an assignment $R$ of $X$ that solves \pname{3SAT}
\end{lemma}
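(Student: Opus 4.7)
The plan is to first pin down the structure of $M$ by a weight-counting argument, then read off a truth assignment from the triangle edges of $M$, and finally invoke connectivity of $G[V(M)]$ to certify that every clause is satisfied.

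First I would observe that the only positive edges of $G_{X,C}$ are the two triangle edges $x_ix_i^+$ and $x_ix_i^-$ for each variable $x_i$, together with the clause edges $c_i^+c_i^-$ for each clause. Since $x_i$ can be saturated at most once, any matching contains at most $|X|$ positive triangle edges, so the sum of positive weights of any matching is at most $|X|+|C|$. Writing $w(M)=p-n$ with $p$ and $n$ the numbers of positive and negative edges of $M$, the hypothesis $w(M)\geq |X|+|C|$ forces $n=0$ and $p=|X|+|C|$. Hence $M$ consists of exactly one of $\{x_ix_i^+, x_ix_i^-\}$ for every variable, together with every clause edge $c_i^+c_i^-$, and uses no negative edge.

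Next I would define $R(x_i)=T$ when $x_ix_i^+\in M$ and $R(x_i)=F$ when $x_ix_i^-\in M$, and check satisfaction clause by clause. Fix $c_i$; by construction step (III), the neighbors of $c_i^+$ in $G_{X,C}$ other than $c_i^-$ are exactly the literal vertices $x_j^+$ (for positive occurrences of $x_j$ in $c_i$) or $x_j^-$ (for negative occurrences). Because there is at least one variable, $V(M)$ contains saturated vertices beyond $\{c_i^+,c_i^-\}$, and connectedness of $G[V(M)]$ then forces at least one literal vertex of $c_i$ to belong to $V(M)$. Since $M$ contains no negative edge, the only way a vertex $x_j^\epsilon$ can be saturated is through its triangle edge $x_jx_j^\epsilon\in M$, so $x_j^+\in V(M)$ iff $R(x_j)=T$ and $x_j^-\in V(M)$ iff $R(x_j)=F$; in either case the corresponding literal of $c_i$ evaluates to true, as desired.

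The main delicacy I expect is the connectivity step. It is tempting to argue about the graph formed by the edges of $M$ alone, whereas $G[M]$ is defined in the paper as the graph \emph{induced on} $V(M)$, and therefore retains every negative edge of $G_{X,C}$ whose endpoints happen to be saturated, in particular the complete bipartite negative edges of step (II) between the literal vertices. The argument must exploit connectivity via adjacencies in $G_{X,C}$ itself, while simultaneously leveraging the fact that no negative edge is actually selected in $M$ in order to translate saturation of a literal vertex back into a truth value for its variable.
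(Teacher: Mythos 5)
Your proposal is correct and follows essentially the same route as the paper: a weight-counting argument showing that $M$ must consist of exactly one triangle edge per variable plus every clause edge and no negative edge, followed by using connectivity of $G[V(M)]$ (noting that the only outside neighbors of the clause gadget $\{c_i^+,c_i^-\}$ are its literal vertices) to extract a satisfying assignment from which literal vertices are saturated. If anything, your counting via $w(M)=p-n$ and your explicit remark that $G[M]$ is induced (so unselected negative edges may appear) make the argument slightly more careful than the paper's version.
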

\begin{proof}
Denote $W_{-1}$ and $W_1$ as the edge sets from $G_{X,C}$ whose weights are, respectively, $-1$ and $1$.

First, we show that a matching having weight $|X|+|C|$ contains exactly $|X|+|C|$ edges from $W_1$ and no edges from $W_{-1}$.

Note that there can be at most $|X|+|C|$ edges from $W_1$. This holds because, for each variable $x_i \in X$, there is at most one saturated edge of $\{x_i^+x_i, x_i^-x_i\}$, since both have an endpoint in vertex $x_i$. Also, for each clause $c_i \in C$, the edge $c_i^-c_i^+$ can be saturated simultaneously.

Since all remaining vertices,  contained in $W_{-1}$, have negative weights, if there is a matching with $|X|+|C|$ vertices from $W_1$ and no vertices from $W_{-1}$, then it is maximum.

Therefore, if $M$ is a matching whose weight is $|X| + |C|$, then $|M \cap W_1| = |X|+|C|$ and $|M \cap W_{-1}| =0$.

Moreover, $M$ is connected, then, for each saturated edge $c_i^+c_i^-$, there is a saturated adjacent vertex, either $x_j^+$ or $x_j^-$, $x_j \in X$, $x_j \in c_i$. Those vertices are exactly the ones representing a literal in the clause $c_i$.

So, to obtain $R$, for each variable $x_i \in X$, we set $x_i = T$ if and only if $x_i^+$ is saturated.
\end{proof} 

\begin{theorem}\label{teo:wcm-starlike}
\pname{Weighted Connected Matching} is {\NPc} even for starlike graphs whose edge weights are in $\{ -1,+1 \}$.
\end{theorem}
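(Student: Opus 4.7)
The proof plan is to assemble the ingredients already prepared in the excerpt. Membership in \NP{} is given by Proposition~\ref{prop:wcm-np}, so I only need to exhibit a polynomial-time reduction from \pname{3SAT} to \pname{Weighted Connected Matching} whose output lies in the starlike class and uses only weights $\pm 1$. The reduction is the map $(X,C) \mapsto (G_{X,C}, k)$ with $k = |X| + |C|$, built via construction items (I)--(III) above.

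First I would verify the structural properties of $G_{X,C}$. Edge weights are $\pm 1$ by inspection of (I)--(III). For the starlike property, I would identify the central maximal clique $K^\star$ as the set $\{x_i, x_i^+, x_i^- \mid x_i \in X\}$: items (I) and (II) install every edge among these vertices, so $K^\star$ is a clique; the only other maximal cliques are the triangles $\{c_i^+, c_i^-, y\}$ for each clause $c_i$, where $y \in \{x_j^+, x_j^-\}$ corresponds to a literal of $c_i$, together possibly with edges $c_i^+c_i^-$ that extend by one pendant vertex, and these all share at least one vertex with $K^\star$. Each such small clique meets $K^\star$ in a minimum separator, yielding a clique tree that is a star with center $K^\star$. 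This confirms $G_{X,C}$ is chordal with a star clique tree, hence starlike. The construction is polynomial since $G_{X,C}$ has $\bigO{|X| + |C|}$ vertices and $\bigO{|X|^2 + |C|}$ edges.

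Next, correctness: I would invoke Lemma~\ref{lemma:wcm-star-ida} to conclude that if $(X,C)$ is a \pname{Yes}-instance of \pname{3SAT}, then $G_{X,C}$ admits a connected matching of weight at least $|X| + |C| = k$, and Lemma~\ref{lemma:wcm-star-volta} for the converse: any connected matching of weight $k$ yields a satisfying assignment. Together these establish that $(X,C)$ is satisfiable if and only if $(G_{X,C}, k)$ is a \pname{Yes}-instance of \pname{WCM}.

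I do not anticipate a main obstacle here since both directions of the equivalence are already proved; the only verification I need to carry out carefully is that the clique tree of $G_{X,C}$ is a star, which is a short combinatorial check of the maximal cliques generated by (I)--(III). With membership in \NP, the polynomial-time reduction, the weight restriction $\{-1,+1\}$, and the starlike structure all established, the theorem follows.
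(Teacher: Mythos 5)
Your proposal is correct and follows essentially the same route as the paper: \NP{} membership via Proposition~\ref{prop:wcm-np}, the reduction $(X,C)\mapsto(G_{X,C},\,k=|X|+|C|)$ from \pname{3SAT}, and correctness via Lemmas~\ref{lemma:wcm-star-ida} and~\ref{lemma:wcm-star-volta}. One small slip in your structural check: the non-central maximal cliques are not triangles $\{c_i^+,c_i^-,y\}$ but rather $\{c_i^+,c_i^-\}$ together with all three literal vertices of $c_i$ (these literal vertices are pairwise adjacent inside the central clique), which still yields a star clique tree centered at $\{x_i,x_i^+,x_i^-\mid x_i\in X\}$, so the conclusion stands.
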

\begin{proof}
Proposition~\ref{prop:wcm-np} shows that the problem belongs to {\NP}. According to the transformations between \pname{Weighted Connected Matching} and \pname{3SAT} solutions described in Lemmas \ref{lemma:wcm-star-ida} and \ref{lemma:wcm-star-volta}, the {\sc 3SAT} problem, which is {\NPc}, can be reduced to \pname{Weighted Connected Matching} using a starlike graph whose edge weights are either $-1$ or $+1$. Therefore, \pname{Weighted Connected Matching} is {\NPc} even for starlike graphs whose weights are in $\{ -1, +1\}$.
\end{proof}


\subsubsection{Example}

Consider an input of \pname{3SAT} defined by $B = (x_1 \vee \overline{x_2} \vee \overline{x_4}) \wedge (x_1 \vee \overline{x_3} \vee x_5) \wedge (\overline{x_1} \vee \overline{x_2} \vee x_4) \wedge (x_2 \vee x_3 \vee x_5)$.

In this example, the reduction graph used in \pname{Weighted Connected Matching} input is illustrated in Figure~\ref{fig:wcm-starlike-example-matching}, as well as a connected matching having weight $9$. Dashed and solid edges represent weight $-1$ and $1$, respectively, and the vertices in the dashed rectangle induce a clique in which the omitted edges have weight $-1$.

The illustrated matching corresponds to the assignment $(F,T,F,F,T)$ of the variables $(x_1, x_2, x_3, x_4, x_5)$ in $B$, in this order.

\begin{figure}
    \centering
    \includegraphics[width=1\textwidth]{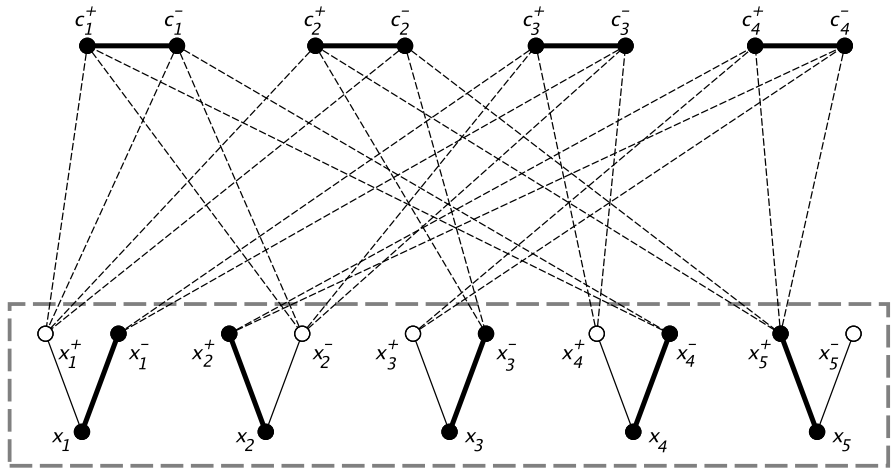}
    \caption{Example of a \pname{3SAT} reduction using a starlike graph}
    \label{fig:wcm-starlike-example-matching}
\end{figure}
\subsection{Planar graphs}

In this section, we prove the {\NPcness} of \pname{Weighted Connected Matching} even for planar graphs with maximum vertex degree three and edge weights in $\{-1,+1\}$. Our proof is an adaptation of the \pname{Weighted Connected Subgraph} made by Marzio De Biasi in~\cite{marzio_de_biasi}.

For this purpose, we use one of  Karp's original $21$ {\NPc} problems, the \pname{Steiner Tree}~\cite{karp_72}. In this problem, we are given a graph $G' = (V',E')$, a subset $R\subseteq V'$ and an integer $k' > 0$ and we want to know if there is a subgraph $T = (V_T, E_T)$ of $G'$ such that $T$ is a tree, $R \subseteq V_T$ and $|E_T| \leq k'$. 

Garey and Johnson showed in~\cite{rectilinear_steiner_tree} that this problem is {\NPc} even for planar graphs. Thereby, we use our reduction using the fact that the input graph of \pname{Steiner Tree} is planar. Also from \cite{rectilinear_steiner_tree}, we use the technique of adding cycles whose lengths are greater than $\Delta(G)$ that was in the {\NPcness} proof of {\sc Vertex Cover } for planar graphs with maximum vertex degree.

Let $(G'=(V',E'), R, k')$ be an input of { \sc Steiner Tree} such that $G'$ is planar and the following values for $q$, $p$ and $r$.

\begin{align*}
&q = \Delta(G') \\
&p = q(|V'| - |R|) + 1 \\
&r = p|E'| + 1
\end{align*}

For our input for the reduction to \pname{Weighted Connected Matching}, we set $k = r|R| - pk'$ and the input
graph $G$ built with the following procedures.

\begin{enumerate}[(I)]
    \item For each vertex $w$ add a copy of a cycle graph with length $2r$ if $w \in R$, and $2q$ otherwise. If this length is less then $3$, instead, add a copy of a path graph with the same number of vertices as the intended cycle length. Set the weights of all these edges to $+1$. Besides, add the label for $|N(w)|$ of these vertices as $v_{wu}$, for each $u \in N(w)$. Denote this subgraph as $C_w$.

    \item For each edge $wu$ in $E$, generate a copy of $P_{2p}$ whose edge weights are $-1$ and make its terminal vertices disjointly adjacent to $v_{wu}$ and $v_{uw}$. Denote this subgraph as $P_{wu}$.
\end{enumerate}

Next, we analyze the planarity of $G$, showing that a planar embedding of $G'$ can be used to build a planar embedding of $G$. Note that the cycles in $G$ generated in (I) can be positioned in the same place as vertices in $G'$. Also, all paths in $G$ generated in (II) can be positioned along the edges as in $G'$.

Now, let's consider the maximum vertex degree of $G$. Note that every vertex in the cycles of (I) has degree $2$, except the ones connected to one vertex of a path from (II), having degree $3$. All vertices from (II) have degree $2$.

Thereby, we can enunciate the following proposition, which will strengthen our {\NPcness} proof in terms of the input graph properties.

\begin{proposition}
Graph $G$ is planar and $\Delta(G) \leq 3$.
\end{proposition}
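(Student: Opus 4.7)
The plan is to verify the two assertions separately. The degree bound follows by direct inspection of the vertices introduced by the construction; planarity follows from a routine vertex-expansion argument starting from a planar embedding of $G'$.

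For the degree bound, I classify each vertex of $G$ by whether it lies in some $C_w$ from step~(I) or in some $P_{wu}$ from step~(II). A vertex of $C_w$ that is not of the form $v_{wu}$ has only its (at most) two neighbors inside $C_w$, hence degree at most $2$. A labeled vertex $v_{wu}$ acquires exactly one extra neighbor --- the terminal of $P_{wu}$ attached to it in step~(II) --- for degree $3$; because the labels $v_{wu}$ are pairwise distinct across $u \in N(w)$ and $|V(C_w)| \geq 2q \geq |N(w)|$, no cycle vertex receives more than one such extra edge. Internal vertices of $P_{wu}$ have exactly two path neighbors, and each of the two terminals of $P_{wu}$ has one path neighbor plus exactly one cycle neighbor. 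Consequently every vertex of $G$ has degree at most $3$.

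For planarity, I start from a fixed planar embedding of $G'$ and build one of $G$ by vertex expansion. Pick pairwise-disjoint small open disks $D_w$ around each $w \in V'$; in the embedding the edges incident to $w$ cross $\partial D_w$ at distinct points whose cyclic order along $\partial D_w$ records the rotation at $w$. Inside $D_w$, draw $C_w$ embedded in the closed disk so that its labeled vertices $v_{wu}$ appear on $\partial D_w$ in the same cyclic order as the corresponding edges of $G'$ leave $w$; this is possible because $|V(C_w)| \geq |N(w)|$, and the edge case where step~(I) produces a path instead of a cycle (i.e.\ when $2q < 3$, so $|N(w)| \leq 1$) causes no trouble. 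For each edge $wu \in E'$, draw $P_{wu}$ along the curve representing $wu$ in the embedding and outside $D_w \cup D_u$, gluing its two terminals to $v_{wu}$ and $v_{uw}$. Two edges of the resulting drawing can only potentially cross inside some disk $D_w$ or along some original edge curve, and by construction neither happens, so $G$ is planar.

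The only non-mechanical step is the cyclic-order matching when placing the labeled vertices on $\partial D_w$; this is the standard vertex-expansion manipulation of the rotation system given by the planar embedding of $G'$, so I do not expect any substantial difficulty beyond bookkeeping.
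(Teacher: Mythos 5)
Your proposal is correct and follows essentially the same route as the paper: the paper also argues planarity by placing each cycle $C_w$ at the location of the vertex $w$ in a planar embedding of $G'$ and routing each path $P_{wu}$ along the curve of the edge $wu$, and it establishes the degree bound by the same case analysis (cycle vertices have degree $2$ except the labeled vertices $v_{wu}$, which have degree $3$; all path vertices have degree $2$). Your write-up merely makes the embedding step more explicit via the rotation-system/vertex-expansion bookkeeping, which is a faithful elaboration rather than a different argument.
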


Next, in Lemmas~\ref{lemma:wcm-planar-deg-3-ida} and~\ref{lemma:wcm-planar-deg-3-volta}, we show the correspondence between solutions of \pname{Steiner Tree} and \pname{Weighted Connected Matching}. Finally, Theorem~\ref{teo:wcm-planar} concludes our {\NPcness} proof.

\begin{lemma}\label{lemma:wcm-planar-deg-3-volta}
Let $(G',R,k')$ be an input of {\sc Steiner Tree }, whose solution is $T = (V_T,E_T)$, and $G$ be the transformation graph obtained from it. We can obtain, in polynomial time, a connected matching $M$ in $G$ such that $w(M) \geq k = r|R| - pk'$.
\end{lemma}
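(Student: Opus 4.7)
The plan is to build $M$ gadget-wise from the Steiner tree $T$. For every $w \in V_T$, include in $M$ a perfect matching of the cycle $C_w$; such a matching exists because $C_w$ has even length ($2r$ if $w \in R$ and $2q$ otherwise, with the degenerate $P_2$ case being a single edge, which is itself a perfect matching). For every $wu \in E_T$, include a perfect matching of the $P_{2p}$-interior of the path gadget $P_{wu}$; this interior has $2p$ vertices on a path, so it admits a matching of size $p$ using only its own $(-1)$-edges. All gadgets coming from $V'\setminus V_T$ and $E'\setminus E_T$ are left untouched.

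That $M$ is a valid matching is immediate: the cycles $C_w$ are pairwise vertex-disjoint, the $P_{2p}$-interiors are pairwise vertex-disjoint, and each interior avoids every cycle because the connection vertices $v_{wu},v_{uw}$ lie in the cycles and not in the interior. The weight is
\[
w(M) \;=\; r\,|R| \;+\; q\,|V_T \setminus R| \;-\; p\,|E_T|,
\]
since each perfect matching of $C_w$ contributes $r$ (respectively $q$) edges of weight $+1$ and each perfectly matched interior contributes $p$ edges of weight $-1$. Dropping the non-negative middle term and using $|E_T|\le k'$ yields $w(M)\ge r|R|-pk'=k$, as required.

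The main step is verifying that $G[V(M)]$ is connected. The set $V(M)$ is the union of $V(C_w)$ for $w\in V_T$ and of the $P_{2p}$-interiors for $wu\in E_T$; each $C_w$ is induced-connected, and each interior is induced-connected (it is the $P_{2p}$ itself). For every $wu\in E_T$, both connection edges joining $v_{wu},v_{uw}$ to the endpoints of the interior are present in $G[V(M)]$ because all four endpoints are saturated, so they glue $C_w$, the interior, and $C_u$ into one piece. Crucially, no stray edges from untouched gadgets leak into $G[V(M)]$, since those gadgets contribute no vertex to $V(M)$. As $T$ is connected, the resulting induced subgraph—topologically a cycle-per-vertex, path-per-edge expansion of $T$—is connected. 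The entire construction runs in time linear in $|V(G)|+|E(G)|$, so it is polynomial. The only subtle point is recognising that the connection edges of $P_{wu}$, although not in $M$ themselves, are automatically present in $G[V(M)]$ whenever both endpoints are saturated; this is what holds the matching together across gadgets.
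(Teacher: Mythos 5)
Your proof is correct and follows essentially the same route as the paper's: take perfect matchings of the cycle gadgets $C_w$ for $w\in V_T$ and of the $P_{2p}$ gadgets for $wu\in E_T$, bound the weight by $r|R|-p|E_T|\ge r|R|-pk'$, and obtain connectivity of $G[V(M)]$ from the saturated connection vertices $v_{wu},v_{uw}$ together with the connectedness of $T$. In fact you are slightly more careful than the paper in two harmless spots (explicitly carrying and then discarding the nonnegative term $q|V_T\setminus R|$, and noting the degenerate path-instead-of-cycle case), so nothing further is needed.
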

\begin{proof}
Let's build a connected matching having a size at least $k$.

For each vertex $w \in V_T$, we saturate $\frac{|V(C_w)|}{2} = r$ edges from $C_w$. Note that the sum of the weights of these edges is $r|R|$, since all vertices of $R$ are contained in $V_T$ and, for each cycle $C_u$, $u \in R$, we can saturate $r$ edges having weight $+1$.

Moreover, for each edge $uw \in E_T$, we saturate $\frac{|V(P_{uw})|}{2} = p$ edges from $P_{uw}$. The sum of these edge weights is at most $-pk'$, since, for each $P_{uw}$, we saturate $p$ edges having weight $-1$, and $k' \geq |E_T|$.

Next, we show that $M$ is connected. Note that $V(M) = \{V(C_w) \mid w \in V_T\} \cup \{ V(P_{uw}) \mid uw \in E_T \}$. So, for each $uw \in E_T$, the vertices $V(C_u) \cup V(C_w) \cup \{v_{uw},v_{wu}\}$ are saturated.

Therefore, $M$ is connected, $w(M) \geq r|R| - pk'$, and it can be obtained in polynomial time.
\end{proof}

\begin{lemma}\label{lemma:wcm-planar-deg-3-ida}
Let $(G' = (V',E'),R,k')$ be an input of \pname{Steiner Tree}, $G$ be the transformation graph obtained from it, and $M$ a connected matching in $G$ such that $w(M) \geq k = r|R| - pk'$. We can obtain a tree $T = (V_T, E_T)$ that is a solution of \pname{Steiner Tree} instance.
\end{lemma}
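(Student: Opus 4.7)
The plan is to extract from $M$ a candidate Steiner subgraph $(V_T, E_T) \subseteq G'$ and to use the weight hypothesis to force $|E_T| \leq k'$. For each $w \in V'$ set $m_w = |M \cap E(C_w)|$ and for each $uw \in E'$ set $n_{uw} = |M \cap E(P_{uw})|$. Because every edge of a cycle $C_w$ has weight $+1$ and every edge inside a gadget $P_{uw}$ has weight $-1$, one has the identity $w(M) = \sum_w m_w - \sum_{uw} n_{uw}$. Since $m_w \leq r$ for $w \in R$ and $m_w \leq q$ for $w \notin R$, and since $q(|V'|-|R|) = p-1$, the hypothesis $w(M) \geq r|R|-pk'$ yields the crucial slack inequality $\sum_{uw} n_{uw} \leq p(k'+1) - 1$.

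Before naming $V_T$ and $E_T$, I would first \emph{normalize} $M$ so that every $P_{uw}$ is either entirely unused ($n_{uw} = 0$) or is used as a genuine bridge with $n_{uw} = p$ realizing a perfect matching on the $2p$ internal vertices of $P_{uw}$. The key observation is that whenever $M$ saturates only a proper subset of the internal vertices, the saturated part forms a subpath dangling off $v_{uw}$ or $v_{wu}$ that does not link $C_u$ to $C_w$ inside $G[V(M)]$; removing those matching edges from $M$ deletes only negatively weighted edges and leaves $G[V(M)]$ connected, so $w(M)$ strictly increases. An analogous local swap handles the exceptional case in which a connector edge incident to some $v_{uw}$ lies in $M$. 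After this cleanup we set
\[ V_T = \{\, w \in V' : m_w \geq 1 \,\}, \qquad E_T = \{\, uw \in E' : n_{uw} = p \,\}, \]
and immediately read off $p|E_T| = \sum_{uw} n_{uw} \leq p(k'+1) - 1$, whence $|E_T| \leq k'$.

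It remains to certify that $R \subseteq V_T$ and that $(V_T, E_T)$ is connected in $G'$. For the first point, if some $w \in R$ had $m_w = 0$, then the upper bound on $\sum_{w'} m_{w'}$ would drop by $r = p|E'|+1$, which would force $w(M)$ strictly below $r|R| - pk'$ as soon as $|E'| > k'$; this last assumption is harmless, since a \pname{Steiner Tree} instance with $|E'| \leq k'$ is trivial. For the second, pick any two cycles $C_u, C_w$ with $u, w \in V_T$ and a path in $G[V(M)]$ from $C_u$ to $C_w$: every time this path leaves a cycle $C_{u'}$ it must enter a gadget $P_{u'w'}$ through $v_{u'w'}$, and after normalization the only way to reach the opposite endpoint $v_{w'u'}$ is to cross a fully saturated $P_{u'w'}$, which contributes the edge $u'w'$ to $E_T$. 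The projection of that path to $G'$ is therefore a walk in $(V_T, E_T)$, so this subgraph is connected and any of its spanning trees is a Steiner tree with at most $k'$ edges, as required. The most delicate step is the normalization, where one must verify that pruning partial path-matchings and rerouting the occasional matched connector edge never destroys the global connectedness of $G[V(M)]$.
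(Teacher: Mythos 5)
Your overall strategy is the same as the paper's: bound $\sum_w m_w$ by $r|R| + q(|V'|-|R|) = r|R| + p - 1$, use the hypothesis to get a slack of less than $p(k'+1)$ on the negative edges, conclude that every $C_w$ with $w \in R$ must be touched (modulo the benign assumption $k' < |E'|$, which you handle more explicitly than the paper does), and read off $V_T$ from the touched cycles and $E_T$ from the fully used gadgets; your explicit slack inequality, the projection-of-a-path argument for connectedness of $(V_T,E_T)$, and the trivial-instance caveat are all points where you are more careful than the paper's own (rather terse) proof.

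The genuine gap is the normalization step, and it is exactly the step you yourself flag and then defer. Your later definitions need it in full strength: you require a bridging gadget to satisfy $n_{uw} = p$ (interior edges only, no connector edges), and you require every saturated cycle vertex to be matched by a cycle edge so that $m_{u'} \geq 1$ along the projected walk. But a gadget $P_{uw}$ can bridge $C_u$ to $C_w$ with both connector edges $v_{uw}t_1$ and $v_{wu}t_{2p}$ in $M$ and only $p-1$ interior edges; in that configuration the ``analogous local swap'' is not available in general: re-covering the interior with $p$ interior edges unsaturates $v_{uw}$ and $v_{wu}$, and if their cycle neighbours are already matched inside the cycles, the gadget chain loses its only attachment to the rest of $G[V(M)]$ and connectivity breaks (one can repair this with alternating-path rotations along the cycle, but that is a real argument, not a one-line analogue of the pruning case). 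Note, however, that the gap is avoidable rather than fatal: drop the normalization, define $E_T = \{uw \mid V(P_{uw}) \subseteq V(M)\}$ and $V_T = \{w \mid V(C_w) \cap V(M) \neq \emptyset\}$ (as the paper does), and charge to each $uw \in E_T$ all negative edges of its gadget, including connectors; a fully saturated gadget costs either $p$ or $p+1$, so your same slack inequality $\sum(\text{negative weight}) \leq p(k'+1)-1$ yields $|E_T| \leq k'$ directly, and the path-projection argument for connectedness goes through verbatim with these definitions.
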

\begin{proof}
We show that, in order to $w(M) \leq k$, we have to saturate vertices of every $V(C_w)$, $w \in R$. Suppose that this is not true, because there is a vertex $u \in R$ such that $|V(M) \cap V(C_u)| = 0$. Then, the weight of $M$ is at most the maximum number of $+1$ saturated edges of $C = \{ C_w \mid u \neq w \in V' \}$. For each cycle $C_w \in C$ we can have $\frac{|C_w|}{2}$ saturated edges. So, $w(M) \leq \sum\limits_{w \in V'}\frac{|C_w|}{2}$, and we have the following equation.

\[
\sum\limits_{w \in V'\setminus\{u\}}\frac{|C_w|}{2} = \sum\limits_{w \in R\setminus\{u\}}\frac{|C_w|}{2} + \sum\limits_{w \in V'\setminus{R}}\frac{|C_w|}{2} = r(|R|-1) + q(|V'| - |R|)
\]

If $w(M)$ is at least $k$, then $r(|R|-1) + q(|V'| - |R|) \geq k = r|R| - pk'$.
\begin{align*}
r(|R|-1) + (p-1) &\geq r|R| \geq r|R| - pk' \\
p-1 &\geq r \\
p &\geq p|E'|+2
\end{align*}
This is a contradiction, which means that for every $w \in R$, at least one vertex of $V(C_w)$ is saturated, each having $r$ as the sum of edge weights. For $M$ to be connected, some $V(P_{uw})$ are all saturated, $uw \in E'$. The number of those paths that have all their vertices saturated is at most $k'$, since $w(M) \geq r|R| - pk'$. 

Note that saturating $V(C_w)$, $w \in V'\setminus{R}$, is irrelevant, because $p=q(|V|-|R|)+1$, and even if we saturate all those cycles, the maximum weight obtained is $q(|V|-|R|)$.

Thereby, we can build $T$ such that $V_T = \{ w \mid 0 < |V(C_w)\cap{V(M)|} \}$ and $E_T = \{ uw \mid V(P_{uw})\cap{V(M)} = V(P_{uw}) \}$. Note that $R \subseteq V_T$ and $|E_T| \leq k'$.
\end{proof}

\begin{theorem}\label{teo:wcm-planar}
\pname{Weighted Connected Matching} is {\NPc} even for planar graphs with maximum vertex degree $3$ and edge weights in $\{ -1, +1 \}$.
\end{theorem}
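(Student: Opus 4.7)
The plan is to wrap up the reduction that was set up immediately before the theorem, so the proof itself should be short: all the real work is in the lemmas and the proposition. First, I would invoke Proposition~\ref{prop:wcm-np} to note that \pname{Weighted Connected Matching} is in \NP. Then I would observe that the transformation $(G', R, k') \mapsto (G, k)$ with $k = r|R| - pk'$ can be computed in polynomial time: the parameters $q$, $p$, $r$ are computed directly from $(G', R, k')$, and $G$ is obtained by replacing each vertex of $G'$ with a cycle or path of length at most $2r = O(|E'||V'|)$ and each edge of $G'$ with a path of length $2p = O(|E'||V'|)$, which is polynomial in the input size.

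Next I would combine the two directions of the reduction: Lemma~\ref{lemma:wcm-planar-deg-3-volta} turns a Steiner tree $T$ with $|E_T| \le k'$ into a connected matching $M$ of $G$ with $w(M) \ge r|R| - pk' = k$, and Lemma~\ref{lemma:wcm-planar-deg-3-ida} turns such an $M$ back into a feasible Steiner tree. Hence $(G', R, k')$ is a yes-instance of \pname{Steiner Tree} if and only if $(G, k)$ is a yes-instance of \pname{Weighted Connected Matching}. Since \pname{Steiner Tree} is {\NPc} even when restricted to planar graphs~\cite{rectilinear_steiner_tree}, the reduction establishes \NP-hardness.

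Finally I would check that the output instance respects all the restrictions claimed in the theorem. Planarity and the degree bound $\Delta(G) \le 3$ are given by the proposition preceding the lemmas: a planar embedding of $G'$ yields a planar embedding of $G$ by drawing each $C_w$ in a small disk around the former position of $w$ and routing each $P_{wu}$ along the former edge $wu$, and every vertex of $G$ has degree $2$ except for the endpoints $v_{wu}$ at which a subdivided edge-path is attached to a cycle, which have degree $3$. The weights assigned in steps (I) and (II) are $+1$ on the cycle/path edges of $C_w$ and $-1$ on the edges of $P_{wu}$ (including the attachment edges), so every weight lies in $\{-1, +1\}$. Combining these three observations with membership in \NP completes the proof; I do not anticipate any real obstacle, since the single non-trivial step---the quantitative balancing of $q$, $p$, $r$ that forces every Steiner terminal cycle to be used---has already been carried out inside Lemma~\ref{lemma:wcm-planar-deg-3-ida}.
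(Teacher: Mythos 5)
Your proposal is correct and follows essentially the same route as the paper's proof: membership in \NP\ via Proposition~\ref{prop:wcm-np}, correctness of the reduction from planar \pname{Steiner Tree} via Lemmas~\ref{lemma:wcm-planar-deg-3-ida} and~\ref{lemma:wcm-planar-deg-3-volta}, and the structural guarantees (planarity, $\Delta(G)\leq 3$, weights in $\{-1,+1\}$) from the construction and its accompanying proposition. Your added remarks on the polynomial size of the transformation only make explicit what the paper leaves implicit, so there is no substantive difference.
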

\begin{proof}
Proposition~\ref{prop:wcm-np} shows that the problem is in {\NP}. According to the transformations between \pname{Weighted Connected Matching} and \pname{Steiner Tree} solutions described in Lemmas~\ref{lemma:wcm-planar-deg-3-ida} and~\ref{lemma:wcm-planar-deg-3-volta}, the \pname{Steiner Tree} problem restricted to planar graphs, which is {\NPc}, can be reduced to \pname{Weighted Connected Matching} using a planar graph whose edge weights are either $-1$ or $+1$ and vertex degree is at most $3$. Therefore, \pname{Weighted Connected Matching} is {\NPc} even for planar graphs whose weights are in $\{ -1, +1\}$ and vertex degree is at most $3$.
\end{proof}

\subsection{Example 1}

We consider the input for {\sc Steiner Tree} as $k' = 1$, $R = \{ a,b \}$ and the graph $G' = (V', E')$ isomorphic to $C_3$, with $V' = \{ a,b,c \}$.

In this case, we have the following variable values.

\begin{align*}
&q = \Delta(G') = 2 \\
&p = q(|V| - |R|) + 1 = 3 \\
&r = p|E| + 1 = 10
\end{align*}

So, for the {\sc Weighted Connected Matching} input, we set $k = 17$, and the graph as the one illustrated in Figure~\ref{fig:wcm-planar-example-alt}.

\begin{figure}
    \centering
    \includegraphics[width=0.8\textwidth]{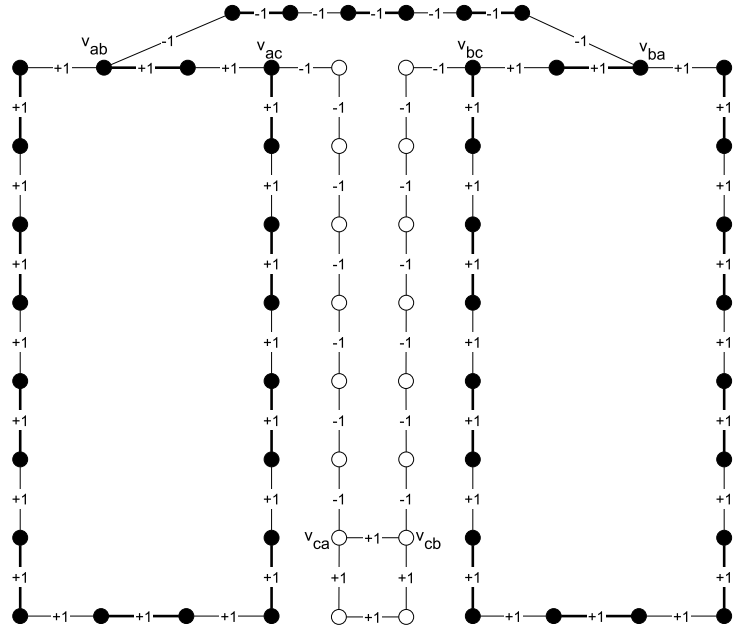}
    \caption{Example of a reduction graph $G$}
    \label{fig:wcm-planar-example-alt}
\end{figure}

To solve {\sc Weighted Connected Matching} for this instance, we need to find $M \subseteq E'$ such that $G[M]$ is connected and the sum of its edge weights is at least $17$.

To reach this value, $M$ must contain edges of each of the $C_{20}$ subgraphs, which were generated because of $a$ and $b$ of $G'$. Observe that it is possible to add $20$ edges of these subgraphs. For $G[M]$ to be connected, there must be a saturated path between those $C_{20}$ subgraphs. Note that there are only two possible paths of these, one with length $15$, between $v_{ac}$ and $v_{bc}$, and the other with length $7$, between $v_{ab}$ and $v_{ba}$. Note that optimally saturating disjoint edges from the first path would result in a matching having weight at most $16$ while, from the second, $17$.

Then, we can conclude that the only possible matching $M$ having weight $17$ is the one shown in Figure~\ref{fig:wcm-planar-example-alt} and it corresponds to the {\sc Steiner Tree} solution $G[\{a,b\}]$.

\subsection{Example 2}

We consider the input for {\sc Steiner Tree} as $k' = 3$, $R = \{ a,c,d \}$ and the graph $G'$ as in Figure~\ref{fig:wcs-red-2-1}. In this case, we have the following variable values.

\begin{align*}
&q = \Delta(G') = 4 \\
&p = q(|V| - |R|) + 1 = 13 \\
&r = p|E| + 1 = 105
\end{align*}
\begin{figure}
    \centering
    \includegraphics[width=0.5\textwidth]{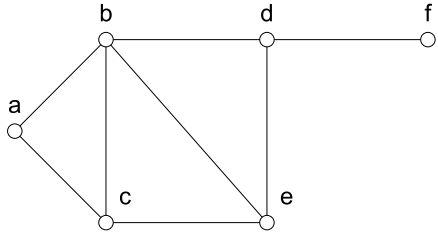}
    \caption{Example of a representation of a reduction graph $G'$ for {\sc Steiner Tree}}
    \label{fig:wcs-red-2-1}
\end{figure}

The transformation subgraph is shown in Figure~\ref{fig:wcs-red-2-2}. For  better visualization, we represent the path subgraphs $P_{2p}$ by a square. Each of its edges is incident to terminal vertices of $P_{2p}$. Also, a triangle represents a cycle subgraph, in which $\triangledown$ is a $C_{2q}$ and $\triangle$, a $C_{2r}$. Triangle edges represent edges that are incident to distinct vertices in its relative subgraph. Inside each of these symbols, we show the weight of a maximum matching of the corresponding subgraph.

\begin{figure}
    \centering
    \includegraphics[width=1\textwidth]{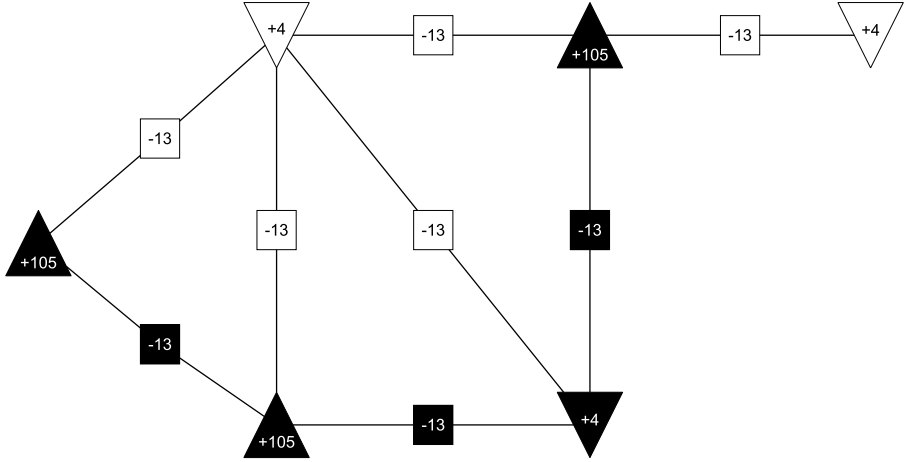}
    \caption{Representation of a reduction graph $G$ for {\sc Steiner Tree}}
    \label{fig:wcs-red-2-2}
\end{figure}

Note that the solution contains edges from $C_{210}$ cycles, as well as from at most $3$ path subgraphs $P_{26}$. So, there are three {\sc Steiner Tree} solutions, $G[\{ a,c,d,e \}]$, $(G[\{a,b,c,d\}] - ab)$ and $(G[\{a,b,c,d\}] - bc)$.

\subsection{Trees}\label{sec:wcm-trees}


In this section, we show that \pname{Maximum Weight Connected Matching} can be solved in linear time for trees. Despite this class having very strict properties, it is a good example as it is widely studied and has several applications.


We describe a linear algorithm that solves \pname{WCM} for trees. Let $T$ be a tree and $r$ any vertex of $V(T)$, which we call the root of $T$. Denote $S(r,v)$ as the set of children of $v$ in $T^r$. Also, consider $B_{r,v}$ as the weight of a maximum weight connected matching in $T^r_v$ such that, if $v$ is not a leaf, then $v$ is saturated with a vertex, denoted as $b_{r,v}$. Moreover, $\overline{B_{r,v}}$ is the weight of a matching $M$ defined as the union of the maximum connected matching in $T^r_u$, for each $u \in S(r,v)$, such that $M$ is connected if $v$ is saturated with his parent in $T^r$. In our algorithm, it is not required to calculate $\overline{B_{r,r}}$, which we do not define.

Next, we describe a dynamic programming algorithm that can be used to obtain those variables. We treat separately the base case, where the vertex analyzed is a leaf. For this case, we set all the variables to $0$.

\[
    B_{r,v} = \overline{B_{r,v}} = 0
\]

For the general case, where a vertex $v$ is not a leaf, we define the variables $B_{r,v}$, $\overline{B_{r,v}}$ and $b_{r,v}$ using a function $f$ in the following.


\[
f(r,vu) = \overline{B_{r,u}} + w(vu) + \sum_{\substack{s \in S(r,v) \setminus \{u \}}} \max\{B_{r,s},0\}
\]

\[
    \overline{B_{r,v}} = \sum_{u \in S(r,v)} \max\left\{B_{r,u},0\right\}
\]

\[
    B_{r,v} = \max\limits_{u \in S(r,v)} f(r,vu)
\]

\[
    b_{r,v} = \argmax\limits_{u \in S(r,v)} f(r,vu)
\]
 
Moreover, we denote $h$ as the vertex that maximizes $B_{r,h}$.
 
\[
    h = \argmax_{v \in V(T)} B_{r,v}
\]

Finally, given a vertex $v$ we recursively build a maximum weight connected matching in $T^r_v$.

In the following, we conclude this section, by showing the problem for trees is solvable in linear time.

\begin{theorem}\label{teo:wcm-tree-linear}
\pname{Maximum Weight Connected Matching} for trees can be solved in linear time.
\end{theorem}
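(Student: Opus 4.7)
The plan is to prove correctness of the dynamic programme defined above by bottom-up induction on the rooted tree $T^r$, read off the global optimum via the identity that every connected matching of a tree sits entirely in a unique rooted subtree, and finally bound the running time by a small algebraic rewriting of $f$ that turns each evaluation of it into constant time.

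First I would root $T$ at any vertex $r$. The central structural observation is that if $M$ is a non-empty connected matching of $T$, then $V(M)$ induces a subtree of $T$ and so contains a unique vertex $h$ closest to $r$; since the parent of $h$ lies strictly above this subtree, $h$ is matched to one of its children in $T^r$, and $M\subseteq E(T^r_h)$. Hence, under the interpretation that $B_{r,v}$ is the maximum weight of a connected matching of $T^r_v$ which is either empty or contains $v$, the value of an optimum connected matching in $T$ is exactly $B_{r,h}=\max_{v\in V(T)}B_{r,v}$ (recall $B_{r,v}=0$ on leaves absorbs the empty matching), and the matching itself can be reconstructed from the pointers $b_{r,\cdot}$.

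Next I would verify the two recurrences by induction from leaves to root. The leaf case is immediate. For a non-leaf $v$ and a candidate child $u$, decompose any connected matching of $T^r_v$ that uses the edge $vu$ into: (i) the edge $vu$ itself, of weight $w(vu)$; (ii) inside $T^r_u$, a union of connected matchings, one per grandchild subtree, that must remain connected to $u$ --- this is exactly what $\overline{B_{r,u}}$ counts, since a non-empty piece in $T^r_s$ for $s$ a child of $u$ attaches to $u$ only via the edge $us$ and therefore only when $s$ itself is saturated (the role of $B_{r,s}$), and any non-positive contribution is optimally discarded; and (iii) for every other child $s$ of $v$, an optional sub-matching of value $\max\{B_{r,s},0\}$, attached to $v$ through the edge $vs$ whenever $s$ is saturated. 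Summing yields $f(r,vu)$, and optimizing over $u$ gives $B_{r,v}$; the same decomposition with $v$'s partner taken to be its \emph{parent} rather than a child gives $\overline{B_{r,v}}$. The main subtlety to double-check, and the step I expect to be the primary obstacle to a clean write-up, is verifying that discarding a negative $B_{r,s}$ never breaks connectivity; this holds precisely because dropping the sub-matching also drops $s$ from $V(M)$, so the discarded subtree leaves no orphan component.

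For the running time, I would process vertices in post-order. Given the children's values, $\overline{B_{r,v}}$ is computed in $\mathcal{O}(|S(r,v)|)$ time. Using the rewriting
\[
f(r,vu) \;=\; \overline{B_{r,v}} + \overline{B_{r,u}} + w(vu) - \max\{B_{r,u},0\},
\]
each $f(r,vu)$ is then obtained in $\mathcal{O}(1)$ time once $\overline{B_{r,v}}$ is known, so $B_{r,v}$ and $b_{r,v}$ cost only $\mathcal{O}(|S(r,v)|)$ additional time. Summing over all vertices yields $\sum_{v}\mathcal{O}(|S(r,v)|)=\mathcal{O}(|V(T)|)$; extracting $h$ and reconstructing the witnessing matching by following the $b_{r,\cdot}$ pointers and recursing into child subtrees whose $B_{r,\cdot}$ value is non-negative is also linear, giving an $\mathcal{O}(|V(T)|)$ algorithm overall.
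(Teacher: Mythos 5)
Your proposal is correct and follows essentially the same route as the paper: the same dynamic programme over the rooted tree with $B_{r,v}$, $\overline{B_{r,v}}$ and $b_{r,v}$, post-order evaluation, the identical trick of obtaining $\sum_{s \in S(r,v)\setminus\{u\}}\max\{B_{r,s},0\}$ by subtracting $\max\{B_{r,u},0\}$ from the precomputed sum $\overline{B_{r,v}}$, and linear-time reconstruction via the $b_{r,\cdot}$ pointers. The only difference is that you spell out the correctness of the recurrences and of taking $\max_v B_{r,v}$ (topmost saturated vertex argument), which the paper largely leaves implicit.
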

\begin{proof}
For this proof, we describe a procedure how to obtain a maximum weight connected matching in a tree $T$. First, set $r$ as an arbitrary node of $T$. Next, we calculate $B_{r,v}$ for every $v \in V(T)$ using the dynamic programming technique. We calculate $B_{r,u}$ for the vertices $u$ of a postorder sequence tree search in $T^r$ starting on $r$.

The summations $\sum\limits_{u \in S(r,v)} \max\{B_{r,u},0\}$ for every vertex $v$ can be calculated in linear time. When we need this summation for every child of $v$ except $u$, in form of $\sum\limits_{s \in S(r,v)\setminus\{u\}} \max\{B_{r,s},0\}$, we just subtract $\max\{B_{r,u},0\}$ from the previously calculated summation. Thereby, the whole procedure can be done in linear time.

The matching can also be obtained in linear by the reconstruction of the dynamic programming we described.

\end{proof}

\subsection{Example}
As an example, consider as input the tree illustrated in Figure~\ref{fig:wcm-tree-example}. The vertex $a$ is chosen to be the root. In Table~\ref{tab:wcm-tree-example}, we show the related variables obtained. Also, the rows are ascending in the same order as those variables can be calculated using our dynamic programming.

In this example, the vertex $h$ that maximizes $B_{a,h}$ is $a$. So, we build the matching $M(a,a)$, which is defined in $T^a_a$. First, we saturate $a$ with $b_{a,a} = b$. Then we add the following partial matchings.
\begin{align*}
M(a,e) &= \{ eb_{a,e} \} = \{ ej \} \\
M(a,c) &= \{ cb_{a,c} \} = \{ cf \} \\
M(a,d) &= \{ \}
\end{align*}

Note that, though the subtree $T^a_d$ is not empty, there is no possible matching that can be added to increase the weight of a connected matching containing $a$. Then, the subtree $T^a_d$ is discarded, and we set $M(a,d) = \{ \}$.

Finally, we obtain the maximum weight connected matching $\{ab,ej,cf\}$, illustrated in Figure~\ref{fig:wcm-tree-example}, whose weight is $B_{a,a} = 12$.

 \begin{minipage}{\textwidth}

  \begin{minipage}[b]{0.39\textwidth}
    \centering
    \includegraphics[width=1\textwidth]{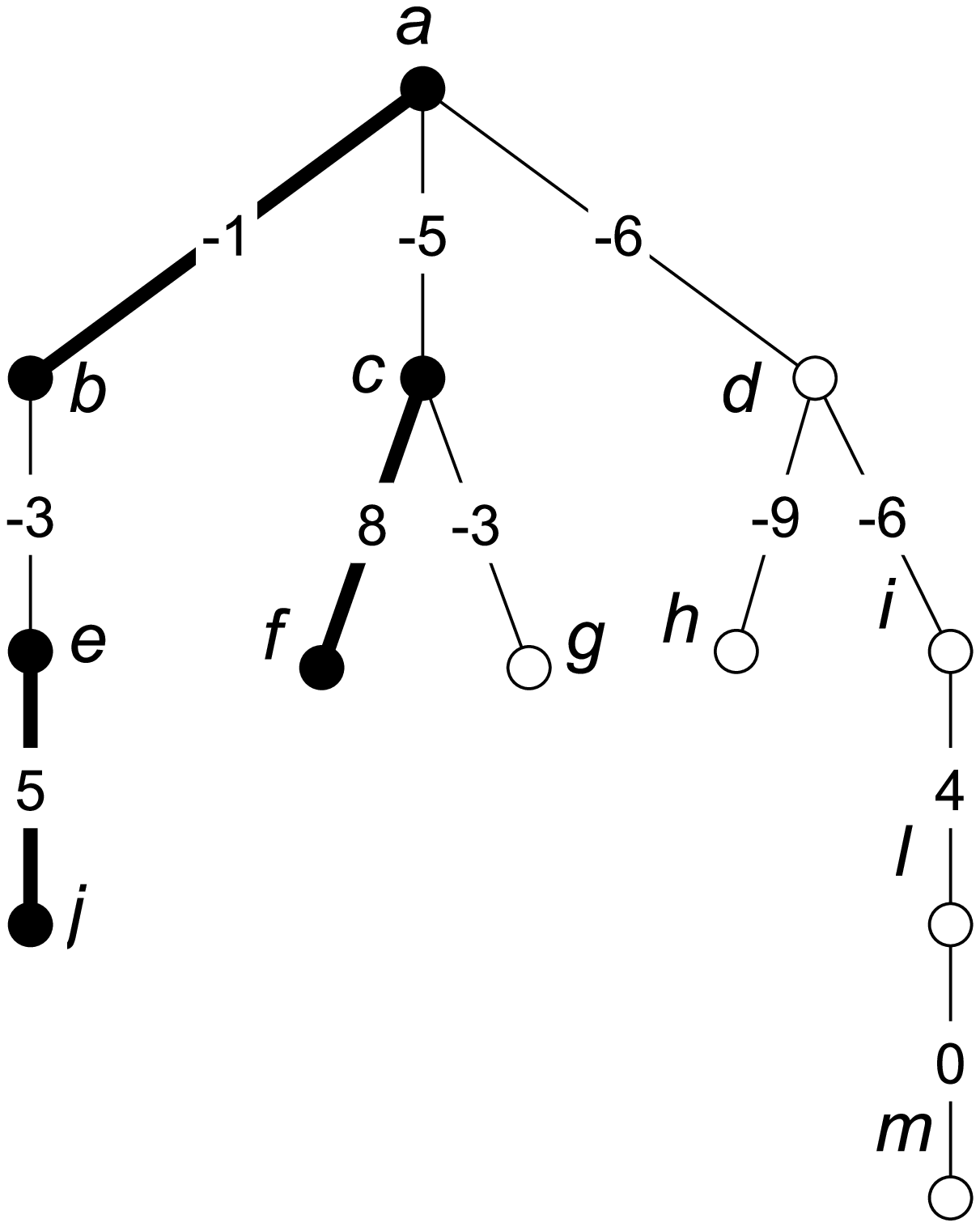}
    \captionof{figure}{A maximum weight connected matching in a tree}
    \label{fig:wcm-tree-example}
  \end{minipage}
  \hfill
  \begin{minipage}[b]{0.59\textwidth}
    \centering
    \begin{tabular}{c|c|c|c}
        $v$ & $B_{r,v}$ & $\overline{B_{r,v}}$ & $b_{r,v}$ \\ \hline
        j & 0 & 0 & - \\ \hline
        e & 5 & 0 & j \\ \hline
        b & -3 & 5& e \\ \hline
        f & 0 & 0 & - \\ \hline
        g & 0 & 0 & - \\ \hline
        c & 8 & 0 & f \\ \hline
        h & 0 & 0 & - \\ \hline
        m & 0 & 0 & - \\ \hline
        l & 0 & 0 & m \\ \hline
        i & 4 & 0 & l \\ \hline
        d & -5 & 4& h  \\ \hline
        a & 12 & 8& b  \\ \hline
    \end{tabular}
    \bigskip
    \label{tab:wcm-tree-example}
    \captionof{table}{Values of $B_{r,v}$, $\overline{B_{r,v}}$, and $b_{r,v}$ \\ for each vertex $v$}
    \end{minipage}
    
  \end{minipage}
\subsection{Graphs having degree at most $2$}

In this section, we prove that \pname{Weighted Connected Matching} can be solvable in linear time for graphs such that the maximum degree is at most $2$. Observe that a connected graph of this kind is either a cycle or a path. The algorithm for trees described in Section~\ref{sec:wcm-trees} works for path graphs also.

In the next theorem, we describe a linear algorithm for cycles and then conclude the complexity for graphs having a degree at most $2$ in Corollary~\ref{coro:wcm-degree-3}.

\begin{theorem}\label{wcm-degree-3}
\pname{Maximum Weight Connected Matching} can be solved in linear time for cycle graphs.
\end{theorem}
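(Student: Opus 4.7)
My plan is to exploit the very rigid structure of connected matchings in a cycle. If $M$ is a connected matching of $C_n$, then $V(M)$ induces a connected subgraph of $C_n$; since the only connected induced subgraphs of a cycle are sub-paths (arcs of consecutive vertices) and the whole cycle, and $|V(M)|$ must be even, it follows that $V(M)$ is either an arc of $2k$ consecutive vertices with $1 \le k \le \lfloor n/2 \rfloor$ or, when $n$ is even, the whole set $V(C_n)$. In the first case $M$ is uniquely the perfect matching of this induced sub-path, and in the second case $M$ is one of the two perfect matchings of $C_n$. So the algorithm only needs to range over such arcs.

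To handle this efficiently I would fix an arbitrary vertex, say $v_1$, and split the search according to whether $v_1 \in V(M)$. If $v_1 \notin V(M)$, then $M$ lies entirely inside $C_n - v_1$, which is a path, and since $v_1 \notin V(M)$ ensures $G[V(M)]$ is the same in $C_n$ and in $C_n - v_1$, $M$ is a connected matching in this tree. By Theorem~\ref{teo:wcm-tree-linear} the best such matching is computed in linear time. If $v_1 \in V(M)$, then $v_1$ is matched to one of its two cycle-neighbours and by symmetry it suffices to treat the case $e_1 = v_1v_2 \in M$.

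Fixing $e_1 \in M$, the arc $V(M)$ extends from $\{v_1,v_2\}$ to the right through $v_3,v_4,\ldots,v_{2r+2}$ for some $r \ge 0$ and wraps left through $v_n,v_{n-1},\ldots,v_{n-2\ell+1}$ for some $\ell \ge 0$, and the matching is forced to be
\[
\{e_1\} \cup \{e_3, e_5, \ldots, e_{2r+1}\} \cup \{e_{n-1}, e_{n-3}, \ldots, e_{n-2\ell+1}\},
\]
subject to the non-overlap condition $2 + 2r + 2\ell \le n$. Maintaining prefix sums $R_r = \sum_{i=1}^{r} w(e_{2i+1})$ and $L_\ell = \sum_{j=1}^{\ell} w(e_{n-2j+1})$, and, in a single left-to-right sweep over $r$, the best compatible $\ell$ subject to the length budget $\ell \le (n-2-2r)/2$ (tracked by a running maximum of $L_\ell$ that is updated as the budget shrinks), I can find the optimal $(r,\ell)$ in $O(n)$ time. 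When $n$ is even I would additionally compare against the two perfect matchings of $C_n$ explicitly; the global maximum over all candidates is then returned.

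The step I expect to be the main obstacle is exactly the coupling introduced by the constraint $2 + 2r + 2\ell \le n$: it prevents the two one-sided problems from being solved independently by two separate maxima of prefix sums. The linear sweep with a budget-indexed running maximum on the opposite side is the key idea that keeps the running time linear despite this coupling.
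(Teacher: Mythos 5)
Your proposal is correct and follows essentially the same route as the paper: fix a vertex (the paper fixes the two edges $e_{n-1},e_n$ sharing a vertex), handle the case where it is unsaturated by deleting it and invoking the linear-time tree algorithm on the resulting path, and otherwise exploit that a connected matching containing a fixed edge is a perfect matching of an arc, optimized via precomputed prefix sums together with prefix maxima to handle the coupled length budget in linear time. The only difference is presentational: you make the arc/perfect-matching structure of connected matchings in a cycle explicit, which the paper leaves implicit.
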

\begin{proof}

Let $e_1, \ldots, e_n$ be the edges of the cycle $C$. First, we will compute the maximum weight connected matching containing $w_n$. Let $s_i = \sum_{j = 1}^i w(e_{n-2j})$ for $i < \lfloor (n-1)/2 \rfloor$ and let $s_i' = \max_{1 \leq j \leq i} s_j$. 
Similarly, let $p_i = \sum_{j = 1}^i w(e_{2j})$ for $i < \lfloor (n-1)/2 \rfloor$. Finally, let $s_0 = p_0 = 0$. 
Note that a matching containing edge $e_n$ can be written as $\{e_2, \ldots, e_{2i}\} \cup \{e_n, \ldots, e_{n-2j}\}$, for some $i+j \leq n/2-1$, and its weight is given by $p_i + w(e_n) + s_j$. Also, note that, for  fixed $i$, the maximum matching among every possible $j$ is given by $S_i = p_i + w(e_n) + s_j'$, for $j = \lfloor n/2\rfloor - i - 1$. Since all $s_i, s_i'$ and $p_i$ can be precomputed  in linear time, we can compute each $S_i$ in constant time, which allows us to compute the maximum weight connected matching containing $e_n$ in linear time. 

Since $e_n$ was chosen arbitrarily, we can do the same procedure for a different edge, $e_{n-1}$. If neither $e_{n-1}$ nor $e_n$ belongs to a maximum weight connected matching, we know that $\{v\} = e_{n-1} \cap e_n$ is not saturated by it. Hence, we can delete $v$ from $C$ and compute the maximum weight connected matching of the resulting graph using the algorithm for trees from Section~\ref{sec:wcm-trees}.

Since any maximum weight connected matching contains either $e_{n-1}$, $e_n$ or none of them, if we take the maximum over these three cases, we have the maximum weight connected matching.

\end{proof}

\begin{corollary}\label{coro:wcm-degree-3}
\pname{Maximum Weight Connected Matching} can be solved in linear time for graphs having a degree at most $2$.
\end{corollary}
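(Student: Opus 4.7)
The plan is to reduce the problem on an arbitrary graph of maximum degree at most $2$ to the two cases already handled in the excerpt, namely trees (Theorem~\ref{teo:wcm-tree-linear}) and cycles (Theorem~\ref{wcm-degree-3}). The key structural observation is that every graph $G$ with $\Delta(G)\leq 2$ is a disjoint union of connected components, each of which is either an isolated vertex, a path, or a cycle; this is immediate from the fact that a connected graph of maximum degree at most $2$ is a path or a cycle.

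Next, I would use the fact that any connected matching $M$ of $G$ is entirely contained in a single connected component of $G$, since $G[M]$ itself must be connected. Consequently, a maximum weight connected matching of $G$ is obtained by taking, over all connected components $C_1, \ldots, C_t$ of $G$, the component matching of largest total weight:
\[
\opt(G) \;=\; \max_{1\leq i\leq t} \opt(C_i),
\]
with the convention that isolated vertices contribute weight $0$ (via the empty matching).

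The algorithm is then to first find the connected components of $G$ in linear time, classify each as a path (including isolated vertices and edges) or a cycle, run the linear-time tree algorithm of Theorem~\ref{teo:wcm-tree-linear} on each path component and the linear-time cycle algorithm of Theorem~\ref{wcm-degree-3} on each cycle component, and finally return the component matching of maximum weight. Since each subroutine runs in time linear in the size of its component and the components partition $V(G)\cup E(G)$, the total running time is $\bigO{|V(G)|+|E(G)|}=\bigO{|V(G)|}$ (as $|E(G)|\leq |V(G)|$ when $\Delta(G)\leq 2$).

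There is essentially no obstacle here: the only thing to be careful about is the trivial observation that a connected matching cannot span two distinct components, which is what lets us decompose the problem. Once that is in place, the corollary follows directly from the two preceding results.
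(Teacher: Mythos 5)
Your proposal is correct and matches the paper's (largely implicit) argument: the paper observes that each connected component of a graph with $\Delta \leq 2$ is a path or a cycle, handles paths via the tree algorithm of Theorem~\ref{teo:wcm-tree-linear} and cycles via Theorem~\ref{wcm-degree-3}, and the corollary follows by taking the best component, exactly as you do. Your explicit remark that a connected matching cannot span two components is the same (trivial) justification the paper relies on, so there is no substantive difference.
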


\section{Weighted Connected Matching with no negative weight in some graph classes }\label{sec:wcm+}

\subsection{Bipartite graphs}
\label{sec:bipartite_complexity}

In this section, we prove that \pname{Weighted Connected Matching} is {\NPc} even for bipartite graphs whose edge weights are in $\{ 0,1 \}$. We also approach the problem in terms of the diameter $d$ of the input graph. We show that when $d\geq4$, the problem is {\NPc}, while is in {\P} if $d\leq3$.

For this reduction, we use the {\NPc} problem \pname{3SAT}, and the input of \pname{Weighted Connected Matching} as $k = |X| + |C| + 1$ and the graph $G_{X,C}$ obtained by the following rules, based on the \pname{3SAT} input.

\begin{enumerate}[(I)]
    \item Add two vertices, $h^+$ and $h^-$, connected by a weight $1$ edge.
    \item For each variable $x_i \in X$, add a copy of $P_3$ whose edge weights are $1$, and label its endpoints as $x_i^+$ and $x_i^-$. Moreover, connect the other vertex, labeled $x_i$, to $h^+$ and set this edge weight to $0$.
    \item For each clause $c_i \in C$, add a copy of $K_2$ whose edge weight is $1$ and label its vertices as $c_i^+$ and $c_i^-$. Also, for each literal $x_j$ of $c_i$, add the edge $c_i^+x_j^-$ if $x_j$ is negated, or $c_i^+x_j^+$ otherwise.
\end{enumerate}

In the next lemmas, we show that, given an input $(X,C)$ for \pname{3SAT}, it is possible to obtain in linear time a connected matching $M$ in $G_{X,C}$, $w(M) = k$, if we have a solution for $C$, and vice versa. Denote $W_0$ and $W_1$ as the edge sets from $G_{X,C}$ whose weights are, respectively, $0$ and $1$.

\begin{lemma}\label{lemma:wcm-bip-ida}
Given a solution $R$ for the \pname{3SAT} instance $(X,C)$, we can obtain a connected matching $M$ having weight $|X|+|C|+1$ in $G_{X,C}$.
\end{lemma}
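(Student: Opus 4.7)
The plan is to explicitly build $M$ from the assignment $R$ and then verify, in turn, that $M$ is a matching, that $w(M) = |X|+|C|+1$, and that $G[M]$ is connected. The candidate matching consists of (i) the edge $h^+h^-$, (ii) for each variable $x_i \in X$, the edge $x_i^+x_i$ if $R$ sets $x_i = T$ and the edge $x_i^-x_i$ otherwise, and (iii) the edge $c_i^+c_i^-$ for each clause $c_i \in C$. All selected edges lie in $W_1$, and within each gadget only a single edge is chosen, while $h^+h^-$ is disjoint from every other selected edge (the weight-$0$ edges $x_ih^+$ from construction (II) are never put into the matching); so $M$ is a matching with $w(M) = 1 + |X| + |C|$.

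The heart of the proof is the connectedness of $G[M]$, and this is where the hypothesis that $R$ satisfies every clause actually enters. I would use $h^+$ as a hub: each $x_i$ lies in $V(M)$ thanks to (ii), and the weight-$0$ edge $x_ih^+$ of construction (II) belongs to $G_{X,C}$, hence to the induced subgraph $G[M]$. So all variable gadgets together with $\{h^+, h^-\}$ already form a single component. For each clause $c_i$, I would pick any literal of $c_i$ satisfied by $R$; if this literal is the positive occurrence of some $x_j$, then $R$ sets $x_j = T$, so rule (ii) saturates $x_j^+$, and construction (III) supplies the edge $c_i^+x_j^+$, which now lies in $G[M]$. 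Symmetrically, if the satisfied literal is the negated occurrence of $x_j$, then $x_j^-$ is saturated and $c_i^+x_j^-$ connects $c_i^+$ to the hub. In either case $c_i^-$ attaches to $c_i^+$ via the matched edge $c_i^+c_i^-$, so the whole clause gadget joins the hub component.

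The main conceptual point, and the only place the satisfiability of $R$ is used, is the duality between the choice of which side of each $P_3$ gadget to saturate in (ii) and the placement of clause-literal edges in construction (III): the rule \textit{``$x_i = T$ saturates $x_i^+$, $x_i = F$ saturates $x_i^-$''} is precisely what makes each satisfied literal of $c_i$ translate into an induced edge from $c_i^+$ to the hub component. Everything else is bookkeeping, and I expect no real obstacle beyond carefully tracking which vertices of each gadget end up in $V(M)$.
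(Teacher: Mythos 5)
Your proposal is correct and follows essentially the same construction and argument as the paper: the same choice of matched edges ($h^+h^-$, one $P_3$ edge per variable according to $R$, and $c_i^+c_i^-$ per clause), with connectivity established through the weight-$0$ edges $x_ih^+$ and the edge from $c_i^+$ to the saturated literal vertex of a satisfied literal. Your write-up is just a more explicit version of the paper's proof, with the matching and weight checks spelled out.
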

\begin{proof}
We show how to obtain the matching $M$. (i) For each clause $c_i \in C$, add the edge $c_i^-c_i^+$ to $M$. Also, (ii) for each variable $x_i \in X$, if $x_i = T$, we saturate the edge $x_i^+x_i$; otherwise, $x_i^-x_i$. Moreover, (iii) we saturate the edge $h^+h^-$.

We show that this matching is connected. Edges from (ii) are connected as they are connected to the edge $h^+h^-$ of (iii). Each edge from (i), obtained by clause $c_i \in C$, having $x_j$ as the variable related to a literal that resolves to true in $c_j$, is connected. This holds because, if $x_j$ is negated, then $c^+_ix_j^- \in E(G_{X,C})$ and $x_j^-$ is saturated. Otherwise, $c^+_ix_j^+ \in E(G_{X,C})$ and $x_j^+$ is saturated.
\end{proof}

\begin{lemma}\label{lemma:wcm-bip-volta}
Given an input $(X,C)$ for \pname{3SAT} and a connected matching $M$ in $G_{X,C}$ having weight $|X|+|C|+1$, we can obtain an assignment $R$ of $X$ that solves \pname{3SAT}
\end{lemma}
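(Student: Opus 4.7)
The plan is to establish that any connected matching of weight $|X|+|C|+1$ is forced, up to the choice of one edge per variable gadget, to take a very rigid form, and then to read off a satisfying assignment from the connectivity constraint.

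First, I would upper-bound the weight of any matching in $G_{X,C}$ by counting the weight-$1$ edges: these are exactly $h^+h^-$, the two edges $x_i^+x_i$ and $x_ix_i^-$ of each variable gadget, and the edge $c_i^+c_i^-$ of each clause gadget; all other edges have weight $0$. Since $x_i^+x_i$ and $x_ix_i^-$ share the vertex $x_i$, at most one of them can lie in $M$. Combining this with at most one $h^+h^-$ and one $c_i^+c_i^-$ per clause, we get $w(M)\le 1+|X|+|C|=k$. Hence $w(M)=k$, and every weight-$1$ edge that can coexist with the others must be in $M$: we must have $h^+h^-\in M$, a unique edge $L_i\in\{x_i^+x_i,\, x_ix_i^-\}$ in $M$ for each variable $x_i$, and $c_i^+c_i^-\in M$ for each clause $c_i$. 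In particular, no weight-$0$ edge $x_ih^+$ or $c_i^+x_j^{\pm}$ can belong to $M$, since their endpoints $h^+$ and $c_i^+$ are already saturated.

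Next, I would exploit the connectivity of $G[V(M)]$. The vertex $c_i^-$ has $c_i^+$ as its unique neighbor in $G_{X,C}$, so $c_i^-$ has degree one in $G[V(M)]$. Therefore $c_i^+$ must have at least one neighbor in $V(M)$ other than $c_i^-$, and the only remaining neighbors of $c_i^+$ in $G_{X,C}$ are the literal vertices of $c_i$: namely $x_j^+$ if $x_j$ occurs positively in $c_i$, and $x_j^-$ if it occurs negated. By the previous paragraph, such a literal vertex is in $V(M)$ exactly when $L_j=x_j^+$ (respectively $L_j=x_j^-$), so for every clause $c_i$ there is at least one variable $x_j$ appearing in $c_i$ whose choice of $L_j$ aligns with the sign of its occurrence.

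Finally, I would define $R$ by setting $x_j=T$ when $L_j=x_j^+$ and $x_j=F$ when $L_j=x_j^-$. For each clause $c_i$, the literal vertex witnessing the connectivity of $c_i^+$ directly certifies a literal of $c_i$ that evaluates to true under $R$, so $R$ satisfies every clause. The only delicate point I anticipate is checking carefully in the second step that weight-$0$ matching edges are genuinely excluded by the saturation of $h^+$ and the $c_i^+$; once this is nailed down, the assignment-extraction is immediate from the pendant argument.
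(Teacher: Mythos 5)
Your proof is correct and follows essentially the same route as the paper: a weight-counting argument forces $M$ to consist exactly of $h^+h^-$, one weight-$1$ edge per variable gadget, and every $c_i^+c_i^-$ (excluding all weight-$0$ edges), after which connectivity forces a saturated literal vertex adjacent to each $c_i^+$, yielding the assignment. Your explicit pendant-vertex argument for $c_i^-$ is just a more detailed justification of the connectivity step the paper asserts directly.
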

\begin{proof}
First, we show that a matching having weight $|X|+|C|+1$ contains exactly $|X|+|C|+1$ edges from $W_1$ and no edges from $W_0$.

Note that there can be at most $|X|+|C|+1$ edges from $W_1$. This holds because, for each variable $x_i \in X$, there is at most one saturated edge of $\{x_i^+x_i, x_i^-x_i\}$, since both have an endpoint in vertex $x_i$. Also, $h^+h^-$ can be saturated, as well as, for each clause $c_i \in C$, the edge $c_i^-c_i^+$.

Observe that each edge from $W_0$ has an endpoint in either $h^+$ or $S_1 = \{ c_i^+ \mid c_i \in C \}$. Saturating any of these vertices by a $W_0$ edge, will decrease the number of possibly saturated edges of $W_1$ and the weight of $M$, resulting $w(M) < k$. For instance, if we saturate $h^+$ or $c_i^+ \in S_1$, we will not be able to saturate, respectively, $h^+h^-$ or $c_i^+c_i^-$.

Therefore, $|M \cap W_1| = |X|+|C|+1$ and $|M \cap W_0| =0$.

If the matching $M$ is connected, then, for each saturated edge $c_i^+c_i^-$, there is a saturated adjacent vertex, either $x_j^+$ or $x_j^-$, $x_j \in X$, $x_j \in c_i$. Also, for each variable $x_i \in X$, the vertex $x_i$ is saturated, which is connected to the edge $h^+h^-$.

Hence, to obtain $R$, for each variable $x_i \in X$, we set $x_i = T$ if and only if $x_i^+$ is saturated. 
\end{proof} 

\subsection{Diameter $4$ bipartite graphs}

It is also possible to add the following rule to the \pname{Weighted Connected Matching} input graph $G_{X,C}$ in order to strengthen our {\NPcness} proof in terms of the graph diameter.

\begin{enumerate}[(IV)]
    \item Add a vertex $u$ and the weight $0$ edges defined by $\{ c_i^+u \mid c_i \in C \} \cup \{ x_i^+h^-, x_i^-h^- \mid x_i \in X \} \cup \{ x_ix_j^+, x_ix_j^-, x_jx_i^+, x_jx_i^- \mid x_i,x_j \in X \}$.
\end{enumerate}

For this graph, the previous lemmas are still valid by similar arguments. Thus, this graph can also be used for the {\NPcness} reduction.

In Figure~\ref{fig:wcm-bipartite-diam4-example-matching}, we illustrate an example with a full reduction diameter four bipartite graph, generated from the \pname{3SAT} input $B = (x_1 \vee \overline{x_2} \vee \overline{x_4}) \wedge (x_1 \vee \overline{x_3} \vee \overline{x_4}) \wedge (\overline{x_1} \vee \overline{x_2} \vee x_4) \wedge (x_2 \vee \overline{x_3} \vee \overline{x_4})$. Dashed and solid edges represent, respectively, edges having weights $0$ and $1$.

\begin{figure}
    \centering
    \includegraphics[width=1\textwidth]{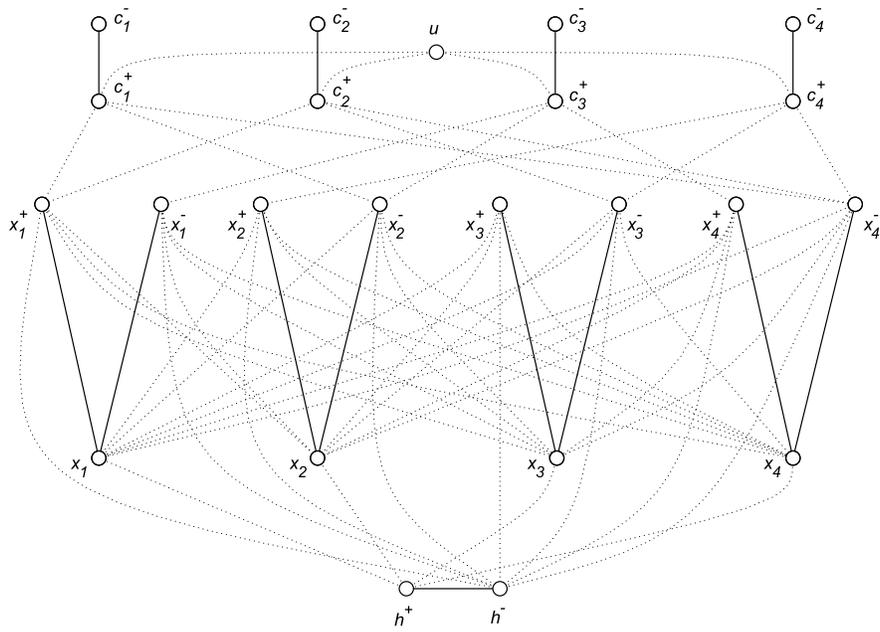}
    \caption{Example of a reduction diameter $4$ bipartite graph}
    \label{fig:wcm-bipartite-diam4-example-matching}
\end{figure}

Thus, we conclude the {\NPcness} with the following theorem.

\begin{theorem}\label{teo:wcm-bip}
{\sc Weighted Connected Matching} is {\NPc} even for bipartite graphs whose diameter is at most $4$, and edge weights are in $\{0,1\}$.
\end{theorem}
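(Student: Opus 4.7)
The plan is to combine the reduction already set up with three additional verifications. Membership in $\NP$ is immediate from Proposition~\ref{prop:wcm-np}, so I would focus the proof on $\NP$-hardness via the polynomial-time reduction from \pname{3SAT} encoded by the graph $G_{X,C}$ built from rules~(I)--(IV) with target $k = |X|+|C|+1$. The correctness of the reduction has essentially been laid out in Lemmas~\ref{lemma:wcm-bip-ida} and~\ref{lemma:wcm-bip-volta}; what remains is to (a) confirm that those lemmas transfer to $G_{X,C}$ after adding rule~(IV), (b) exhibit a bipartition of the resulting graph, and (c) bound its diameter by $4$.

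For (a), the forward direction carries over verbatim: the matching $M$ built in Lemma~\ref{lemma:wcm-bip-ida} is still a matching in the augmented graph, still has weight $|X|+|C|+1$, and is still connected, since adding edges cannot disconnect $G[M]$. The reverse direction is where a little care is needed: the new edges from~(IV) have weight $0$, so I would reuse the counting argument of Lemma~\ref{lemma:wcm-bip-volta} to show that any matching of weight $|X|+|C|+1$ must saturate exactly one of $\{x_i^+x_i, x_i^-x_i\}$ for each $x_i$, plus $h^+h^-$ and every $c_i^+c_i^-$. The key point to reverify is that a weight-$0$ edge of rule~(IV) cannot be profitably substituted for any of these: saturating $h^-$ through $x_i^\pm h^-$ would block $h^+h^-$; saturating $c_i^+$ through $c_i^+u$ would block $c_i^+c_i^-$; and saturating $x_i$ through $x_ix_j^\pm$ would prevent one of the $\{x_i^+x_i, x_i^-x_i\}$ edges. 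Hence the optimal structure is unchanged, and the truth assignment extracted as in the original lemma still satisfies~$C$.

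For (b), I would exhibit the bipartition
\[
A = \{h^-, u\}\cup\{x_i : x_i\in X\}\cup\{c_i^+ : c_i\in C\},\qquad
B = \{h^+\}\cup\{x_i^+, x_i^- : x_i\in X\}\cup\{c_i^- : c_i\in C\},
\]
and check each edge family in turn: $h^+h^-$, the $P_3$ around each variable and the pendant $x_ih^+$, the clause edges $c_i^+c_i^-$ and $c_i^+x_j^\pm$, and finally the rule~(IV) edges $c_i^+u$, $x_i^\pm h^-$, and $x_ix_j^\pm$. Each edge has one endpoint in $A$ and one in $B$, so $G_{X,C}$ is bipartite.

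For (c), I would use $h^+$, $h^-$, and $u$ as hubs. The vertex $h^+$ is adjacent to every $x_i$, $h^-$ is adjacent to $h^+$ and every $x_i^\pm$, and $u$ is adjacent to every $c_i^+$; therefore any two vertices among $\{h^+,h^-,u\}\cup\{x_i\}\cup\{x_i^\pm\}\cup\{c_i^+\}$ are within distance $3$, and each $c_i^-$ sits at distance $1$ from~$c_i^+$, giving diameter at most $4$. Piecing the three verifications together yields the theorem. I expect the only nontrivial step to be reconfirming Lemma~\ref{lemma:wcm-bip-volta} in the augmented graph, since the extra weight-$0$ edges create many alternative ways to connect the matching that must each be ruled out; bipartiteness and the diameter bound are routine case checks once the hubs $h^+,h^-,u$ are identified.
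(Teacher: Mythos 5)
Your overall route coincides with the paper's: membership in \NP{} via Proposition~\ref{prop:wcm-np}, hardness by the \pname{3SAT} reduction (I)--(IV) with $k=|X|+|C|+1$, and the observation that Lemmas~\ref{lemma:wcm-bip-ida} and~\ref{lemma:wcm-bip-volta} still hold after rule~(IV) is added; your part~(a), including the substitution argument ruling out the weight-$0$ edges, is exactly the paper's (terse) justification and is sound.

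The two verifications you add, however, are both flawed as written. In (b), you place $u$ in $A$ together with all the $c_i^+$, but rule~(IV) makes $u$ adjacent to every $c_i^+$, so each edge $c_i^+u$ has both endpoints in $A$ and your partition is not a bipartition; $u$ must go on the other side, e.g. $A=\{h^-\}\cup\{x_i \mid x_i\in X\}\cup\{c_i^+ \mid c_i\in C\}$ and $B=\{h^+,u\}\cup\{x_i^+,x_i^- \mid x_i\in X\}\cup\{c_i^- \mid c_i\in C\}$. In (c), the claim that all vertices outside $\{c_i^- \mid c_i \in C\}$ are pairwise at distance at most $3$ is false: the only neighbours of $u$ are the $c_i^+$, and no $c_i^+$ is adjacent to $h^-$ or to any $x_j$, so $d(u,h^+)=4$ (similarly $d(u,x_j^+)=4$ when the literal $x_j$ occurs positively in no clause). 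Moreover, even granting that claim, your final inference only yields $d(c_i^-,c_j^-)\le 1+3+1=5$; to get the bound $4$ for such pairs you need, e.g., $d(c_i^+,c_j^+)\le 2$ through the hub $u$. The conclusion itself is fine --- a direct case check (each $c_i^-$ reaches $u$ in $2$, $h^-$ and every $x_j$ in $3$, and every other vertex in $4$, while all remaining pairs are within distance $4$) shows the diameter is exactly $4$ --- but as written neither bipartiteness nor the diameter bound is actually established, so (b) and (c) must be repaired before the theorem follows.
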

\begin{proof}
By Proposition~\ref{prop:wcm-np}, the problem is in {\NP}. According to the transformations between \pname{Weighted Connected Matching} and \pname{3SAT} solutions described in Lemmas \ref{lemma:wcm-bip-ida} and \ref{lemma:wcm-bip-volta}, the {\sc 3SAT} problem, which is {\NPc}, can be reduced to \pname{Weighted Connected Matching} using a bipartite graph whose diameter is at least $4$ and the edge weights are either $0$ or $1$. Therefore, \pname{Weighted Connected Matching} is {\NPc} even for bipartite graphs whose weights are in $\{ 0, 1\}$ and diameter is at most $d \geq 4$.
\end{proof}


\subsection{Example}

Consider an input of \pname{3SAT} defined by $B = (x_1 \vee \overline{x_2} \vee \overline{x_4}) \wedge (x_1 \vee \overline{x_3} \vee x_5) \wedge (\overline{x_1} \vee \overline{x_2} \vee x_4) \wedge (x_2 \vee x_3 \vee x_5)$.

In this example, the reduction graph used in \pname{Weighted Connected Matching}$^*$ input is illustrated in Figure~\ref{fig:wcm-pos-example-matching}, as well as a connected matching having weight $10$. Dashed and solid edges represent weights $0$ and $1$, respectively. We also omit the vertex $u$ and the edges from (IV).

The illustrated matching corresponds to the assignment $(F,T,F,F,T)$, in this order, of the variables $(x_1, x_2, x_3, x_4, x_5)$ of $B$.

\begin{figure}
    \centering
    \includegraphics[width=1\textwidth]{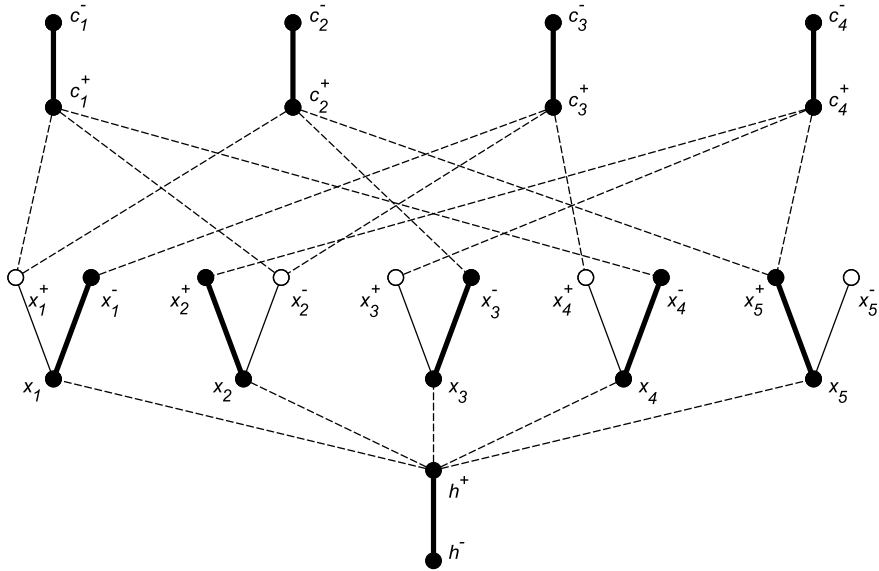}
    \caption{Example of a reduction bipartite graph for {\sc 3SAT}}
    \label{fig:wcm-pos-example-matching}
\end{figure}

\subsection{Planar bipartite graphs}

In this section we prove the {\NPcness} of \pname{Weighted Connected Matching} even for planar bipartite graphs having weights either $0$ or $1$. 

Let $B$ be a conjunctive formula such that $C = \{ c_1,\ldots,c_q \}$ and $X = \{ x_1,\ldots,x_m \}$ are the sets of clauses and variables of $B$, respectively. Also, let $G(B)$ be a graph in which there is a vertex for each clause or variable, and edges between a variable and a clause vertex if and only if the variable is part of the clause. Moreover, there is a cycle in all the variable vertices.

Observe an example of a graph $G(B)$ in Figure~\ref{fig:wcm-planar-nn-gb} referring to the formula $B = (x_1 \vee x_2 \vee x_5) \wedge (x_2 \vee x_3 \vee x_4) \wedge (\overline{x_2} \vee \overline{x_4} \vee \overline{x_5})$.

\begin{figure}
    \centering
    \includegraphics[width=0.8\textwidth]{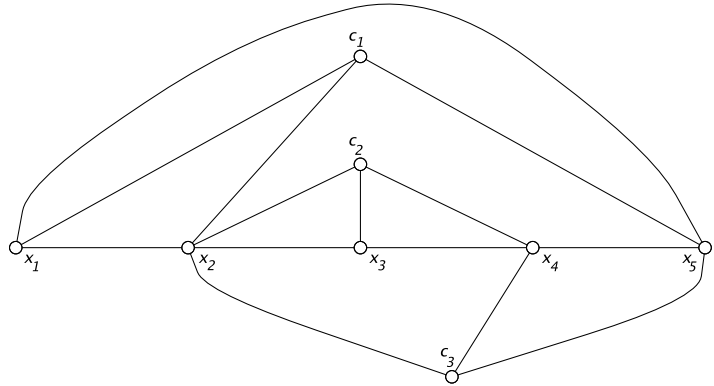}
    \caption{Example of a graph $G(B)$}
    \label{fig:wcm-planar-nn-gb}
\end{figure}

In~\cite{planar_formulae}, David Lichtenstein denoted a conjunctive formula $B$ planar if $G(B)$ is planar. In the same paper, it was studied a \pname{SAT} variant for which the input formulae are planar. This problem was denoted as \pname{Planar SAT}, which was proven to be {\NPc}.

Later in the paper, David Lichtenstein also proved the {\NPcness} of a \pname{Planar SAT} variant of our interest. In this problem, the input formula is monotone and, there is a planar $G(B)$ embedding such that each edge referencing a positive(negative) literal is connected to the top(bottom) of the variable vertex. Since this problem has not been given a name in its original paper, we will call it \pname{Planar Monotone SAT}.

We use \pname{Planar Monotone SAT} for our reduction from \pname{Weighted Connected Matching}, for which the input is defined as $k=2|X|+|C|$ and the planar bipartite graph, that we call $G'(B)$, is obtained by $G(B)$, with the addition of the following procedures.

\begin{enumerate}[(I)]
    \item For each variable $x_i$, generate four vertices, $x_i^+$, $x_i^-$, $v_i$, $u_i$. Also, add the weight $1$ edge $v_iu_i$.
    \item Remove all the edges from the variable vertices cycle and add the edges $\{ v_ix_i, v_ix_{i+1} \mid 1 \leq i < |X| \} \cup \{ v_{m}x_{m}, v_{m}x_1 \}$ having weight $0$.
    \item For each edge having an endpoint in $x_i$, $1 \leq i \leq |X|$, representing a positive(negative) literal, we connect  this edge to $x_i^+$($x_i^-$) instead of $x_i$ and set its weight to $1$.
    \item For each clause $c_i$, we rename the corresponding vertex to $c_i^+$ and connect it to a new vertex, $c_i^-$, by a weight $1$ edge.
\end{enumerate}

First, we show that the graph is planar and bipartite since this information will be used to strengthen our {\NPcness} proof.

\begin{lemma}
The graph $G'(B)$ is planar and bipartite.
\end{lemma}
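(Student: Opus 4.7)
The plan is to verify both properties separately, using the planar embedding hypothesis inherited from \pname{Planar Monotone SAT} for planarity and a concrete $2$-coloring for bipartiteness.

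For planarity, I will start from the fixed planar embedding of $G(B)$ guaranteed by \pname{Planar Monotone SAT}, in which every edge realizing a positive literal is incident to the top of its variable vertex and every edge realizing a negative literal is incident to the bottom. Step (III) amounts to splitting each variable vertex $x_i$ into three vertices $x_i^+, x_i, x_i^-$, which I place vertically in the region formerly occupied by $x_i$, routing the positive literal edges to $x_i^+$ at the top and the negative literal edges to $x_i^-$ at the bottom; this preserves planarity precisely because of the monotone embedding property. Step (II) replaces each cycle edge $x_i x_{i+1}$ (whose two endpoints lie on the horizontal ``middle'' of consecutive variable gadgets) by a length-two path $x_i v_i x_{i+1}$, which is a subdivision of the original edge and therefore preserves planarity; the pendant $u_i$ in (I) and the pendant $c_i^-$ in (IV) are attached in the face bordering $v_i$ and $c_i^+$, respectively, which again preserves planarity. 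Hence $G'(B)$ admits a planar embedding.

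For bipartiteness, I will exhibit a bipartition $(A,B)$ and verify that every edge crosses it. Let
\[
A = \{\, v_i : 1 \le i \le |X|\,\} \cup \{\, c_j^+ : 1 \le j \le |C|\,\},
\]
\[
B = \{\, x_i, x_i^+, x_i^-, u_i : 1 \le i \le |X|\,\} \cup \{\, c_j^- : 1 \le j \le |C|\,\}.
\]
The edges of $G'(B)$ fall into four families, and I would check them in turn: each edge $v_i u_i$ from (I) joins $A$ to $B$; each edge $v_i x_i$ or $v_i x_{i+1}$ from (II) joins $A$ to $B$; each rerouted literal edge from (III) joins some $x_i^+$ or $x_i^-$ in $B$ to some $c_j^+$ in $A$; and each edge $c_i^+ c_i^-$ from (IV) joins $A$ to $B$. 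Since every edge has exactly one endpoint in $A$ and one in $B$, the graph is bipartite.

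The main conceptual point (as opposed to the routine checks) is the use of the monotone embedding property: without the guarantee that positive and negative literal edges enter the variable vertex from opposite sides, the splitting of $x_i$ into $x_i^+, x_i, x_i^-$ in step (III) could introduce crossings, and planarity would fail. The rest reduces to observing that subdividing edges and adding pendant vertices preserves both planarity and the bipartition.
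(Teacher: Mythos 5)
Your planarity argument is essentially the paper's own (use the monotone embedding guaranteed by \pname{Planar Monotone SAT}, place $x_i^+$ above and $x_i^-$ below $x_i$, and slide $v_i,u_i$ and $c_i^-$ along the old cycle edges and next to the clause vertices), and that half is fine. The bipartiteness half, however, has a genuine gap: $G'(B)$ also contains the weight-$1$ edges $x_ix_i^+$ and $x_ix_i^-$ inside each variable gadget. Although steps (I)--(IV) do not state them explicitly, they are unmistakably part of the construction --- they are exactly the edges saturated in Lemmas~\ref{lemma:wcm-planar-nn-ida} and~\ref{lemma:wcm-planar-nn-volta} (``there is at most one saturated edge of $\{x_i^+x_i,x_i^-x_i\}$'') and they supply $|X|$ of the target weight $k=2|X|+|C|$; indeed, your own planarity sketch of ``splitting $x_i$ into $x_i^+,x_i,x_i^-$ placed vertically'' presupposes them. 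Your enumeration of ``four edge families'' omits this fifth family, and your proposed classes put $x_i$, $x_i^+$ and $x_i^-$ all into $B$, so every edge $x_ix_i^+$ and $x_ix_i^-$ is monochromatic and $(A,B)$ is not a valid bipartition of $G'(B)$.

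The fix is the bipartition the paper uses: one side is $\{x_i,u_i \mid 1 \leq i \leq |X|\} \cup \{c_j^+ \mid 1 \leq j \leq |C|\}$ and the other side is everything else ($x_i^+$, $x_i^-$, $v_i$, $c_j^-$). This separates $x_i$ from $x_i^+$ and $x_i^-$, and a quick check shows it still crosses all four families you did list: $v_iu_i$, the cycle edges $v_ix_i$ and $v_ix_{i+1}$, the literal edges $x_i^{\pm}c_j^+$, and the clause edges $c_j^+c_j^-$. With that substitution (and the explicit acknowledgement of the $x_ix_i^{\pm}$ edges in both halves of your argument), your proof goes through and coincides with the paper's.
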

\begin{proof}
Observe that $G'(B)$ is bipartite since one of its bipartitions can be defined as $\{ x_i,u_i \mid 1 \leq i \leq |X| \} \cup \{ c_i^+ \mid 1 \leq i \leq |C| \}$.

Next, we show that $G'(B)$ is planar. Note that the former graph $G(B)$ has all edges connecting positive(negative) literals are connected to the top(bottom) of the variable vertices.

For each variable $x_i$ since we created vertices $x_i^+$ and $x_i^-$ representing the literals $x_i$ and $\overline{x_i}$. Then, in the graph embedding, we can simply position $x_i^+$ and $x_i^-$, respectively, to the top and the bottom of $x_i$. Those literal variables do not cross the former variable cycle embedding, as we can see in Figure~\ref{fig:wcm-planar-var-gadget}.

\begin{figure}%
        \centering
        \subfloat[]{{\includegraphics[width=0.2\textwidth]{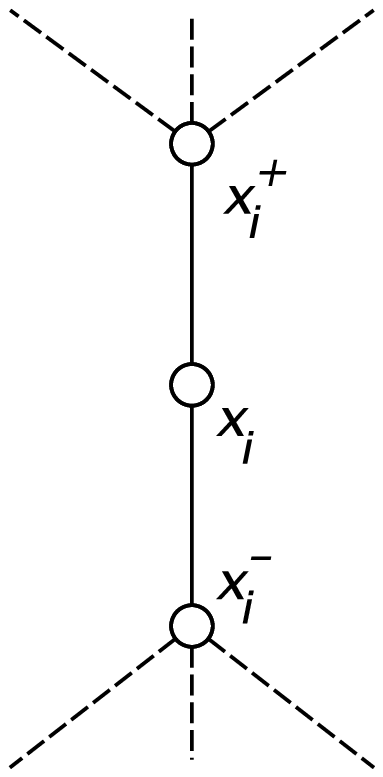}\label{fig:wcm-planar-var-gadget} }}
        \hspace{0.15\textwidth}
        \subfloat[]{{\includegraphics[width=0.6\textwidth]{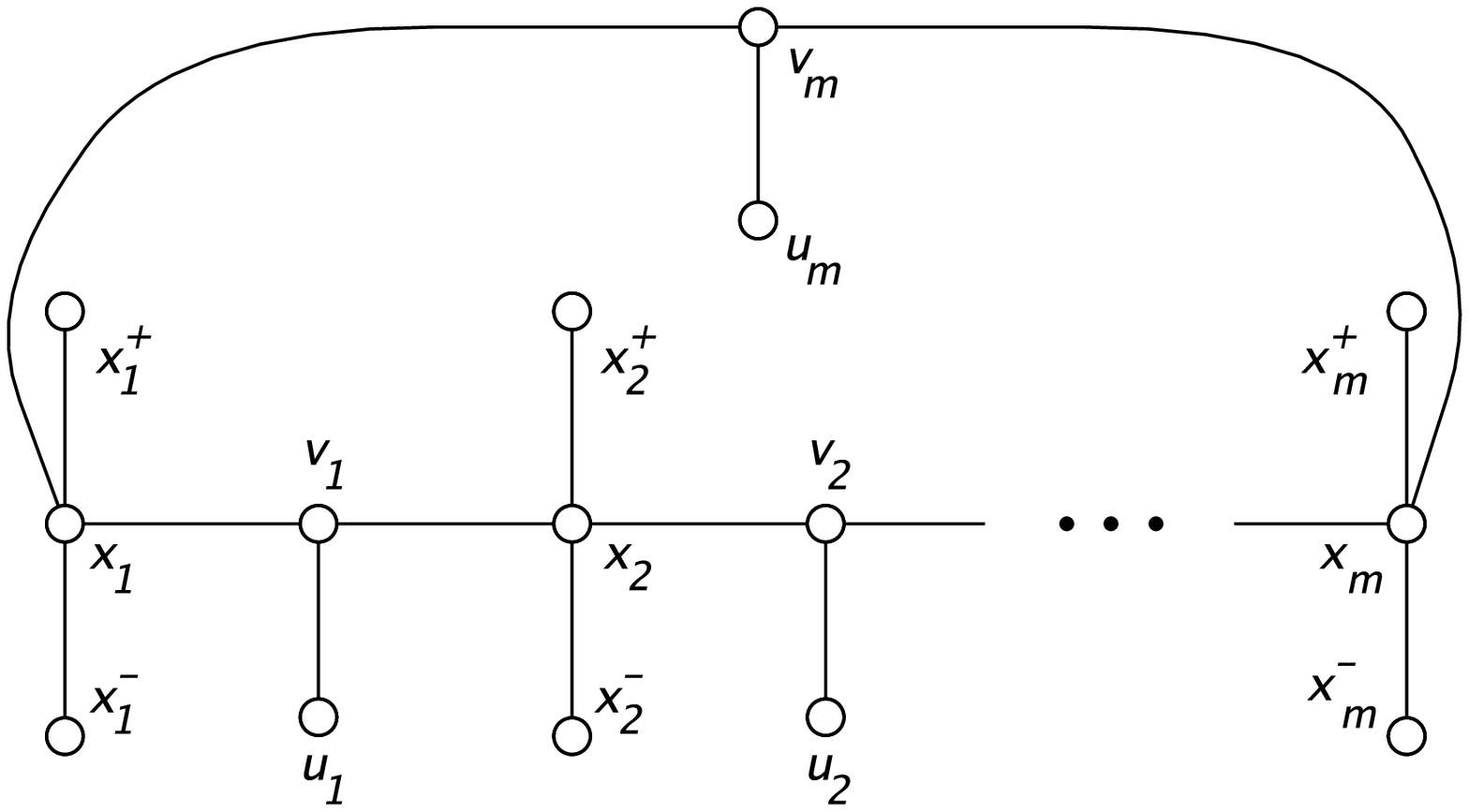}\label{fig:wcm-planar-var-cycle} }}
        \caption{Planar representations of some subgraphs of $G'(B)$}.%
        \label{fig:wcm-planar-gadgets}
\end{figure}

Also, the vertices $v_i$ and $u_i$ can be positioned along the edges of the variable vertices cycle, as in Figure~\ref{fig:wcm-planar-var-cycle}.

Therefore, this graph is bipartite and planar.
\end{proof}

In Lemmas~\ref{lemma:wcm-planar-nn-ida} and \ref{lemma:wcm-planar-nn-volta}, we show that, given an input $B$ for \pname{Planar Monotone SAT}, it is possible to obtain in linear time a connected matching $M$ in $G'(B)$, $w(M) = k$, if we have a solution for $B$, and vice versa. Then, Theorem~\ref{theo:wcm-planar-nn} concludes the {\NPc}ness proof.

\begin{lemma}\label{lemma:wcm-planar-nn-ida}
Given a planar formula $B$ and an assignment $R$ that resolves $B$ to true, we can obtain in linear time a connected matching $M$ in $G'(B)$ such that $w(M) = k = 2|X|+|C|$.
\end{lemma}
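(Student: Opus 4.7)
The plan is to exhibit an explicit matching of the required weight and then verify its cardinality, weight, and connectivity. The construction has three parts, one for each weighted-edge type in $G'(B)$:
\begin{enumerate}[(i)]
    \item For every variable $x_i \in X$, include the edge $v_iu_i$.
    \item For every variable $x_i \in X$, include the edge $x_ix_i^+$ if $R(x_i)=T$ and the edge $x_ix_i^-$ otherwise.
    \item For every clause $c_j \in C$, include the edge $c_j^+c_j^-$.
\end{enumerate}

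First I would check that $M$ is a matching. The edges in (i) only use the private vertices $\{v_i,u_i\}$; the edges in (ii) use $x_i$ and exactly one of $\{x_i^+,x_i^-\}$; the edges in (iii) use $\{c_j^+,c_j^-\}$, which do not appear in any other selected edge. Hence the three families are pairwise vertex-disjoint and $M$ is indeed a matching. Since every chosen edge has weight $1$ and we pick $|X|+|X|+|C|$ of them, we get $w(M)=2|X|+|C|=k$ immediately.

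The only real content is connectivity of $G[M]$, and it relies on the assumption that $R$ satisfies $B$. The backbone of $G[V(M)]$ comes from the weight-$0$ edges $v_ix_i$ and $v_ix_{i+1}$ of rule (II), which are present in $G'(B)$ independently of whether they are in $M$: since both endpoints $v_i$ and $x_i$ (resp.\ $x_{i+1}$) are saturated by (i) and (ii), these edges lie in $G[M]$. Thus the cyclic sequence $v_1,x_2,v_2,x_3,\dots,v_m,x_1,v_1$ together with the pendant edges $v_iu_i$ puts $\{x_i,v_i,u_i : i\in[m]\}$ into a single connected component, and each literal vertex $x_i^+$ or $x_i^-$ used in (ii) is attached to this component via the matching edge $x_ix_i^{\pm}$. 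It remains to connect every $c_j^+$ to this component: because $R$ satisfies $B$, clause $c_j$ contains a true literal $\ell$; by the construction in (III), the vertex $c_j^+$ is adjacent in $G'(B)$ to $x_i^+$ (if $\ell=x_i$) or $x_i^-$ (if $\ell=\overline{x_i}$), and in either case the endpoint of this adjacency is the $M$-saturated vertex chosen in (ii). Hence $c_j^+$, and with it $c_j^-$, joins the single component.

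The construction reads $R$ once, scans the vertex and edge sets once, and performs constant work per variable and per clause, so $M$ is produced in linear time. The main obstacle I foresee is bookkeeping in the connectivity argument, specifically being careful that the backbone edges $v_ix_i$ and $v_ix_{i+1}$ remain in the induced subgraph $G[M]$ (not in $M$ itself) and that the cyclic index identification $v_mx_1$ is handled correctly; the monotonicity assumption on $B$ plays no role here beyond justifying the shape of $G'(B)$.
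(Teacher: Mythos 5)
Your construction is exactly the paper's: match $x_i$ to $x_i^+$ or $x_i^-$ according to the assignment, take all $v_iu_i$ and all $c_j^+c_j^-$, and argue connectivity via the saturated cycle on $\{x_i,v_i\}$ plus the adjacency of each $c_j^+$ to a saturated true-literal vertex. The proposal is correct and follows essentially the same approach, with somewhat more explicit bookkeeping about the weight-$0$ backbone edges lying in $G[M]$.
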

\begin{proof}
Let's build a connected matching $M$ having weight $k$. First, for each variable $x_i \in X$, we add to $M$ the edge $x_ix_i^+$ if $x_i$ is true in $R$ and, otherwise, $x_ix_i^-$. At this point, we have $|X|$ weight $1$ saturated edges.

We also saturate the weight $1$ edges of $\{v_iu_i \mid 1 \leq i \leq |X| \} \cup \{ c_i^+c_i^- \mid 1 \leq c_i \leq |C| \}$. Then, we will have $2|X|+|C|$ weight $1$ saturated edges in $M$ and, thus, $w(M) = 2|X|+|C|$.

Note that the matching is connected since, the vertices of $\{x_i, v_i \mid 1 \leq i \leq |X| \}$ induce a cycle in $G'(B)$ and are all saturated. The remaining edges that are not connected to those vertices are exactly the ones in $\{ c_i^+c_i^- \mid 1\leq i \leq |C| \}$. For $1\leq i \leq |C|$, the vertex $c_i^+$ is connected to at least one saturated vertex $x_i^+$ or $x_i^-$, which corresponds to a literal that resolves to true. This is due to the fact that we saturated the literal vertices according to the assignment $R$, that resolves $B$ to true. Thus, $M$ is connected.
\end{proof}

\begin{lemma}\label{lemma:wcm-planar-nn-volta}
Given the graph $G'(B)$ obtained by the formula $B$, and a connected matching $M$, $w(M)=2|X|+|C|$ in $G'(B)$, we can obtain in linear time a variable assignment of $X$ that resolves $B$ to true.
\end{lemma}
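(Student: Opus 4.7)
The plan is to mirror Lemma~\ref{lemma:wcm-planar-nn-ida} via an extremality argument: first pin down exactly which edges must lie in $M$ when $w(M)=2|X|+|C|$, then read an assignment $R$ off which side of each variable gadget is saturated, and finally invoke the connectedness of $G'[V(M)]$ to certify that every clause has a literal true under $R$.

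For the structural step, I would partition the weight-$1$ edges of $G'(B)$ into four types: the $|X|$ pendant edges $v_iu_i$, the $|C|$ clause edges $c_j^+c_j^-$, the rewired literal-to-clause edges of step~(III), and the variable-to-literal edges $x_ix_i^{\pm}$ used in the forward direction. Let $a, b, e, f$ count how many edges of each type lie in $M$. Because $u_i$ has only $v_i$ as a neighbour we get $a\le|X|$; because each $c_j^+$ is saturated by at most one matching edge, $b+e\le|C|$; because each $x_i$ is saturated at most once, $f\le|X|$. All weight-$0$ edges contribute nothing, hence $w(M)=a+b+e+f\le 2|X|+|C|$, and equality forces $a=|X|$, $f=|X|$, and $b+e=|C|$. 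In particular every $v_iu_i$ lies in $M$, every $x_i$ is matched to exactly one of $x_i^+$ or $x_i^-$, and every $c_j^+$ is saturated.

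Define $R(x_i)=T$ when $x_ix_i^+\in M$ and $R(x_i)=F$ otherwise (in which case $x_ix_i^-\in M$, by $f=|X|$). To verify that $R$ satisfies $B$, fix any clause $c_j$. The vertex $c_j^+$ lies in $V(M)$, and any saturated $c_j^-$ is a pendant whose only neighbour in $G'$ is $c_j^+$; connectedness of $G'[V(M)]$ therefore forces $c_j^+$ to have a neighbour in $V(M)$ different from $c_j^-$, and the only candidates are the literal vertices of $c_j$. Since $B$ is monotone, these literal vertices are all of the form $x_{i_0}^+$ for a positive clause or all of the form $x_{i_0}^-$ for a negative clause; tracing which matching edge saturates the chosen literal vertex, together with the constraint $f=|X|$, yields that the corresponding literal evaluates to true under $R$, and hence that $c_j$ is satisfied.

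The main obstacle is this last tracing step: when $c_j^+$ is matched to a literal vertex rather than to $c_j^-$ (the case $e>0$), the chosen literal vertex is saturated by a literal-clause edge instead of by its $x_ix_i^{\pm}$ edge, so one must argue carefully — using monotonicity of $B$ and the constraint that exactly one of $x_i^+, x_i^-$ is matched to $x_i$ — that the encoded assignment still makes the corresponding literal true. Once this is done, $R$ can be extracted in linear time by inspecting one bit per variable gadget, as required.
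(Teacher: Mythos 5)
Your counting framework and the connectivity argument at each $c_j^+$ follow the same strategy as the paper, but the proposal has a genuine gap at exactly the point you flag and leave open: the case $e>0$. If some clause vertex $c_j^+$ is matched directly to a literal vertex, say $x_{i_0}^+$ with $x_{i_0}$ occurring positively in $c_j$, then your constraint $f=|X|$ forces $x_{i_0}$ to be matched to the \emph{other} literal vertex $x_{i_0}^-$, so $R(x_{i_0})=F$ and the literal of $c_j$ you traced is false under $R$. The connectivity argument produces only this one saturated literal neighbour of $c_j^+$, so nothing certifies that any literal of $c_j$ is true, and monotonicity of $B$ gives no leverage: the clause's literal vertices being all of one sign does not prevent the matched sign from disagreeing with the sign encoded by $R$. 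So the final ``tracing'' step cannot be carried out under the premises you have established; as written, the argument does not prove the lemma.

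The way out is that in the intended construction the rewired literal-to-clause edges have weight $0$, not $1$: this is what the budget $k=2|X|+|C|$, the forward direction (Lemma~\ref{lemma:wcm-planar-nn-ida}, which obtains all its weight from the edges $x_ix_i^{\pm}$, $v_iu_i$ and $c_j^+c_j^-$), and the figure with dashed clause--literal edges all presuppose, even though step~(III) is worded misleadingly. With weight-$1$ edges being exactly $x_ix_i^{\pm}$, $v_iu_i$ and $c_j^+c_j^-$, your own counting gives $a=|X|$, $f=|X|$ and $b=|C|$, so every $c_j^+$ is matched to $c_j^-$ and no weight-$0$ edge can appear in $M$ (a weight-$0$ edge at $c_j^+$ or at $v_i$ would block $c_j^+c_j^-$ or $v_iu_i$ and push the weight below $k$). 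Then the saturated literal neighbour of $c_j^+$ can only be saturated by its edge to $x_i$, so the traced literal is true under $R$ and the problematic case never arises. This is precisely how the paper argues: it first establishes $|M\cap W_1|=2|X|+|C|$ and $M\cap W_0=\emptyset$, and only afterwards reads off the assignment; your proof needs this elimination of the weight-$0$ (equivalently, clause-to-literal) matching edges rather than a careful analysis of them.
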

\begin{proof}
Denote $W_0$ and $W_1$ as the edge sets of $G'(B)$ whose weights are, respectively, $0$ and $1$. First, we show that a matching having weight $2|X|+|C|$ contains exactly $2|X|+|C|$ edges from $W_1$ and no edges from $W_0$.

Note that there can be at most $2|X|+|C|$ edges from $W_1$. This holds because, for each variable $x_i \in X$, there is at most one saturated edge of $\{x_i^+x_i, x_i^-x_i\}$ since both have an endpoint in vertex $x_i$. The other edges from $W_1$, $\{ c_i^+c_i^- \mid 1 \leq i \leq |C| \} \cup \{ v_iu_i \mid 1 \leq i \leq |X| \}$ can all be saturated.

Observe that each edge from $W_0$ has an endpoint in $S_1 = \{ c_i^+ \mid 1 \leq i \leq |C| \}$ or in $S_2 = \{ v_i \mid 1 \leq i \leq |X| \}$. Saturating any of these vertices by a $W_0$ edge will decrease the maximum number of saturated edges of $W_1$ and, thus, decrease the weight of $M$ such that $w(M) < k$. For instance, if we saturate $c_i^+ \in S_1$ or $v_i \in S_2$, we will not be able to saturate, respectively, $c_i^+c_i^-$ or $v_iu_i$.

Therefore, $|M \cap W_1| = 2|X|+|C|$ and $|M \cap W_0| =0$.

Moreover $M$ is connected, then, for each saturated edge $c_i^+c_i^-$, there is a saturated adjacent vertex, either $x_j^+$ or $x_j^-$, $x_j \in X$, $x_j \in c_i$. Those vertices are exactly the ones representing literals in the clause $c_i$ that resolves $c_i$ to true.

Also, the vertices $\{x_iv_i \in 1\leq i \leq |X|\}$ are all saturated and connected.

Therefore, to obtain $R$, for each variable $x_i \in X$, we can set $x_i = T$ if and only if $x_i^+$ is saturated, which can be done in linear time.
\end{proof}

\begin{theorem}\label{theo:wcm-planar-nn}
\pname{Weighted Connected Matching} is {\NPc} even for planar bipartite graphs whose edge weights are in $\{ 0,1 \}$
\end{theorem}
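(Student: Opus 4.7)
The plan is to tie together the pieces already established: membership in \NP\ is immediate from Proposition~\ref{prop:wcm-np}, so the entire argument reduces to a polynomial-time many-one reduction from a known \NPH\ problem. The natural source is \pname{Planar Monotone SAT}, whose \NPHness\ is due to Lichtenstein~\cite{planar_formulae}, and whose input is exactly the kind of planar monotone formula for which the gadget construction $G'(B)$ was designed in this subsection.

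First, I would observe that, given an instance $B$ of \pname{Planar Monotone SAT}, the graph $G'(B)$ together with the target $k = 2|X| + |C|$ can be produced in polynomial (in fact linear) time by following procedures (I)--(IV): each variable contributes a constant-size gadget and each clause contributes an edge $c_i^+c_i^-$ and at most three connecting edges. The preceding lemma shows $G'(B)$ is both planar and bipartite, and by construction all edge weights lie in $\{0,1\}$, so the output instance satisfies the restrictions claimed in the theorem statement.

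Next, I would invoke Lemmas~\ref{lemma:wcm-planar-nn-ida} and~\ref{lemma:wcm-planar-nn-volta} as the two halves of correctness: the former converts any satisfying assignment of $B$ into a connected matching of weight exactly $2|X|+|C|$ in $G'(B)$, while the latter shows that any connected matching of weight at least $2|X|+|C|$ forces saturation of every $v_iu_i$ and every $c_i^+c_i^-$ together with a literal edge for each variable, from which a satisfying assignment can be read off in linear time by setting $x_i = T$ iff $x_i^+$ is saturated. Hence $B$ is satisfiable if and only if $(G'(B),k)$ is a \YES-instance of \pname{Weighted Connected Matching}.

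Since there is no remaining obstacle beyond stitching together these ingredients, the only subtle point worth emphasizing is that the restriction to \emph{monotone} formulas in Lichtenstein's result is essential for us: it guarantees that, in the given planar embedding of $G(B)$, all edges incident with a variable vertex $x_i$ attach to the same side, which is what lets us split $x_i$ into $x_i^+$ and $x_i^-$ and route the variable cycle through new vertices $v_i, u_i$ without introducing edge crossings. With this observation, the reduction is polynomial-time computable, correct, and produces an instance in the restricted class, completing the proof that \pname{Weighted Connected Matching} is \NPc\ on planar bipartite graphs with edge weights in $\{0,1\}$.
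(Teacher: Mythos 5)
Your proposal is correct and follows essentially the same route as the paper: \NP-membership via Proposition~\ref{prop:wcm-np}, a polynomial-time construction of $(G'(B), k=2|X|+|C|)$ from a \pname{Planar Monotone SAT} instance, planarity and bipartiteness from the preceding lemma, and correctness from Lemmas~\ref{lemma:wcm-planar-nn-ida} and~\ref{lemma:wcm-planar-nn-volta}. Your added remark on why monotonicity (and the fixed top/bottom embedding of literal edges) is what permits splitting $x_i$ into $x_i^+,x_i^-$ without crossings is a nice explicit statement of a point the paper handles inside its planarity lemma, and you also state the reduction direction more carefully than the paper's wording does.
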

\begin{proof}
Proposition~\ref{prop:wcm-np} shows that the problem is in {\NP}. In Lemmas~\ref{lemma:wcm-planar-nn-ida} and \ref{lemma:wcm-planar-nn-volta}, we show transformations between solutions of \pname{Weighted Connected Matching} and \pname{Planar Monotone SAT}, which is {\NPc}. Thus, \pname{Weighted Connected Matching} can be reduced to \pname{Planar Monotone SAT} using a planar and bipartite graph whose edge weights are either $0$ or $1$. Therefore, \pname{Weighted Connected Matching} is {\NPc} even for bipartite planar graphs whose weights are in $\{ 0, 1\}$.
\end{proof}
\subsection{Example}

Let $B$ be an input for \pname{Planar Monotone SAT} defined as $(x_1 \vee x_2 \vee x_5) \wedge (x_2 \vee x_3 \vee x_4) \wedge (\overline{x_2} \vee \overline{x_4} \vee \overline{x_5})$.

Note that this formula is monotone as, in each clause, literals are all either positive or negative. Moreover, $G(B)$ is planar, and all edges representing positive(negative) literals are connected at the top(bottom) of the variable vertices. Observe an example of this graph in Figure~\ref{fig:wcm-planar-nn-gb}.

Thus, $B$ is in fact an input example of \pname{Planar Monotone SAT}, which can be used to generate the input $G'(B)$ and $k=2|X|+|C|=13$ for \pname{Weighted Connected Matching}.

Observe in Figure~\ref{fig:wcm-planar-nn} a connected matching with weight $13$ in $G'(B)$. Dashed and solid edges represent weights $0$ and $1$, respectively. This matching corresponds to the assignment $(T,F,T,T,T)$ of the variables $(x_1, x_2, x_3, x_4, x_5)$, in this order.

\begin{figure}
    \centering
    \includegraphics[width=0.8\textwidth]{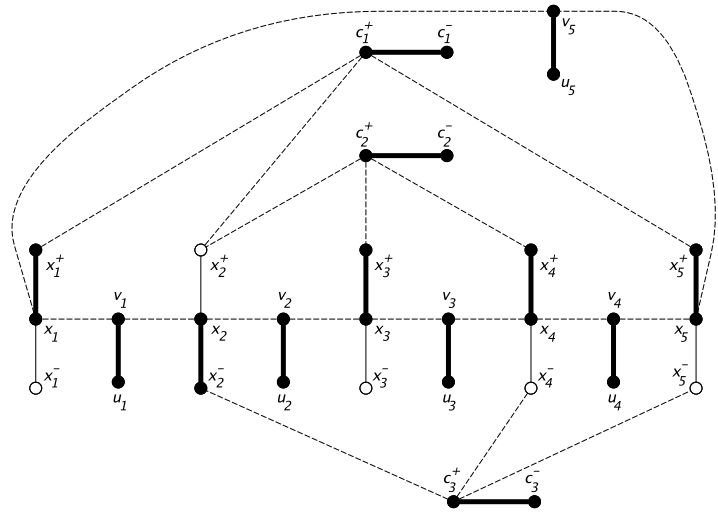}
    \caption{Example of a graph $G'(B)$ and a connected matching}
    \label{fig:wcm-planar-nn}
\end{figure}
\subsection{Chordal graphs}

We know that the version of \pname{Weighted Connected Matching} in which is allowed to have negative weight edges is {\NPc} for chordal graphs. This is due to the {\NPcness} for starlike graphs stated in Theorem~\ref{teo:wcm-starlike}.

Unlikely, we show that if restrict the weights to be non negative only, we can solve the problem in polynomial time for chordal graphs. For this purpose, we use a polynomial reduction to the problem \pname{Maximum Weight Perfect Matching}, which is in {\P}.


This section is organized as follows. First, we analyze matchings in general for chordal graphs, showing in Proposition~\ref{prop:wcm-choral-sep} that it is easy to obtain maximum weight matchings in which every separator has at most one of its vertices non saturated. This implies that the problem for chordal graphs without articulations can be solved with the same complexity as \pname{Weighted Matching}. Next, we show in Lemma~\ref{lem:wcm-chordal-art} that every graph has a maximum weight connected matching that saturates all its articulations. Finally, we use this property and build a polynomial reduction from \pname{Weighted Connected Matching} to \pname{Weighted Perfect Matching} for chordal graphs in general.

Let $G$ be a chordal graph and $M$ a maximum weight matching of $G$. By definition, we know that every minimal separator of $G$ is a clique. Then, observe that, if there is any minimal separator $S$ containing two non saturated vertices $v_1,v_2 \in S$, we can add $v_1v_2$ to $M$. After this, we should have a matching in which every minimal separator has at most one non vertex saturated, which implies Proposition~\ref{prop:wcm-choral-sep}.

\begin{proposition}\label{prop:wcm-choral-sep}
In a chordal graph, there is a maximum weight matching $M$ such that, for every minimal separator $S$, it holds that $|S-1| \leq |V(M) \cap S| \leq |S|$.
\end{proposition}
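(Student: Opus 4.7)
The plan is to start with any maximum weight matching $M^\star$ of $G$ and iteratively augment it so that every minimal separator ends up with at most one unsaturated vertex, while preserving maximality of weight. The key structural fact I will invoke is the classical result that in a chordal graph every minimal vertex separator induces a clique.

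First I would fix a maximum weight matching $M$ in $G$ (which exists because $G$ is finite). Suppose, aiming at the desired structure, that there is a minimal separator $S$ together with two distinct vertices $v_1,v_2 \in S$ with $v_1,v_2 \notin V(M)$. Since $S$ is a clique, the edge $v_1v_2$ exists in $G$, and neither endpoint is saturated by $M$, so $M' = M \cup \{v_1v_2\}$ is again a matching. Because we are in the non-negative-weight setting, $w(M') = w(M) + w(v_1v_2) \geq w(M)$, and by maximality of $M$ we must in fact have equality, so $M'$ is still a maximum weight matching. Crucially, $|V(G) \setminus V(M')| = |V(G) \setminus V(M)| - 2$.

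I would then repeat this augmentation as long as some minimal separator contains two unsaturated vertices. Since each step strictly decreases the number of unsaturated vertices (a non-negative integer) while keeping the total weight at the optimum, the process terminates in at most $|V(G)|/2$ steps. The resulting matching $M$ is a maximum weight matching in which every minimal separator $S$ satisfies $|V(M) \cap S| \geq |S|-1$, and trivially $|V(M) \cap S| \leq |S|$; this is exactly the claimed inequality.

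The only nontrivial ingredient is the chordality-to-clique-separator translation, which is a standard classical result, so there is no real obstacle beyond invoking it correctly and observing that non-negativity of the weights is precisely what lets us add the edge $v_1v_2$ without losing optimality. Note that if negative weights were allowed the argument breaks down, consistently with the $\NPc$ result for starlike graphs in Theorem~\ref{teo:wcm-starlike}.
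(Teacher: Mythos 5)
Your proposal is correct and follows essentially the same route as the paper: both arguments rest on the fact that minimal separators of a chordal graph are cliques, so any two unsaturated vertices in a separator span an edge that can be added to the matching, and with non-negative weights this augmentation preserves maximum weight while strictly reducing the number of unsaturated vertices until every separator has at most one. Your write-up is in fact slightly more careful than the paper's, since you make the weight-preservation and termination arguments explicit.
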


For this reason, given a maximum weight matching $M$, we can saturate a maximal set of edges having its endpoints in two non saturated edges of a minimal separator, obtaining a maximum weight connected matching.

Unlikely, when considering the problem for chordal graphs in general, whose minimal separators may be articulations, we can not rely on this solution.

Instead, we  use a polynomial reduction to \pname{Maximum Weight Perfect Matching}, which is in {\P}. This was first shown by Jack Edmonds~\cite{edmonds_maximum}, and, over time, better algorithms were given~\cite{mwpm}. In this problem, we are given a graph $G$ and we want to find a perfect matching $M$ whose sum of the edge weights is maximum. Note that input graphs must have perfect matchings.






In the next lemma, we show that there is always a maximum connected weighted matching that saturates all articulations of a graph.

\begin{lemma}\label{lem:wcm-chordal-art}
Let $G$ be a connected graph whose edge weights are non negative. There is a maximum weight connected matching $M$ that saturates all articulations.
\end{lemma}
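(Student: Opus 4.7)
The plan is to argue by a parity-based augmentation. I will show that an arbitrary maximum weight connected matching $M$ can always be enlarged, by only adding edges and without decreasing its weight, so as to saturate any prescribed unsaturated articulation $v$; iterating this over the finitely many articulations of $G$ then yields the desired $M$. The non-negativity of the weights is what lets us freely add edges and invoke maximality to force the weight to stay the same.

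Fix a maximum weight connected matching $M$ and an articulation $v \notin V(M)$. If $V(M) = \emptyset$, every edge must have weight $0$ and any single edge $vx$ incident to $v$ already provides a maximum weight connected matching saturating $v$; so assume $V(M) \neq \emptyset$. Since $G$ is connected and $v$ is an articulation, $G - v$ has components $A, B_1, \ldots, B_k$ with $k \geq 1$ and $v$ has at least one neighbor in each. Because $G[V(M)]$ is connected and $v \notin V(M)$, $V(M)$ lies in a single component, say $A$; fix any other component $B = B_1$ and a (necessarily unsaturated) neighbor $x$ of $v$ in $B$. Let $P : v = p_0, p_1, \ldots, p_t = u$ be a shortest path in $G$ from $v$ to $V(M)$; minimality of $P$ forces $p_1, \ldots, p_{t-1} \in A \setminus V(M)$, hence unsaturated.

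The augmentation splits on the parity of $t$. If $t = 2s$ is even, take $M' = M \cup \{p_0 p_1, p_2 p_3, \ldots, p_{2s-2} p_{2s-1}\}$; the added edges are pairwise disjoint with unsaturated endpoints, and $G[V(M')]$ is connected because $G[V(M)]$ already is and the vertices $p_0, \ldots, p_{2s-1}$ attach to it through the edge $p_{2s-1} u$. If $t = 2s+1$ is odd, take $M' = M \cup \{vx\} \cup \{p_1 p_2, p_3 p_4, \ldots, p_{2s-1} p_{2s}\}$, the second set being empty when $s = 0$; since $x$ lies in $B$ and is therefore disjoint from $A$ and from all $p_i$, the added set is again a matching, and $G[V(M')]$ is connected via the path $x - v - p_1 - \cdots - p_{2s} - u$. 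In both cases $M'$ is a connected matching saturating $v$ with $w(M') \geq w(M)$, so maximality of $M$ forces $w(M') = w(M)$. Since only edges were added, every previously saturated vertex remains saturated, and iterating over the finitely many unsaturated articulations yields the lemma.

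The subtle step, and the only place where both the non-negativity of weights and the articulation hypothesis are genuinely exploited, is the odd case: augmenting purely along the shortest path cannot simultaneously avoid conflicting with the existing match at $u$ and achieve the right parity to include $v$, so one must pair $v$ with a vertex outside $A$; that such a vertex exists follows precisely from $v$ being an articulation of a connected graph, which forces $v$ to have neighbors in every component of $G - v$.
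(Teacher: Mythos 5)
Your proof is correct, and it takes a somewhat different route from the paper's. The paper argues locally around $v$: it notes that all of $M$ lies in one component $C_i$ of $G-v$, and then splits on whether $v$ has an unsaturated neighbor $u$ in $C_i$ (add the single edge $uv$, after claiming $v$ is adjacent to a saturated vertex) or all such neighbors are saturated (add a single edge $vw$ to a neighbor $w$ in a different component); the situation where $v$ is not adjacent to any saturated vertex is dispatched in one parenthetical sentence about alternately adding the edges of a path from $v$ to the matching. You instead organize the whole augmentation globally along a shortest path from $v$ to $V(M)$ and split on the parity of its length, invoking the articulation hypothesis only in the odd case to match $v$ to an (unsaturated) neighbor $x$ in another component of $G-v$. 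Both arguments share the same engine --- add edges only, use non-negativity plus maximality to conclude the weight is unchanged, and use the articulation to supply a partner for $v$ outside the matching's component --- but your version is more complete where the paper is terse: when $v$ is at distance at least $2$ from $V(M)$, a pure alternating-path addition fails exactly when the distance is odd (the last added vertex is then not adjacent to any saturated vertex, so connectivity breaks), and your cross-component edge $vx$ is precisely the fix for that parity obstruction, whereas the paper's sketch does not address it. The paper's proof buys brevity (one added edge in each main case); yours buys a uniform, fully verifiable augmentation and makes explicit that the articulation property is needed only in the odd case. One cosmetic remark: the fact that the internal path vertices lie in the component $A$ containing $V(M)$ follows from $v$ being a cut vertex (a path avoiding $v$ cannot change components of $G-v$), not from minimality of the path; minimality gives only that they are unsaturated. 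Both facts are what you actually use, so this is a matter of attribution, not correctness.
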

\begin{proof}
Suppose that $v$ is an articulation that is not saturated by $M$. Also, denote $C = \{ C_1, C_2,\ldots, C_s \}$ by the connected components of $S = G - v$. We know that, by definition, $|C| \geq 2$. Note that the edges of $M$ have to be contained in exactly one component $C_i$ of $S$, because, otherwise, $G[M]$ would be disconnected.

We show that it is possible to increment the size of $M$ such that the weight is preserved or increased. Consider the component $C_j \in S$, $i \neq j$. Let $U$ be $V(C_i) \cap N(v)$ and $W$ be $V(C_j) \cap N(v)$. We consider two cases.

For the first case, there exists a vertex $u \in U$ not saturated. Observe that $v$ is adjacent to some saturated vertex, because, otherwise, it would be possible to add to $M$, alternatively, the edges of a path starting on $v$ and ending on a saturated vertex. Therefore, the matching $M \cup uv$ would still be connected.

For the second case, all vertices of $U$ are saturated. Still, it is possible to add to $M$ the edge $vw$, $w \in W$.

In both cases, $M$ can be incremented and its weight is not decreased, which implies that there is a maximum weight connected matching such that $v$ is saturated.
\end{proof}

Without loss of generality, by Lemma~\ref{lem:wcm-chordal-art}, note that a maximum weight matching in a chordal graph saturates all articulations. This fact will be used later in our proof.

For the reduction, from a chordal graph $G=(V,E)$, we build an input graph $G_p=(V_p,E_p)$ for the \pname{Maximum Weight Perfect Matching}, built by the following rules.

\begin{enumerate}[(I)]
    \item Set $V_p = V$. Also, if $|V|$ is odd, add a vertex $h$ to $V_p$.
    \item Set $E_p = E$. Moreover, for each pair of vertices $v_1,v_2 \in V$, if $v_1v_2 \notin E$, add the edge $v_1v_2$ having weight $0$ to $E_p$.
\end{enumerate}

In Lemma~\ref{lem:wcm-chordal-perfect}, we show that this graph can indeed be used for the \pname{Maximum Weight Perfect Matching} input, as it has a perfect matching. In the following, Lemma~\ref{lem:wcm-chordal-volta} shows that the answer of this instance can be used to obtain a maximum weight connected matching in linear time. Finally, Theorem~\ref{teo:wcm-chordal-poly} concludes the complexity of \pname{Maximum Weight Connected Matching} for chordal graphs.

\begin{lemma}\label{lem:wcm-chordal-perfect}
The graph $G_p$ has a perfect matching.
\end{lemma}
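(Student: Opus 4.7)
The plan is to observe that the construction forces $G_p$ to be (essentially) a complete graph on an even number of vertices, from which a perfect matching follows trivially.

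First, I would verify that $|V_p|$ is even by splitting on the parity of $|V|$. If $|V|$ is even, then by rule (I) no auxiliary vertex is added, so $V_p = V$ and $|V_p|$ is even. If $|V|$ is odd, then rule (I) adds the single vertex $h$, giving $|V_p| = |V|+1$, which is even. Hence in either case $|V_p|$ is even, so a perfect matching is at least not ruled out by parity.

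Next, I would show that $G_p$ contains every possible edge among vertices of $V_p$ (with edges from $E$ keeping their original weight and all others receiving weight $0$). Rule (II) inserts an edge, of weight $0$, between every pair of non-adjacent vertices of $V_p$, so the induced subgraph on $V_p$ is the complete graph $K_{|V_p|}$. (If one reads rule (II) narrowly as applying only to pairs in $V$, the same conclusion holds after also adjoining to $h$ a weight-$0$ edge to every vertex of $V$, since otherwise $h$ would be isolated and the reduction could not proceed; this is clearly the intended interpretation.)

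Finally, I would conclude by recalling that $K_{2n}$ has a perfect matching for every $n \ge 1$: simply pair up the vertices in any arbitrary way. Since $G_p$ contains $K_{|V_p|}$ as a spanning subgraph and $|V_p|$ is even, $G_p$ admits a perfect matching, completing the proof. The only conceptual step that needs care is confirming that $h$ is actually incident to edges in $G_p$, but this is immediate under the natural reading of the construction, so I do not anticipate any obstacle.
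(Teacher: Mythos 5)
Your proof stands or falls on the claim that $G_p$ is a complete graph on an even vertex set, and that is where it diverges from what the lemma is really asserting. It is true that rule~(II), read literally, fills in every non-edge between pairs of $V$; but that reading cannot be the intended one, because the rest of the section contradicts it: the proof of Lemma~\ref{lem:wcm-chordal-volta} explicitly uses the fact that every edge of $G_p$ incident to an articulation of $G$ lies in $E(G)$ (this is how it concludes that articulations are saturated by $M = M_p \cap E(G)$), and the paper's own proof of Lemma~\ref{lem:wcm-chordal-perfect} treats the existence of a perfect matching as a genuinely nontrivial fact rather than pairing vertices arbitrarily. In the graph the authors actually work with, the weight-$0$ fill edges avoid the articulation vertices, so $G_p$ is \emph{not} complete, and your one-line pairing argument fails exactly at the articulations: each of them can only be matched along an original edge of $G$, and you give no reason why all articulations can be matched simultaneously in this restricted way while leaving an even, pairwise-adjacent remainder. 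Your treatment of $h$ has the same character: as written the construction really does leave $h$ isolated, and your patch (``join $h$ to every vertex; this is clearly intended'') is an unstated assumption — and in fact attaching $h$ to articulations would destroy the property Lemma~\ref{lem:wcm-chordal-volta} relies on, so it is not clearly the intended reading. That the literal construction makes the lemma either trivial ($|V|$ even) or false ($h$ isolated) is itself a signal that the literal reading is not the right one.

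The missing idea is precisely the content of the paper's proof: root the block-cutpoint tree of $G$ and traverse it in preorder; whenever an unsaturated articulation $v$ is reached, match it to an articulation lying below one of its child blocks or, failing that, to a vertex of a child block — in either case along an edge available for $v$ — which is always possible because an articulation has at least one child block and none of its descendants has been saturated yet. Only after all articulations are disposed of does the ``clique'' argument enter: the remaining unmatched vertices are non-articulations (plus possibly $h$), these do induce a clique in $G_p$ thanks to the weight-$0$ fill edges, and since $|V_p|$ is even their number is even, so they can be paired arbitrarily. Your parity check on $|V_p|$ is the only ingredient the two arguments share; the substantive step — simultaneously saturating all articulations using edges of $G$ — is absent from your proposal.
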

\begin{proof}
Let $T$ be a block-cutpoint tree of $G$ rooted in a vertex denoted by $r$. Also, denote $A$ as the set of articulations of $G$. Next, we iteratively build a perfect matching $M$ in $G_p$. 

Starting from $r$, we traverse $T^r$ in preorder and, when visiting a non saturated vertex $v$ that is contained in $A$, we saturate $v$ by the following way. If $v$ has a non leaf child $c$, we saturate $v$ with any child of $c$ contained in $A$. Otherwise, let $c$ be one of $v$'s children. Then, we saturate $v$ with any vertex $u$ contained in the block of $G$ represented by $c$.

As an example, in Figure~\ref{fig:wcm-chordal-perfect-algo-fig}, consider the graph and its block-cutpoint tree rooted in $B_4$. The saturated edges by this procedure that are incident to articulations, in order, can be $c_3c_2$, $c_1v$, $c_5c_7$ and $c_4c_6$.

\begin{figure}
    \centering
    \subfloat[]{\includegraphics[width=0.5\textwidth]{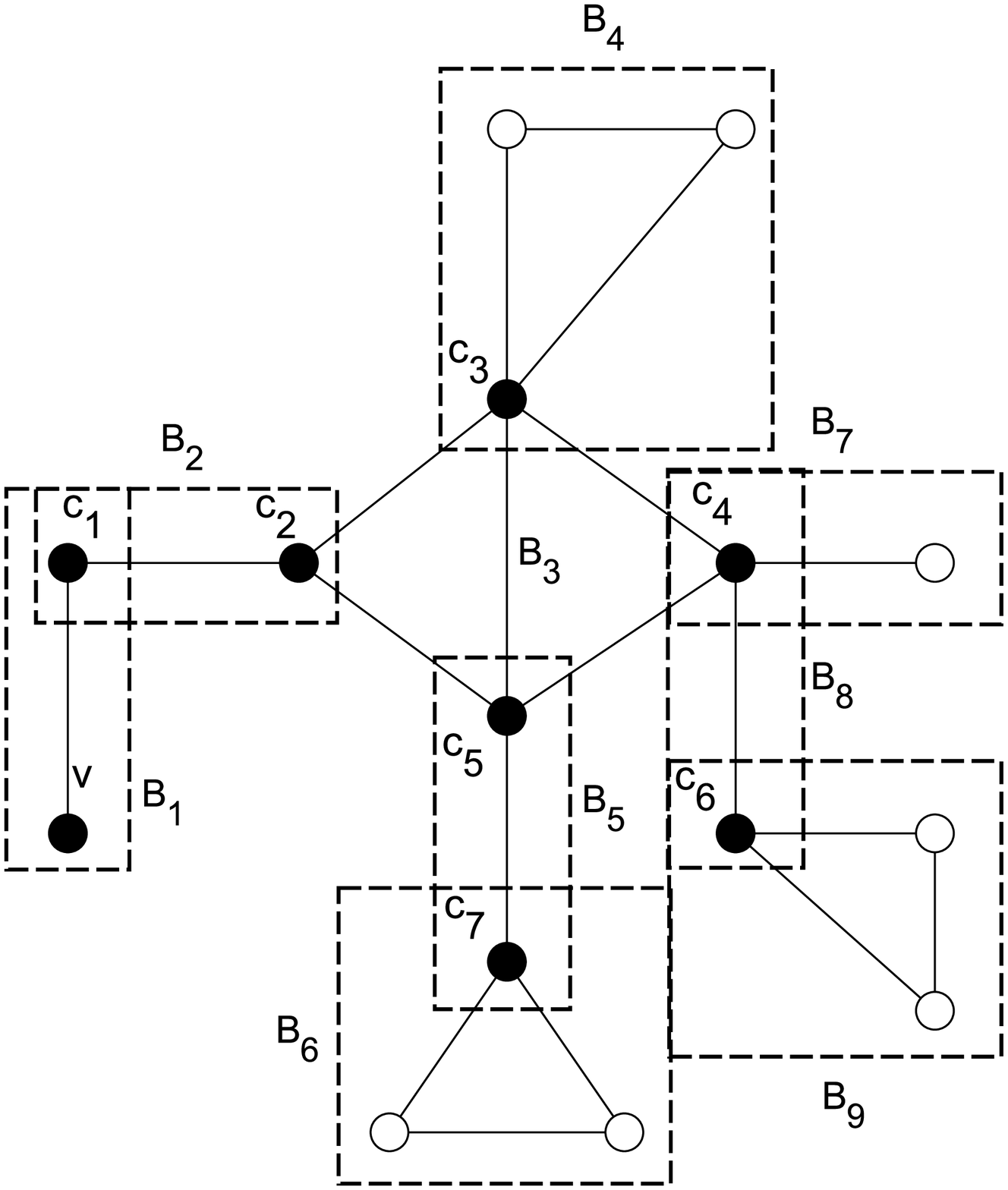}\label{fig:wcm-chordal-perfect-algo}}
    \hspace{0.1\textwidth}
    \subfloat[]{\includegraphics[width=0.3\textwidth]{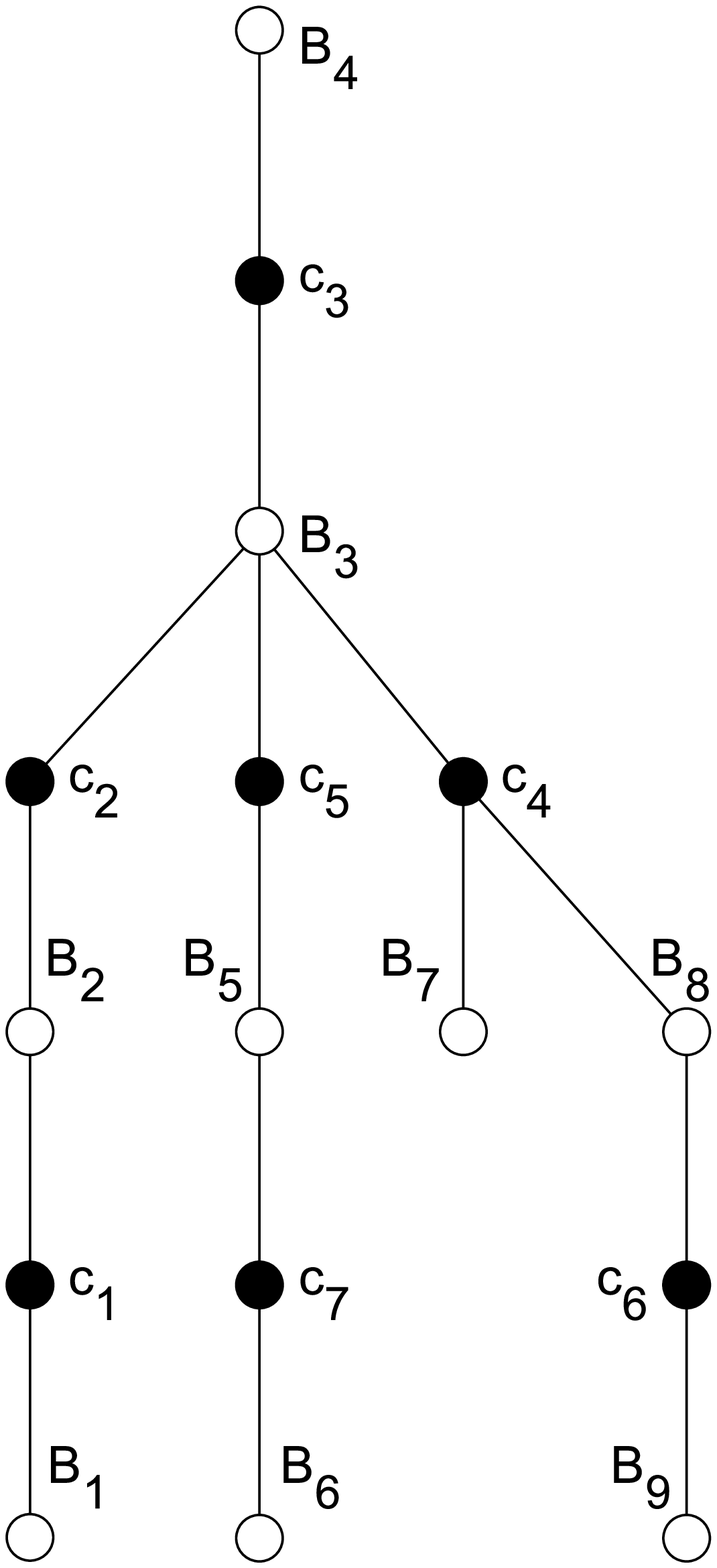}\label{fig:wcm-chordal-perfect-algo-cutpoint}}
    
    \caption{A chordal graph in (a) and its block-cutpoint tree in (b).}%
    \label{fig:wcm-chordal-perfect-algo-fig}
\end{figure}

Note that, when visiting a non saturated vertex $v \in A$, none of its descendant vertices are saturated. Moreover, $v$ is not a leaf, because, as it is an articulation, it is contained in at least two blocks of $G$. At least one of these blocks is a child of $v$ in $T^r$. Thus, there will always be at least one vertex to saturate with $v$.

Finally, we saturate all the remaining vertices of $V_p$, which induce a clique, by any set of disjoint edges of $E_p$. Observe that this procedure saturates all vertices and, therefore, $G_p$ has a perfect matching.
\end{proof}

\begin{lemma}\label{lem:wcm-chordal-volta}
Given a maximum weight perfect matching $M_p$ in $G_p$, we can obtain, in linear time, a maximum weight connected matching $M$ of $G$.
\end{lemma}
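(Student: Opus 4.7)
The plan is to set $M := M_p \cap E(G)$ and prove that $M$ is a maximum weight connected matching of $G$. First, I would establish that $w(M) = w(M_p)$, because every edge of $M_p \setminus M$ was added by rule~(II) of the construction and thus has weight $0$. Next, I would show that $M$ dominates every matching of $G$ in weight: any matching $N$ of $G$ can be extended to a perfect matching $N_p$ of $G_p$ by padding the unsaturated vertices of $V_p$ with weight-$0$ edges (possible because in $G_p$ every pair of vertices of $V_p$ is joined by an edge), giving $w(N) = w(N_p) \le w(M_p) = w(M)$. In particular, $w(M)$ upper-bounds the weight of any connected matching of $G$.

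The crucial step is showing that $M$ itself is connected in $G$. Here I would invoke the block-cutpoint tree $T$ of $G$. Inside each block $B$ (which is chordal and $2$-connected), $M \cap E(B)$ is a maximum weight matching of $B$, and by Proposition~\ref{prop:wcm-choral-sep} (possibly after weight-preserving swaps supported by the cliques formed by minimal separators) each such separator has at most one unsaturated vertex, so $G[V(M) \cap V(B)]$ is connected whenever $M$ touches $B$. To stitch the blocks together, I would apply the exchange argument behind Lemma~\ref{lem:wcm-chordal-art} at each articulation of $G$ that sits between two blocks touched by $M$: since such an articulation lies in a clique-separator inside each adjacent block, it can be saturated via a weight-preserving swap, linking the block-wise connected pieces of $G[V(M)]$ into a single connected induced subgraph.

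The extraction then runs in linear time: a single pass over $M_p$ builds $M$, and a postorder traversal of $T$ carries out the block-local and articulation-level swaps, each touching only a constant amount of incidence information per vertex processed.

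The main obstacle I anticipate is proving compatibility of the two swap families — guaranteeing that the Proposition~\ref{prop:wcm-choral-sep} adjustments inside a block cannot later obstruct a Lemma~\ref{lem:wcm-chordal-art} swap at a shared articulation, and conversely, while maintaining the identity $w(M) = w(M_p)$ throughout. I expect this to follow by an inductive argument along $T$ that exploits the clique structure of minimal separators in chordal graphs, but it is the delicate part of the argument and may force a case analysis at each articulation distinguishing whether the articulation is saturated by an edge internal to a block or by a swap reaching across the separator.
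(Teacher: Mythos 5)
Your first two steps are fine: $M := M_p \cap E(G)$ has $w(M)=w(M_p)$, and the padding argument showing that every matching of $G$ (connected or not) has weight at most $w(M_p)$ is correct and even a bit cleaner than the paper's maximality argument (which detours through Lemma~\ref{lem:wcm-chordal-art}). The genuine gap is in the connectivity step, which is the heart of the lemma, and your plan for it diverges from the paper's and does not go through as sketched. Concretely: (i) the claim that $M \cap E(B)$ is a maximum weight matching of each block $B$ is false in general --- an articulation can be matched outside $B$ in a way that forces the restriction to $B$ to be far from optimal (e.g.\ a triangle $abc$ with $w(ab)=1$, $w(ac)=w(bc)=10$ and the articulation $c$ matched to an outside vertex by a weight-$100$ edge); (ii) Lemma~\ref{lem:wcm-chordal-art} cannot be used as a stitching device, because its proof assumes $G[M]$ is already connected (all edges of $M$ lie in one component of $G-v$), which is exactly what fails when you need to join pieces living in different components of $G-v$; and (iii) you yourself flag the compatibility of the block-local swaps with the articulation swaps as unresolved, so the proposal is a plan with an acknowledged hole rather than a proof.

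The paper avoids all of this by never modifying $M$ through block-tree surgery. After setting $M = M_p \cap E(G)$ it additionally saturates a maximal set of weight-$0$ edges of $G$ between unsaturated vertices, and then proves connectivity directly by an exchange on $M_p$: if two vertices $v_1,v_2$ of a (clique) minimal separator of $G$ were both unsaturated by $M$, their $M_p$-partners come from the added weight-$0$ edges and $w(v_1v_2)>0$, so replacing $\{v_1u_1, v_2u_2\}$ by $\{v_1v_2, u_1u_2\}$ would give a heavier perfect matching of $G_p$, contradicting the optimality of $M_p$. Hence every minimal separator (and every articulation) meets $V(M)$, which yields connectivity with no swaps on $M$ at all. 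If you want to salvage your route, you would have to supply the missing induction along the block-cutpoint tree and repair claims (i) and (ii); the paper's optimality-of-$M_p$ exchange is the missing idea that makes this unnecessary.
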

\begin{proof}
We build a maximum weight connected matching $M$ in $G$.

First, (i) we set $M = M_p \cap E(G)$. At this point, there may be vertices not saturated and connected by weight $0$ edges. Finally, (ii) we saturate a maximal set of those vertices. Note that $w(M) = w(M_p)$ and this can be done in linear time.

Now, we prove that $M$ is connected because, in each separator $S$ of $G$, there is at least one saturated vertex. Let's suppose this is not true and, then, let $v_1,v_2 \in S$ be two vertices not saturated. As they are not saturated in $M$, its incident saturated edges in $M_p$ are in $E(G_p) \setminus E(G)$, whose weights are zero. Moreover, $w(v_1v_2) > 0$, since both $v_1$ and $v_2$ were not saturated in (ii). Let $v_1u_1$ and $v_2u_2$ be the edges that saturate the vertices $v_1$ and $v_2$ in $M_p$. If we change the edges $\{v_1u_1, v_2u_2\}$ by $\{v_1v_2, u_1u_2\}$, the matching would still be perfect, but its weight would be greater than $M_p$. This is a contradiction and, then, $M_p$ can not be a maximum weight perfect matching.

Therefore, for each separator $S$ of $G$, there is at least one saturated vertex. Also, note that all articulations of $G$ are saturated by $M$, since all edges incident to articulations are in $E(G_p) \cap E(G)$. For this reason, each separator has at least one saturated vertex by $M$ and, thus, $M$ is connected.

Now, we show that $M$ is maximum. Suppose there is a connected matching $M'$ with greater weight than $w(M)$. Moreover, based on Lemma~\ref{lem:wcm-chordal-art}, we assume that $M'$ saturates all articulations.

Now, we build a perfect matching $M_p'$ from $M'$. First, set $M_p' = M'$. Since all articulations of $G$ are saturated by $M'$, then, all the remaining vertices induce a clique. Finally, we saturate all those vertices by any disjoint set of edges.

Observe that $w(M_p') \geq w(M_p) > w(M_p) = w(M)$. This is a contradiction, because $M_p$ is a maximum weight perfect matching in $G_p$. Therefore the matching $M$ is a maximum weight connected matching in $G$ and it can be obtained in linear time.
\end{proof}

\begin{theorem}\label{teo:wcm-chordal-poly}
\pname{Maximum Weight Connected Matching} for chordal graphs whose edge weights are all non negative is in {\P} and its complexity is the same as \pname{Maximum Weight Perfect Matching}.
\end{theorem}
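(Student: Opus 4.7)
The plan is to chain together the construction and the two preceding lemmas into a polynomial-time Turing reduction from \pname{Maximum Weight Connected Matching} on chordal graphs with non-negative weights to \pname{Maximum Weight Perfect Matching}. Given a chordal graph $G$ with a non-negative weight function, I would first construct the auxiliary graph $G_p$ exactly as specified in rules (I) and (II): keeping $V$ (plus a dummy vertex $h$ if $|V|$ is odd) and adding, for every non-adjacent pair of vertices, an edge of weight $0$. This construction is trivially doable in $\bigO{|V|^2}$ time.

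Next, I would invoke Lemma~\ref{lem:wcm-chordal-perfect} to certify that $G_p$ admits a perfect matching, so that $G_p$ is a legitimate input for \pname{Maximum Weight Perfect Matching}. I would then run a polynomial algorithm for \pname{Maximum Weight Perfect Matching} (e.g.\ Edmonds' algorithm or one of its refinements) on $G_p$ to obtain an optimal perfect matching $M_p$. Finally, I would apply the procedure described in Lemma~\ref{lem:wcm-chordal-volta}: set $M := M_p \cap E(G)$ and then greedily saturate a maximal set of pairs of still non-saturated vertices lying in common separators, obtaining in linear time a connected matching $M$ of $G$ with $w(M) = w(M_p)$. Lemma~\ref{lem:wcm-chordal-volta} already guarantees that this $M$ is in fact a maximum weight connected matching, so correctness is inherited directly.

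For the complexity claim, the dominating cost is the single call to the \pname{Maximum Weight Perfect Matching} oracle on $G_p$, since the construction of $G_p$ and the post-processing into $M$ are polynomial (indeed linear, once $G_p$ is available). In particular, any improvement in the complexity of \pname{Maximum Weight Perfect Matching} immediately transfers to our problem, justifying the statement that the two problems have the same complexity on this input class. The only substantive obstacles — arguing that $G_p$ has a perfect matching, and arguing that an optimal $M_p$ in $G_p$ yields an optimal connected matching in $G$ after restriction to $E(G)$ — are already handled by Lemmas~\ref{lem:wcm-chordal-perfect} and~\ref{lem:wcm-chordal-volta}, which in turn rely on the structural facts that in a chordal graph all minimal separators are cliques and that, by Lemma~\ref{lem:wcm-chordal-art}, some maximum weight connected matching saturates all articulations, enabling the completion to a perfect matching with zero-weight filler edges without loss.
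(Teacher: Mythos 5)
Your proposal is correct and follows essentially the same route as the paper: build $G_p$ via rules (I)--(II), use Lemma~\ref{lem:wcm-chordal-perfect} to ensure a perfect matching exists, solve \pname{Maximum Weight Perfect Matching} on $G_p$, and recover an optimal connected matching of $G$ via Lemma~\ref{lem:wcm-chordal-volta}, with the oracle call dominating the running time. The only cosmetic difference is that you bound the construction of $G_p$ by $\bigO{|V|^2}$ rather than the paper's (looser) claim of linear time, which does not affect the conclusion.
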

\begin{proof}
To solve \pname{Maximum Weight Connected Matching} for a input graph $G$, we can build, in linear time, the graph $G_p$ and use it as an input of a \pname{Maximum Weight Perfect Matching} algorithm. Given this answer, we can obtain a maximum weight connected matching of $G$ in linear time, as stated in Lemma~\ref{lem:wcm-chordal-volta}. Therefore, we can solve \pname{Maximum Weight Connected Matching} with the same complexity as \pname{Maximum Weight Perfect Matching}. 
\end{proof}

\subsubsection{Example}

Let the input graph $G$ of \pname{Maximum Weight Connected Matching} be the one illustrated in Figure~\ref{fig:wcm-chordal-poly-mm}, where we show a maximum weight matching, whose weight is $44$. Observe that this matching is not connected as $f$ is not saturated.

In Figure~\ref{wcm-chordal-poly-mwpm}, we show the subgraph induced of $G_p$ by a maximum weight perfect matching, whose weight is $42$. This matching can be converted to a maximum weight connected matching in $G$ having the same weight by removing the edge $lf$. 

\begin{figure}
    \centering
    \subfloat[]{\includegraphics[width=0.5\textwidth]{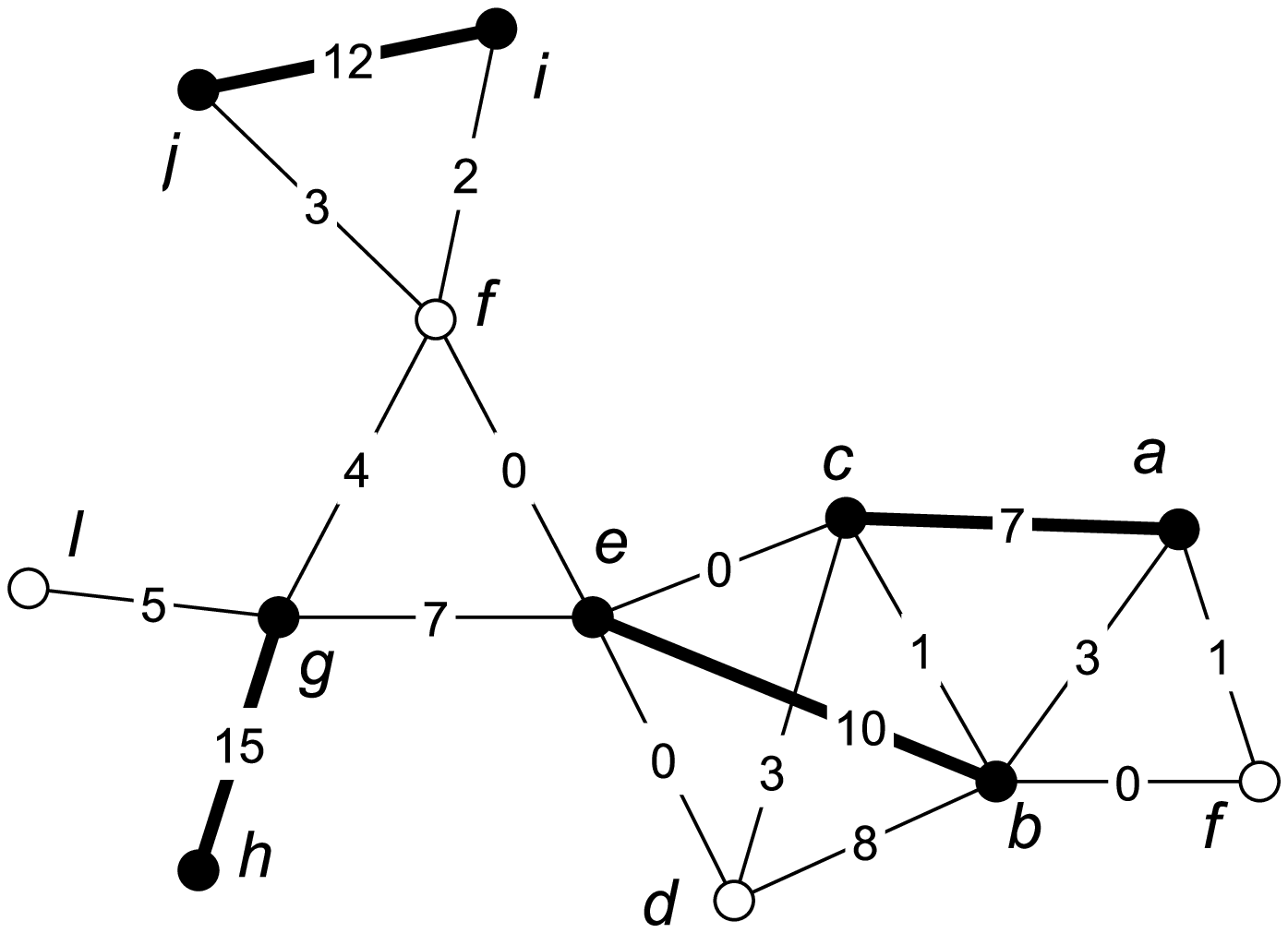}\label{fig:wcm-chordal-poly-mm}}
    
    \subfloat[]{\includegraphics[width=0.50\textwidth]{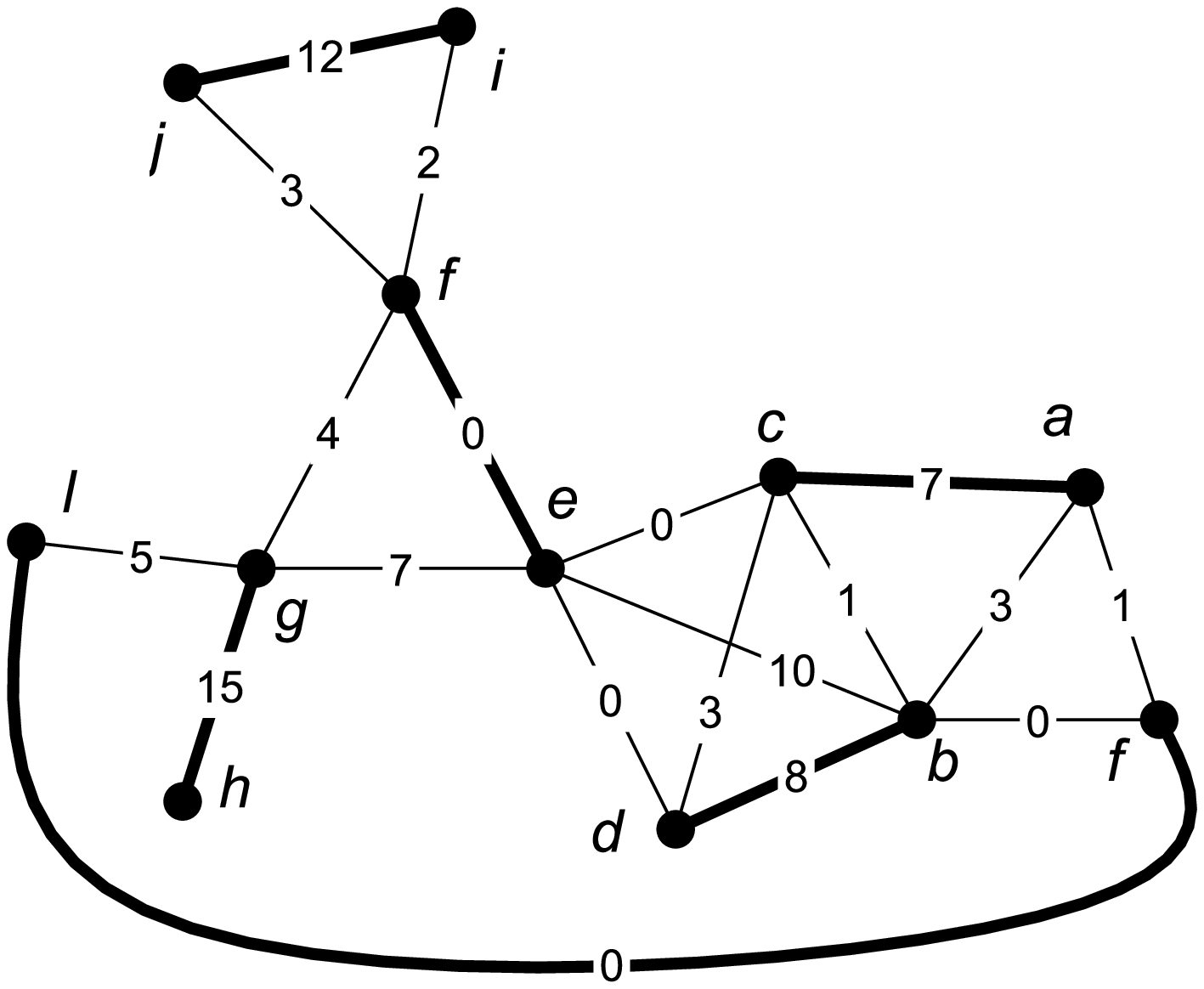}\label{wcm-chordal-poly-mwpm}}
    \subfloat[]{\includegraphics[width=0.50\textwidth]{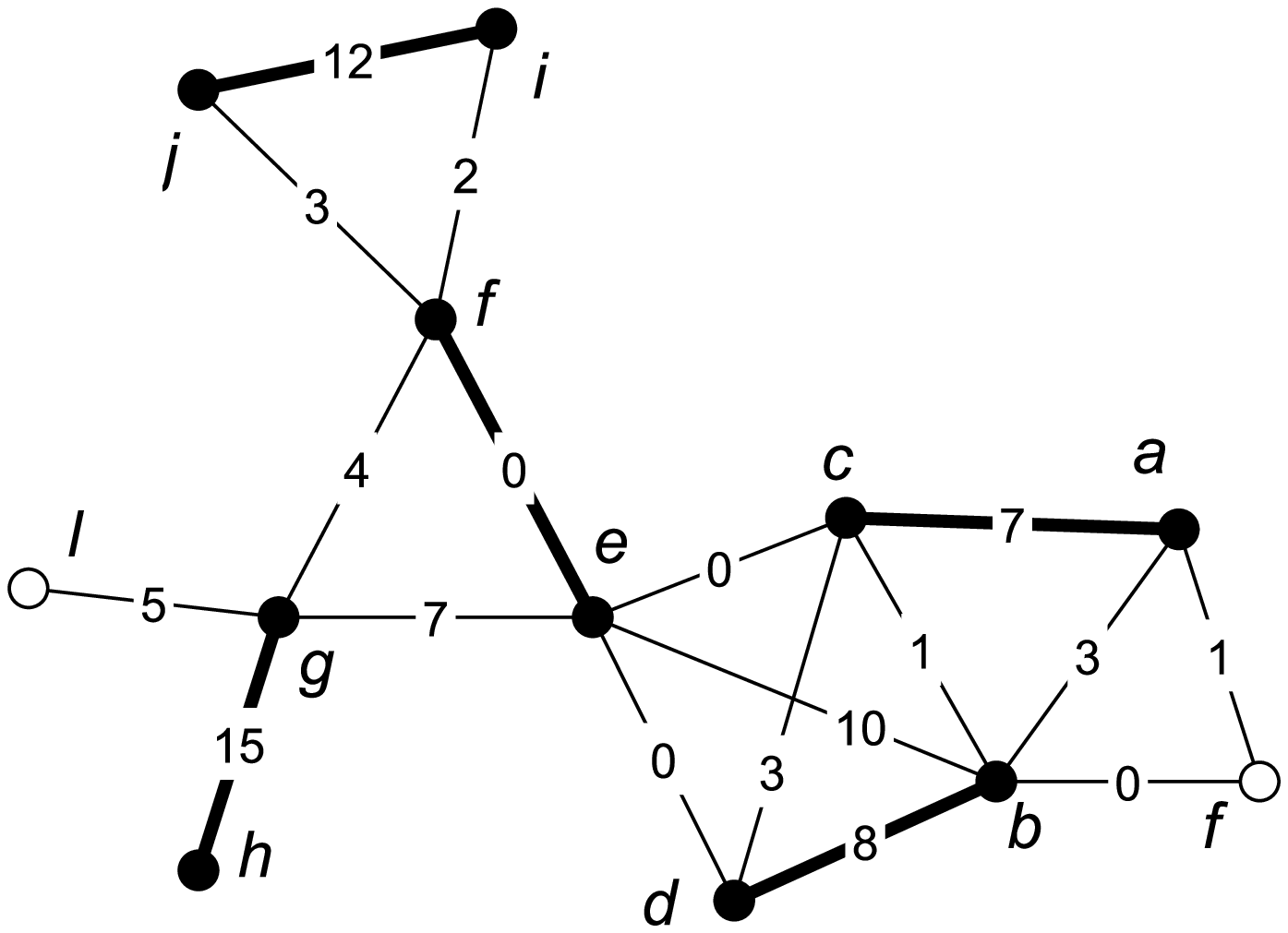}\label{fig:wcm-chordal-poly-mcm}}
    
    \caption{A chordal graph $G$ and its maximum weight matching in (a), a maximum weight perfect matching of $G_p$ in (b) and its corresponding maximum weight connected matching of $G$ in (c).}%
    \label{fig:wcm-chordal-poly-transformation}
\end{figure}


\section{Kernelization}\label{sec:kernel}

In this section, we present our kernelization results. In particular, we show that \pname{Weighted Connected Matching} on parameterized by vertex cover does not admit a polynomial kernel, unless $\NP \subseteq \coNP/\poly$, even if the input is restricted to bipartite graphs of bounded diameter and the allowed weights are contained in $\{0, 1\}$.

We prove our result through an OR-cross-composition~\cite{cross_composition} from the \pname{3SAT} problem, and are heavily inspired by the proof described in Section~\ref{sec:bipartite_complexity}.
Formally let, $\mathcal{H} = \{(X_1, \mathcal{C}_1), \dots, (X_t, \mathcal{C}_t)\}$ be the $t$ \pname{3SAT} instances such that $X_i = X = \{x_1, \dots, x_n\}$ for every $i \in [t]$.
Also, let $\mathcal{C} = \bigcup_{i \in [t]} \mathcal{C}_i$.
Finally, let $(G, k)$ be the \pname{Weighted Connected Matching} instance we are going to build.
We begin our construction by adding to $G$ a pair of vertices $c_j,c_j'$ for each $C_j \in \mathcal{C}$ and a unit weight edge between them.
Then, for each $x_i \in X$, we add vertices $x_i^-, x_i^*, x_i^+$ and edges $x_i^-x_i^*, x_i^*x_i^+$, each of weight 1. Now, for each $C_j \in \mathcal{C}$ and $i \in [n]$, if $x_i \in C_j$, we add the 0-weight edge $x_i^+c_j$ to $G$, otherwise, if $\overline{x_i} \in C_j$, we add the weight 0 edge $x_i^-c_j$.
We conclude this first part of the construction by adding a pair of vertices $h,h'$ to $G$, making them adjacent with an edge of weight 1, and adding an edge of weight 0 between $h$ and $x_i$ for every $x_i \in X$.
At this point, we have an extremely similar graph to the one constructed in Section~\ref{sec:bipartite_complexity}.

For the next part of the construction, we add a copy of $K_{1,t}$, where the vertex on the smaller side is labeled $q$ and, the vertices on the other side are each assigned a unique label from $Y = \{y_1, \dots, y_t\}$.
Now, for each $y_\ell \in Y$ and $C_j \in \mathcal{C} \setminus \mathcal{C}_\ell$, we add the 0-weight edges $c_j'y_\ell$ and $hy_\ell$.
Finally, we set $k = |\mathcal{C}| + |\mathcal{X}| + 2$.
Note that $|V(G)| = 3|X| + 2|C| + |Y| + 3 \leq 3|X| + 2|X|^3 |Y| + 3$, which implies that $V(G) \setminus Y$ is a vertex cover of $G$ of size $\bigO{|X|^3}$, as required by the cross-composition framework.
Moreover, note that $G$ is bipartite, as we can partition it as follows: $L = {q, h} \cup \{x_i^+, x_i^- \mid i \in [n]\} \cup \{c_j' \mid C_j \in \mathcal{C}\}$ and $R = V(G) \setminus L$, where both $L$ and $R$ are independent sets.

\begin{lemma}
    \label{lem:no_kernel_vc_forward}
    If $(X, \mathcal{C}_\ell)$ admits a solution, then $(G, k)$ also admits a solution.
\end{lemma}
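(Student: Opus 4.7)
The plan is to lift a satisfying assignment $\alpha$ of $(X,\mathcal{C}_\ell)$ into a connected matching $M$ of total weight exactly $k = |\mathcal{C}| + |X| + 2$, following the template of Lemma~\ref{lemma:wcm-bip-ida} and extending it so as to incorporate the vertices $q$ and the star gadget on $Y$. I would take $M$ to be the union of: (i) the edge $hh'$; (ii) for every clause $C_j \in \mathcal{C}$, the edge $c_jc_j'$; (iii) for every variable $x_i \in X$, the edge $x_i^*x_i^+$ if $\alpha(x_i) = \textrm{true}$, and $x_i^*x_i^-$ otherwise; (iv) the edge $qy_\ell$ of the $K_{1,t}$ gadget. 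Each vertex of $G$ appears as an endpoint of at most one of these edges, so $M$ is a matching; summing the edge weights (each equal to $1$) yields $w(M) = 1 + |\mathcal{C}| + |X| + 1 = k$, as required.

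The remaining and main step is to verify that $G[M]$ is connected. I would designate $h$ as a hub and exhibit, for each vertex of $V(M)$, a path to $h$ inside $G[V(M)]$. The vertex $h'$ is adjacent to $h$ through $hh'$. Each $x_i^*$ is adjacent to $h$ by the weight-$0$ edge added in the first phase of the construction, and the selected literal vertex $x_i^+$ or $x_i^-$ is then reached via the matching edge incident to $x_i^*$. The vertex $y_\ell$ is adjacent to $h$ by construction (since $\mathcal{C} \setminus \mathcal{C}_\ell$ is nonempty when $t \geq 2$, which is the relevant regime for cross-compositions), and $q$ is adjacent to $y_\ell$ through the star $K_{1,t}$.

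The subtle point I expect to need the most care is attaching the pairs $c_j, c_j'$ to the hub, and it splits into two cases that together showcase why the $Y$-gadget is necessary. If $C_j \in \mathcal{C}_\ell$, then since $\alpha$ satisfies $C_j$ some literal of $C_j$ evaluates to true, and the corresponding literal vertex $x_i^\pm$ is simultaneously saturated by $M$ and adjacent to $c_j$ in $G$, so $c_j$ (and hence $c_j'$ via the matching edge) is reached through that literal vertex and its $x_i^*$--$h$ path. If $C_j \notin \mathcal{C}_\ell$, then the $0$-weight edge $c_j'y_\ell$, present in $G$ precisely because $C_j \in \mathcal{C} \setminus \mathcal{C}_\ell$, connects $c_j'$ to the already-reached $y_\ell$, and the matching edge $c_jc_j'$ transports $c_j$ with it.

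Assembling these observations, every vertex of $V(M)$ lies in the same connected component of $G[V(M)]$, so $M$ is a connected matching of weight $k$ in $G$, witnessing that $(G,k)$ is a yes-instance. The only nontrivial obstacle is the clauses outside $\mathcal{C}_\ell$: without the edges $c_j'y_\ell$ these vertices would be stranded, which is exactly where the cross-composition architecture routes the unused instances through $y_\ell$ and $q$.
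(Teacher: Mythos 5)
Your proposal is correct and matches the paper's own proof essentially step for step: the same matching (one literal edge per variable chosen by the assignment, all edges $c_jc_j'$, plus $hh'$ and $qy_\ell$) and the same connectivity argument splitting clauses into $\mathcal{C}_\ell$ (reached through a satisfied, saturated literal vertex adjacent to $c_j$) versus $\mathcal{C}\setminus\mathcal{C}_\ell$ (reached through the $0$-weight edge $c_j'y_\ell$). Your extra remark that the edge $hy_\ell$ exists only when $\mathcal{C}\setminus\mathcal{C}_\ell\neq\emptyset$ is a slightly more careful reading than the paper's, but harmless and in the same spirit.
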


\begin{proof}
    Let $\varphi$ be a satisfying assignment for $(X, \mathcal{C}_\ell)$.
    We build the solution M for $(G, k)$ as follows.
    First, for each $x_i \in X$, if $\varphi(x_i) = 0$, add edge $x_i^-x_i^*$ to $M$, otherwise add edge $x_i^+x_i^*$ to $M$, with a total weight of $|X|$ after this step.
    Now, for each $C_j \in \mathcal{C}$, and $c_jc_j'$ to $M$, reaching $|\mathcal{C}| + |\mathcal{C}| $ weight.
    Finally, add $hh'$ and $qy_\ell$ to $M$, so now $M$ has $ |X| + |\mathcal{C}| + 2 = k$ weight.
    Note that $M$ is a matching.
    To see that it induces a connected graph, first observe that $\{q, y_\ell, h, h', x_1^*, \dots, x_,^*\} \cup \bigcup_{i \in [n]} \{x_i^-, x_i^+\} \cap M$ are all part of the same connected component $Q$.
    Moreover, for every $C_j \notin \mathcal{C}_\ell$, we have both $c_j'$ and $c_j$ also belong to $Q$ since $c_j'y_\ell \in E(G)$.
    For each $C_j \in \mathcal{C}_\ell$, suppose that $x_i \in C_j$ and $\varphi(x_i) = 1$; note that $x_i^+ \in Q$, so it holds that $c_j$ and $c_j'$ are also in $Q$, completing the proof.
\end{proof}

\begin{lemma}
    If $(G,k)$ admits a connected matching $M$ of weight at least $k$, then there is some $(X, \mathcal{C}_\ell) \in \mathcal{H}$ that admits a solution.
\end{lemma}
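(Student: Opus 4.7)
The plan is to use the maximum achievable matching weight in $G$ to pin down the structure of any weight-$k$ connected matching $M$, and then exploit the connectivity requirement to extract a satisfying assignment for a specific instance $(X,\mathcal{C}_{\ell^*})$.

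First I would catalogue the weight-$1$ edges of $G$: the $|\mathcal{C}|$ edges $c_jc_j'$, the pairs $\{x_i^-x_i^*,\, x_i^*x_i^+\}$ (sharing $x_i^*$), the edge $hh'$, and the $t$ edges $qy_\ell$ (sharing $q$). Since all remaining edges have weight $0$, every matching has weight at most $|\mathcal{C}|+|X|+1+1 = k$, so a matching $M$ of weight $k$ must contain every $c_jc_j'$, exactly one of $\{x_i^-x_i^*, x_i^*x_i^+\}$ for each $i\in[n]$, the edge $hh'$, and a unique edge $qy_{\ell^*}$ for some $\ell^*\in[t]$. I would then define $\varphi(x_i)=1$ if $x_i^+x_i^*\in M$ and $\varphi(x_i)=0$ otherwise, and argue that $\varphi$ satisfies $\mathcal{C}_{\ell^*}$.

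The heart of the argument is the connectivity step. Fix a clause $C_j\in \mathcal{C}_{\ell^*}$; since $G[V(M)]$ is connected, at least one of $c_j,c_j'$ must have a saturated neighbor outside $\{c_j,c_j'\}$. The neighbors of $c_j'$ in $G$ are exactly $c_j$ together with those $y_{\ell'}$ satisfying $C_j\notin \mathcal{C}_{\ell'}$; by construction $c_j'y_{\ell^*}$ is absent, and since $q$ is matched only to $y_{\ell^*}$, every other $y_{\ell'}$ turns out to be unsaturated. Hence $c_j$ itself must have a saturated literal neighbor, which by construction is $x_i^+$ if $x_i\in C_j$ and $x_i^-$ if $\overline{x_i}\in C_j$; in either case the corresponding literal evaluates to true under $\varphi$, so $C_j$ is satisfied.

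I expect the main technical difficulty to be the bookkeeping needed to rule out alternative routes by which $c_jc_j'$ could reach the main component of $G[V(M)]$ when $C_j\in \mathcal{C}_{\ell^*}$. This hinges on the rigidity extracted in the first step (all saturation choices are forced) together with a careful check that no weight-$0$ edge can saturate a $y_{\ell'}$ with $\ell'\neq \ell^*$ or an extra $c_j'$ without dropping below weight $k$. Once this is verified, iterating over every $C_j\in \mathcal{C}_{\ell^*}$ yields the desired satisfying assignment and completes the OR-cross-composition.
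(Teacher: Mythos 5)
Your proposal is correct and follows essentially the same route as the paper: establish that $k$ is the maximum matching weight so that all saturation choices are forced (all $c_jc_j'$, one edge per $x_i^*$, $hh'$, a unique $qy_{\ell^*}$), read off $\varphi$ from the variable gadgets, and use connectivity to force each clause of $\mathcal{C}_{\ell^*}$ to attach through a saturated literal vertex. Your local argument via the neighbors of $c_j'$ (ruling out the $y_{\ell'}$ escape route because those vertices are unsaturated and $c_j'y_{\ell^*}$ does not exist) is just a slightly more explicit rendering of the paper's "path from $c_j'$ to $q$ must pass through $x_i^+$ or $x_i^-$" step.
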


\begin{proof}
    First, note that $k$ is also the weight of a maximum weighted matching of $G$, which is achieved by picking all edges $c_jc_j'$, edge $hh'$, one edge incident to $q$, and one edge of weight one incident to each $x_i^*$.
    As such, we observe that there is one edge $qy_\ell \in M$ and, furthermore, no other $y_p \in M$, otherwise they would either be matched to $h$ or to some $c_j'$; in either case we would have $w(M) < k$, since we would be replacing an edge of weight one with one of weight zero, and neither $h'$ nor $c_j$ can be matched with other edges of larger weight.
    Moreover, this implies that each $x_i^*$ is matched to either $x_i^+$ or $x_i^-$, otherwise we would also not be able to achieve the necessary weight.
    As such, for each $x_i$, we set $\varphi(x_i) = 1$ if and only if $x_i^+$ is matched to $x_i^*$.
    Finally, note that, for each $C_j \in C_\ell$, there must be a path between $c_j'$ and $q$ passing through some $x_i^*$ and, furthermore, this path must pass through either $x_i^+$ if $x_i \in C_j$ or $x_i^-$ if $\overline{x_i} \in C_j$.
    This, in turn, implies that there is a literal of $C_j$ that evaluates to true and satisfies $C_j$.
    Consequently, every $C_j$ is satisfied and $\varphi$ is a solution to $(X, \mathcal{C}_\ell)$.
\end{proof}

Combining the two previous Lemmas and our observations and the end of the construction of $(G,k)$, we immediately obtain the our theorem.

\begin{theorem}
    \label{thm:no_kernel_vc}
    Unless $\NP \subseteq \coNP/\poly$, \pname{Weighted Connected Matching} does not admit a polynomial kernel when parameterized by vertex cover and required weight even if the input graph is bipartite and the edge weights are in $\{0, 1\}$.
\end{theorem}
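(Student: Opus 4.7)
The plan is to combine the OR-cross-composition embedded in the construction of $(G,k)$ with the two preceding lemmas and then invoke the standard kernelization barrier of Bodlaender, Jansen, and Kratsch~\cite{cross_composition}. I would proceed in three short steps.

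First, I would formally fix the polynomial equivalence relation required by the cross-composition framework: two \pname{3SAT} instances are declared equivalent when they share the same ordered variable set $X = \{x_1,\dots,x_n\}$, with all malformed inputs placed in a single trivial equivalence class. Since every $3$-clause over $X$ is one of at most $\bigO{|X|^3}$ distinct clauses, the number $t$ of pairwise distinct equivalent instances is bounded by $2^{\bigO{|X|^3}}$, and therefore $\log t = \bigO{|X|^3}$ is polynomial in the maximum input length, as the framework requires.

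Second, I would observe that the construction yields a valid OR-cross-composition. By Lemma~\ref{lem:no_kernel_vc_forward} and its converse, $(G,k)$ is a YES-instance of \pname{Weighted Connected Matching} iff some $(X,\mathcal{C}_\ell)$ is a YES-instance of \pname{3SAT}. Taking $\mathcal{C}=\bigcup_\ell \mathcal{C}_\ell$ as a set union (rather than a disjoint union) keeps $|\mathcal{C}|=\bigO{|X|^3}$, so the construction runs in polynomial time, and the parameter---the size of the vertex cover $V(G)\setminus Y$ together with the target weight $k$---is bounded by $3|X| + 2|\mathcal{C}| + 3 + k = \bigO{|X|^3}$, polynomial in the largest input. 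The previously verified facts that $G$ is bipartite with weights in $\{0,1\}$ preserve the restrictions in the theorem statement.

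Third, invoking the cross-composition framework closes the argument: a polynomial kernel for \pname{Weighted Connected Matching} under the stated restrictions would place \pname{3SAT} in $\coNP/\poly$, forcing $\NP\subseteq\coNP/\poly$. The point requiring the most care---and the only real obstacle---is the parameter bookkeeping: one must take the set union of clauses across instances so that $|\mathcal{C}|$ does not grow with $t$; otherwise the vertex cover size would scale like $t|X|^3$ and fail to be polynomial in $\max_i|x_i| + \log t$. Once this subtlety is handled, the remainder is a routine invocation of the established framework.
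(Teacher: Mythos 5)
Your proposal is correct and takes essentially the same route as the paper: the theorem is obtained by combining Lemma~\ref{lem:no_kernel_vc_forward} with its converse, noting the properties established during the construction (bipartiteness, weights in $\{0,1\}$, the vertex cover $V(G)\setminus Y$ and the weight $k$ both of size $\bigO{|X|^3}$ because $\mathcal{C}=\bigcup_{i\in[t]}\mathcal{C}_i$ is a set union), and invoking the OR-cross-composition framework of~\cite{cross_composition}. Your explicit choice of the polynomial equivalence relation and the parameter bookkeeping simply spell out details the paper leaves implicit.
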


\section{Single exponential time algorithms}
\label{sec:single_exp}

The result in this section relies on the \textit{rank based approach} of Bodlaender et al.~\cite{lattice_algebra}, which requires the additional definitions we give below.
Let $U$ be a finite set, $\Pi(U)$ denote the set of all partitions of $U$, and $\sqsubseteq$ be the coarsening relation defined on $\Pi(U)$, i.e given two partitions $p,q \in \Pi(U)$, $p \sqsubseteq q$ if and only if each block of $q$ is contained in some block of $p$.
It is known that $\Pi(U)$ together with $\sqsubseteq$ form a lattice, upon which we can define the usual \textit{join} operator $\join$ and \textit{meet} operator $\meet$~\cite{lattice_algebra}.
The join operation $p \join q$ works as follows: let $H$ be the graph where $V(H) = U$ and $E(H) = \{uv \mid \{\{u,v\}\} \sqsubseteq p \vee \{\{u,v\}\} \sqsubseteq q\}$; $S \subseteq U$ is block of $p \join q$ if and only if $S$ induces a maximal connected component of $H$.
The result of the meet operation $p \meet q$ is the unique partition such that each block is formed by the non-empty intersection between a block of $p$ and a block of $q$.
Given a subset $X \subseteq U$ and $p \in \Pi(U)$, $p_{\downarrow X} \in \Pi(X)$ is the partition obtained by removing all elements of $U \setminus X$ from $p$, while, for $Y \supseteq U$, $p_{\uparrow Y} \in \Pi(Y)$ is the partition obtained by adding to $p$ a singleton block for each element in $Y \setminus U$.
For $X \subseteq U$, we shorthand by $U[X]$ the partition where one block is precisely $\{X\}$ and all other are the singletons of $U \setminus X$; if $X = \{a,b\}$, we use $U[ab]$.

A set of \textit{weighted partitions} of a ground set $U$ is defined as $\mathcal{A} \subseteq \Pi(U) \times \mathbb{N}$.
To speed up dynamic programming algorithms for connectivity problems, the idea is to only store a subset $\mathcal{A}' \subseteq \mathcal{A}$ that preserves the existence of at least one optimal solution.
Formally, for each possible extension $q \in \Pi(U)$  of the current partitions of $\mathcal{A}$ to a valid solution, the optimum of $\mathcal{A}$ relative to $q$ is denoted by $\opt(q, \mathcal{A}) = \min \{w \mid (p, w) \in \mathcal{A}, p \join q = \{U\}\}$.
$\mathcal{A}'$ \textit{represents} $\mathcal{A}$ if $\opt(q, \mathcal{A}') = \opt(q, \mathcal{A})$ for all $q \in \Pi(U)$.
The key result of~\cite{lattice_algebra} is given by Theorem~\ref{thm:reduce}.

\begin{theorem}[3.7 of~\cite{lattice_algebra}]
    \label{thm:reduce}
    There exists an algorithm that, given $\mathcal{A}$ and $U$, computes $\mathcal{A}'$ in time $|\mathcal{A}|2^{(\omega-1)|U|}|U|^\bigO{1}$ and $|\mathcal{A}'| \leq 2^{|U|-1}$, where $\omega$ is the matrix multiplication constant.
\end{theorem}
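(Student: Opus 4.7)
The plan is to recast the reduction as a problem in linear algebra over $\mathbb{F}_2$ and exploit a rank bound on a natural cut-compatibility matrix. First I would associate to each partition $p \in \Pi(U)$ a vector $v_p \in \mathbb{F}_2^{2^{|U|-1}}$ as follows: fix a distinguished element $u_0 \in U$, index coordinates by subsets $X \subseteq U$ with $u_0 \in X$, and set the $X$-th entry of $v_p$ to $1$ iff $X$ is a union of blocks of $p$. The central structural claim is the identity $v_p^{\top} v_q = [\,p \join q = \{U\}\,]$ in $\mathbb{F}_2$, obtained by a parity count of the cuts respected simultaneously by $p$ and $q$. This immediately bounds by $2^{|U|-1}$ the $\mathbb{F}_2$-rank of the matrix $C$ with entries $C[p,q] = [\,p \join q = \{U\}\,]$.

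Next I would describe the algorithm. Sort $\mathcal{A}$ by non-decreasing weight and process the elements one at a time, maintaining a row-reduced basis of the space spanned by the vectors $v_{p_i}$ already accepted into $\mathcal{A}'$. An element $(p_i, w_i)$ is added to $\mathcal{A}'$ iff $v_{p_i}$ is linearly independent of the current basis; otherwise it is discarded. By the rank bound we get $|\mathcal{A}'| \leq 2^{|U|-1}$ automatically. Correctness is proved pointwise: fix any $q$ and let $(p,w)$ achieve $\opt(q,\mathcal{A})$. If $(p,w) \in \mathcal{A}'$ we are done; otherwise $v_p = \sum_{j} v_{p'_j}$ for accepted $(p'_j, w'_j) \in \mathcal{A}'$, necessarily with $w'_j \leq w$ since those were processed first. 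From $v_p^{\top} v_q = 1$ and linearity over $\mathbb{F}_2$, at least one summand satisfies $v_{p'_j}^{\top} v_q = 1$, i.e., $p'_j \join q = \{U\}$, witnessing $\opt(q, \mathcal{A}') \leq w$.

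For the claimed running time I would batch the incremental independence tests: processing rows in groups of size $2^{|U|-1}$ reduces the work per group to a constant number of multiplications of $2^{|U|-1} \times 2^{|U|-1}$ matrices, yielding the $|\mathcal{A}| \cdot 2^{(\omega-1)|U|} |U|^{\bigO{1}}$ bound. The step I expect to be the real obstacle is the structural identity $v_p^{\top} v_q = [\,p \join q = \{U\}\,]$: the factorization must encode that $p \join q = \{U\}$ iff no proper nonempty cut is respected by both $p$ and $q$, which requires a careful inclusion--exclusion over subsets containing $u_0$; once that parity argument is in place, the rest follows from standard Gaussian elimination accelerated by fast matrix multiplication.
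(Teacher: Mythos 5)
Your proposal is correct and is essentially the known proof of this statement: the paper itself does not prove it but imports it verbatim as Theorem~3.7 of the cited rank-based-approach paper of Bodlaender et al., whose argument is exactly your factorization of the matrix $C[p,q]=[\,p \join q = \{U\}\,]$ through the cut-consistency vectors indexed by subsets containing a fixed $u_0$ (the identity follows from the count $2^{\#\mathrm{blocks}(p \join q)-1}$ of common coarsening cuts, taken modulo~$2$, rather than any delicate inclusion--exclusion), followed by a weight-ordered greedy row basis computed with fast Gaussian elimination. The only cosmetic omission is the trivial direction $\opt(q,\mathcal{A}') \geq \opt(q,\mathcal{A})$, which holds since $\mathcal{A}' \subseteq \mathcal{A}$.
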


A function $f : 2^{\Pi(U) \times \mathbb{N}} \times Z \mapsto 2^{\Pi(U) \times \mathbb{N}}$ is said to \textit{preserve representation} if $f(\mathcal{A}', z) = f(\mathcal{A}, z)$ for every $\mathcal{A}, \mathcal{A}' \in \Pi(U) \times \mathbb{N}$ and $z \in Z$; thus, if one can describe a dynamic programming algorithm that uses only transition functions that preserve representation, it is possible to obtain $\mathcal{A}'$.
In the following lemma, let $\rmc{\mathcal{A}} = \{(p,w) \in \mathcal{A} \mid \nexists (p, w') \in \mathcal{A}, w' < w\}$.

\begin{lemma}[Proposition 3.3 and Lemma 3.6 of~\cite{lattice_algebra}]
    \label{lem:functions}
    Let $U$ be a finite set and $\mathcal{A} \subseteq \Pi(U) \times \mathbb{N}$.
    The following functions preserve representation and can be computed in $|\mathcal{A}|\cdot|\mathcal{B}|\cdot|U|^\bigO{1}$ time.
    
    \begin{itemize}
        \item[\textbf{Union.}] For $\mathcal{B} \in \Pi(U) \times \mathbb{N}$, $\mathcal{A} \union \mathcal{B} = \rmc{\mathcal{A} \cup \mathcal{B}}$.
        \item[\textbf{Insert.}] For $X \cap U = \emptyset$, $\ins(X, \mathcal{A}) = \{(p_{\uparrow X \cup U}, w) \mid (p,w) \in \mathcal{A}\}$.
        \item[\textbf{Shift.}] For any integer $w'$, $\shift(w', \mathcal{A}) = \{(p,w + w') \mid (p,w) \in \mathcal{A}\}$.
        \item[\textbf{Glue.}] Let $\hat{U} = U \cup X$, then $\glue(X, \mathcal{A}) = \rmc{\left\{(\hat{U}[X] \join p_{\uparrow \hat{U}}, w) \mid (p,w) \in \mathcal{A}\right\}}$.
        \item[\textbf{Project.}] $\proj(X, \mathcal{A}) = \rmc{\{(p_{\downarrow \overline{X}}, w) \mid (p,w) \in \mathcal{A}, \forall u \in X : \exists v \in \overline{X} : p \sqsubseteq U[uv]\}}$, if $X \subseteq U$.
        \item[\textbf{Join.}] If $\hat{U} = U \cup U'$, $\mathcal{A} \subseteq \Pi(U) \times \mathbb{N}$ and $\mathcal{B} \in \Pi(U') \times \mathbb{N}$, then $\joinf(\mathcal{A}, \mathcal{B}) = \rmc{\{(p_{\uparrow\hat{U}} \join q_{\uparrow \hat{U}}, w + w') \mid (p,w) \in \mathcal{A}, (q, w') \in \mathcal{B}\}}$.
    \end{itemize}
\end{lemma}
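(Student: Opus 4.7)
My plan is to verify each of the six operations separately by unfolding definitions. Recall that $f$ preserves representation if for every $\mathcal{A}' \subseteq \mathcal{A} \subseteq \Pi(U) \times \mathbb{N}$ that represents $\mathcal{A}$ and every admissible $z$, one has $\opt(q, f(\mathcal{A}, z)) = \opt(q, f(\mathcal{A}', z))$ for all extension partitions $q$. The unifying strategy is, for a given operation $f$ and test partition $q$, to algebraically rewrite $\opt(q, f(\mathcal{A}, z))$ so that it becomes $\opt(q', \mathcal{A})$ for some derived $q'$ depending only on $q$ and $z$; the hypothesis on $\mathcal{A}'$ then closes the argument. The algebraic backbone throughout is that $\join$ is associative and commutative on partitions, and that $\rmcx$ only removes pairs strictly dominated in weight and therefore never changes $\opt$.

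For the three easy cases the calculation is short. \emph{Shift} offsets every weight uniformly, so the choice of minimizer in $\opt$ is unchanged. \emph{Union} uses $\opt(q, \mathcal{A} \cup \mathcal{B}) = \min\{\opt(q, \mathcal{A}), \opt(q, \mathcal{B})\}$, and the $\rmcx$ wrap-up is harmless for the reason above. \emph{Insert} enlarges the ground set by $X$ as singletons; since the new elements appear as singletons in every inserted partition, for any extension $q$ over $U \cup X$ the join $p_{\uparrow U \cup X} \join q$ equals $\{U \cup X\}$ if and only if the corresponding restricted extension over $U$ collapses $p$ to $\{U\}$, so reduction to $\opt$ on the original ground set is immediate.

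The next two operations hinge on lattice identities for $\join$. For \emph{Glue}, associativity gives $(\hat{U}[X] \join p_{\uparrow \hat{U}}) \join q = p_{\uparrow \hat{U}} \join (\hat{U}[X] \join q)$, reducing $\opt(q, \glue(X, \mathcal{A}))$ to $\opt(\hat{U}[X] \join q, \mathcal{A})$ up to the trivial lift between $U$ and $\hat{U}$. \emph{Join} iterates the same idea once more: $(p_{\uparrow \hat{U}} \join q_{\uparrow \hat{U}}) \join r = p_{\uparrow \hat{U}} \join (q_{\uparrow \hat{U}} \join r)$, so I would first reduce $\mathcal{A}$ against the extension $q_{\uparrow \hat{U}} \join r$, then reduce $\mathcal{B}$ against the resulting induced extension; invoke the representation hypothesis twice.

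The delicate case, and the one I expect to spend the most time on, is \emph{Project}, where the ground set shrinks and one must be careful about which pairs are dropped. Two facts need to be established. First, any $(p, w) \in \mathcal{A}$ whose $p$ has some $u \in X$ in a block disjoint from $\overline{X}$ can be discarded, because under any extension $q \in \Pi(\overline{X})$ lifted to $q_{\uparrow U}$ the element $u$ remains isolated from $\overline{X}$, so $p \join q_{\uparrow U} \neq \{U\}$ and the pair contributes nothing to $\opt$. Second, for pairs passing the filter, I need $p_{\downarrow \overline{X}} \join q = \{\overline{X}\}$ if and only if $p \join q_{\uparrow U} = \{U\}$, proved by tracing how each $X$-element is absorbed into the connected component of its $\overline{X}$-neighbour in $p$, so that the connectivity of $\overline{X}$ fully determines the connectivity of $U$. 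Together these give $\opt(q, \proj(X, \mathcal{A})) = \opt(q_{\uparrow U}, \mathcal{A})$ and preservation follows. The running-time bound for each operation is then routine: a single pass over $\mathcal{A}$ (and $\mathcal{B}$ when applicable), partition manipulations in $|U|^{\bigO{1}}$ time, and one $\rmcx$ sweep implemented by hashing canonical partition encodings.
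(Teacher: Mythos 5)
The paper does not actually prove this lemma: it is imported verbatim (as ``Proposition 3.3 and Lemma 3.6'') from the rank-based-approach paper of Bodlaender et al., so there is no in-paper proof to compare against. What you have written is, in essence, a reconstruction of the original proofs from that reference, and the strategy is the right one: for each operation $f$, rewrite $\opt(q, f(\mathcal{A},z))$ as $\opt(q',\mathcal{A})$ for a derived extension $q'$ depending only on $q$ and $z$, note that $\rmcx$ never changes $\opt$, and apply the representation hypothesis (twice for $\joinf$, once per family). Your treatment of \textbf{Project} identifies exactly the two facts that make it work, and \textbf{Glue}/\textbf{Join} via associativity of $\join$ are correct. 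One small inaccuracy: in \textbf{Insert}, the biconditional as you state it is not quite right --- $p_{\uparrow X \cup U} \join q = \{X \cup U\}$ requires, in addition to $p \join q_{\downarrow U} = \{U\}$, that every block of $q$ meeting $X$ also meets $U$ (otherwise the elements of $X$ in that block can never be connected to $U$, since they are singletons in every $p_{\uparrow X\cup U}$). This extra condition depends only on $q$, not on $(p,w)$, so when it fails both $\opt(q,\ins(X,\mathcal{A}))$ and $\opt(q,\ins(X,\mathcal{A}'))$ are undefined/infinite and preservation still holds; the argument survives, but the ``iff'' should be stated with that caveat. Similarly, the time bound for the single-family operations is really $|\mathcal{A}|\cdot|U|^{\bigO{1}}$, the $|\mathcal{B}|$ factor being relevant only for \textbf{Union} and \textbf{Join}; your hashing-based implementation of $\rmcx$ is fine.
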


A \textit{tree decomposition} of a graph $G$ is a pair $\td{T} = \left(T, \mathcal{B} = \{B_j \mid j \in V(T)\}\right)$, where $T$ is a tree and $\mathcal{B} \subseteq 2^{V(G)}$ is a family where: $\bigcup_{B_j \in \mathcal{B}} B_j = V(G)$;
for every edge $uv \in E(G)$ there is some~$B_j$ such that $\{u,v\} \subseteq B_j$;
for every $i,j,q \in V(T)$, if $q$ is in the path between $i$ and $j$ in $T$, then $B_i \cap B_j \subseteq B_q$.
Each $B_j \in \mathcal{B}$ is called a \emph{bag} of the tree decomposition.
$G$ has treewidth has most $t$ if it admits a tree decomposition such that no bag has more than $t+1$ vertices.
For further properties of treewidth, we refer to~\citep{treewidth}.
After rooting $T$, $G_x$ denotes the subgraph of $G$ induced by the vertices contained in any bag that belongs to the subtree of $T$ rooted at bag $x$.
An algorithmically useful property of tree decompositions is the existence of a \emph{nice tree decomposition} that does not increase the treewidth of $G$.

\begin{definition}[Nice tree decomposition]
    \label{def:nice_tree}
    A tree decomposition $\td{T}$ of $G$ is said to be \emph{nice} if its tree is rooted at, say, the empty bag $r(T)$ and each of its bags is from one of the following four types:
    \begin{enumerate}
        \item \emph{Leaf node}: a leaf $x$ of T with $B_x = \emptyset$.
        \item \emph{Introduce vertex node}: an inner bag $x$ of $\td{T}$ with one child $y$ such that $B_x \setminus B_y = \{u\}$.
        \item \emph{Forget node}: an inner bag $x$ of $\td{T}$ with one child $y$ such that $B_y \setminus B_x = \{u\}$.
        \item \emph{Join node}: an inner bag $x$ of $\td{T}$ with two children $y,z$ such that $B_x = B_y = B_z$.
    \end{enumerate}
\end{definition}

\begin{theorem}
    There is an algorithm for \pname{Maximum Weight Connected Matching} that, given a nice tree decomposition of width $t$ of the $n$-vertex input graph $G$ rooted at the forget node for some terminal $r \in K$, runs in time $2^\bigO{t}n^\bigO{1}$.
\end{theorem}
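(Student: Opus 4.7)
The plan is to adapt the rank-based dynamic programming of Bodlaender et al.~\cite{lattice_algebra} recounted in the section preamble to the connected matching objective. For each bag $B_x$ of the nice tree decomposition and each subset $S \subseteq B_x$ I maintain a set $\mathcal{A}_x[S] \subseteq \Pi(S) \times \mathbb{Z}$ of weighted partitions, with the intended semantics that $(p,w) \in \mathcal{A}_x[S]$ whenever there is a matching $M$ of $G_x$ with $V(M) \cap B_x = S$, whose connected components of $G_x[V(M)]$ induce $p$ on $S$, and with stored value $w = -w(M)$ (the sign flip converts the framework's minimization into our maximization). To enforce global connectivity of the final matching I wrap the DP in an outer loop over a choice of anchor vertex $v_0 \in V(G)$ forced to be saturated, rooting the decomposition so that $v_0$ is the last vertex to be forgotten and allowing $v_0$ alone to be a singleton block at that final step; taking the maximum over the $n$ choices (together with the trivial weight $0$ of the empty matching) handles the case where $G[V(M)]$ would otherwise fragment.

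I refine the decomposition in the standard way to also include one \emph{introduce-edge} node per edge of $G$, which preserves the width and adds only $\bigO{|E(G)|}$ bags. The transitions then are: \emph{leaf}, $\mathcal{A}_x[\emptyset] = \{(\emptyset, 0)\}$; \emph{introduce vertex $v$} with child $y$, copy $\mathcal{A}_x[S] = \mathcal{A}_y[S]$ for $v \notin S$ and leave $\mathcal{A}_x[S] = \emptyset$ for $v \in S$; \emph{introduce edge $uv$} with child $y$, take $\mathcal{A}_x[S] = \mathcal{A}_y[S] \union \shift(-w(uv), \glue(\{u,v\}, \ins(\{u,v\}, \mathcal{A}_y[S \setminus \{u,v\}])))$ whenever $\{u,v\} \subseteq S$, and $\mathcal{A}_x[S] = \mathcal{A}_y[S]$ otherwise; \emph{forget vertex $v$} with child $y$, $\mathcal{A}_x[S] = \mathcal{A}_y[S] \union \proj(\{v\}, \mathcal{A}_y[S \cup \{v\}])$ where $\proj$ discards precisely those partitions that would orphan $v$'s component (with the aforementioned exception at the final forget of $v_0$); and \emph{join} with children $y,z$, enumerate all disjoint splits $S = S_y \uplus S_z$ (reflecting that each matched vertex is saturated by an edge from exactly one branch) and take the union of $\joinf(\mathcal{A}_y[S_y], \mathcal{A}_z[S_z])$ over all such splits. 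Every primitive invoked appears in Lemma~\ref{lem:functions} as representation-preserving, so after each transition I apply Theorem~\ref{thm:reduce} to each $\mathcal{A}_x[S]$, keeping $|\mathcal{A}_x[S]| \leq 2^{|S|-1} \leq 2^t$. Summing over the $2^{\bigO{t}}$ subsets $S$, the $\bigO{n}$ bags, and the outer anchor loop yields a total runtime of $2^{\bigO{t}} n^{\bigO{1}}$.

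The main obstacle is verifying correctness of the anchoring device together with the no-orphan-at-forget rule enforced by $\proj$ -- namely, that together they are equivalent to the final matching $M$ inducing a connected $G[V(M)]$ -- and checking that the introduce-edge and join transitions faithfully enforce the matching invariant that no vertex becomes a shared endpoint of two edges of $M$. The first is routine: whenever a component would fragment, at least one of its representatives eventually becomes an orphaned singleton at the step it is forgotten, which $\proj$ rejects, and conversely a connected $M$ has exactly one component, which the anchor $v_0$ closes at the root. The second is enforced syntactically by requiring $u, v$ to be absent from the saturated set of the child state at an introduce-edge node (the role of $\ins$, which inserts them as fresh singletons before $\glue$ fuses them into a single block) and by insisting on disjoint $S_y, S_z$ at a join node; both conditions are cheap to impose and do not affect the asymptotic bounds.
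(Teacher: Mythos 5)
There is a genuine gap, and it is exactly the point where your encoding diverges from the paper's. Your table stores, for each bag, only the set $S$ of vertices already saturated by matching edges introduced so far, and your partitions are coarsened only when a matching edge is taken (the $\glue(\{u,v\},\cdot)$ inside the introduce-edge transition) or at join nodes. But connectivity in this problem is connectivity of the \emph{induced} subgraph $G[V(M)]$: an edge of $G$ between two saturated vertices contributes to connectivity even when it is not a matching edge, and your transitions never glue along such edges (your ``edge not used'' branch is just $\mathcal{A}_y[S]$, with no coarsening). As a consequence the blocks of your partitions are always the components of $(V(M_x), M_x)$, i.e.\ single matched pairs and singletons; once both endpoints of a matching edge have been forgotten, the later forget is rejected by $\proj$, so the DP as written only ever certifies matchings with a single edge (plus the anchored exception at $v_0$). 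Your stated semantics (``components of $G_x[V(M)]$ induce $p$ on $S$'') is the right one, but the transitions do not implement it.

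Fixing this needs more than adding a glue for induced edges between two vertices of $S$: a vertex may be matched only by an edge introduced higher up in the tree, yet edges between it and other solution vertices introduced \emph{below} that point must already be usable for connectivity. With your two-state encoding ($v \in S$ or not) such a vertex is indistinguishable from one outside the solution, so those connectivity edges are lost. This is precisely the role of the paper's second index set $U$ of ``half-matched'' vertices (equivalently, a three-state labeling: out / in-and-unmatched / in-and-matched in an introduce-edge formulation): vertices promised to be in $V(M)$ but not yet matched participate in the gluing at introduce(-edge) time, are required by $\proj$ at forget time to be already matched, and are exchanged between the two sides of a join ($f_y(Y, U \cup (S\setminus Y))$ against $f_z(S\setminus Y, U\cup Y)$ in the paper). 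Your anchor-vertex device, the sign flip to fit the framework's minimization, the use of Theorem~\ref{thm:reduce} after each step, and the running-time accounting are all fine and match the paper; the missing half-matched state and the gluing along non-matching induced edges are what you must add for correctness.
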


\begin{proof}
    Let $G$ be the input graph to \pname{Maximum Weight Connected Matching}, $\rho : E(G) \mapsto \mathbb{R}$ be the weighting of the edges, and $\td{T}^* = (T^*, \mathcal{B}^*)$ be a tree decomposition of width $t$ of $G$; in a slight abuse of notation, for $L \subseteq E(G)$, we define $\rho(L) = \sum_{e \in L} \rho(e)$.
    We also assume that $G$ is connected since, if it is not connected, we can run an algorithm in parallel for each component, which have treewidth bounded by that of $G$, and output the maximum of all these distinct components.
    In the first step of our algorithm, we pick a vertex $\pi \in V(G)$ and create a tree decomposition $\td = (T, \mathcal{B})$ rooted at a node $r$ that corresponds to an empty bag that is also a forget bag for vertex $\pi$; for each such choice of $\pi$ we run the dynamic programming algorithm we describe in the remainder of this proof.
    For each node $x \in V(T)$, we compute the table $f_x(S, U) \subseteq \Pi(B_x) \times \mathbb{R}$, with $S \subseteq B_x$ and $U \subseteq B_x \setminus S$.
    If we have a weighted partition $(p, w) \in f_x(S,U)$, then we want to ensure that there is a (partial) solution $M_x$ with the following properties: (i) every vertex of $S$ is already matched in $G_x$, (ii) vertices of $U$ are half-matched, i.e. they have yet to be matched to other vertices but are used to determine connectivity of $G[M_x]$, and (iii) $\rho(M_x) = \sum_{e \in M_x} \rho(e) = w$.
    Note that vertices in $U$ must already be accounted for when determining the connected components induced by $M_x$.
    After every operation involving families of partitions, we apply the algorithm of Theorem~\ref{thm:reduce}.
    We divide our analysis in the four cases of Definition~\ref{def:nice_tree}, where $x$ is the bag for which we currently want to compute the dynamic programming table.
    
    \noindent \textbf{Leaf node.} Since $B_x = \emptyset$, the only possible connected matching of $G_x$ is the empty matching, so we define:

    \begin{equation*}
        \centering
        f_x(\emptyset, \emptyset) = \{(\emptyset, 0)\}
    \end{equation*}

    \noindent \textbf{Introduce node.} Let $y$ be the child of of $x$ and $\{v\} = B_x \setminus B_y$. We define our transition as follows, where $\mathcal{A}_x(S, U, u) = \ins(\{v\}, f_y(S \setminus \{u,v\}, U \cup \{u\}))$:
    
    \begin{equation*}
        \centering
        \hfill f_x(S, U) =
        \begin{cases}
            f_y(S, U), &\text{ if } v \notin S \cup U \text{;}\\
            \glue(N_S(u), \ins(\{v\}, f_y(S, U \setminus \{v\})), &\text{ if } v \in U \text{;}\\
            \bigunion_{u \in N_S(v)} \shift(\rho(uv), \glue\left(N_S(v), \mathcal{A}_x(S,U,u)\right)), &\text{ otherwise.}\\
        \end{cases}
    \end{equation*}
    
    If $v$ is not in $M_x$, then $M_x$ is also a partial solution to $G_y$ which, by induction, is represented by an element of $f_y(S, U)$, which is covered by the first case of the equation.
    On the other hand, if $v \in M_x$, then $v$ is either matched to a vertex in $S$, or it is a half-matched vertex.
    If it is half matched, then $M_y = M_x \setminus \{v\}$ must be a partial solution to $G_y$ and, furthermore, must be represented in $f_y(S, U \setminus \{v\})$, since its matched edges are the same as in $M_x$ and all other half-matched vertices of $M_x$ also exist in $M_y$; this situation is covered in the second case of the equation, where we must further coarsen the partition that represents $M_y$ by joining the blocks that have neighbors of $v$ in them.
    Finally, if $v \in S$ and $uv \in M_x$, then it must be the case that $u \in S$, since $v \notin V(G_y)$.
    As such, $M_y = \{u\} \cup M_x \setminus \{uv\}$ must be a partial solution to $G_y$ with $u$ being a half-matched vertex, i.e. $M_y$ must be represented in $f_y(S \setminus \{u,v\}, U \cup \{u\})$, which holds by induction.
    This final case is represented in the third case of the previous equation; note that we must add the weight of $uv$ to the weight of $M_y$ and join its connected components that contain vertices of $N_S(v)$.
    
    \noindent \textbf{Forget node.} Let $y$ be the child of $x$ and $\{v\} = B_y \setminus B_x$. We compute our table as follows:
    
    \begin{equation*}
        \centering
        f_x(S, U) = f_y(S, U) \union \proj(\{v\}, f_y(S \cup \{v\}, U))
    \end{equation*}
    
    First, consider the case where $v \notin M_x$, and note that $M_x$ is also a partial solution to $G_y$ and, consequently, must be represented by $f_y(S, U)$.
    On the other hand, if $v \in M_x$, then $v$ must be matched to some vertex of $M_x$, otherwise $M_x$ would not be extensible to a matching (i.e. without half-matched vertices) of $G$, since $v \notin G \setminus G_x$.
    These two cases are represented by the two right-hand-side terms of our previous equation.
    
    \noindent \textbf{Join node.} Finally, if $y,z$ are the children of $x$, then we compose our table according to the following equation:
    
    \begin{equation*}
        \centering
        \hfill f_x(S, U) = \bigunion_{Y \subseteq S} \joinf(f_y(Y, U \cup (S \setminus Y)), f_z(S \setminus Y, U \cup Y))
    \end{equation*}
    
    Where the union operator runs over all subsets of $S$.
    Let $M_x$ be a partial solution of $G_x$, $Y \subseteq S$ be the vertices in $S$ matched to vertices of $G_y$, and $M_y$ be the subset of $M_x$ restricted to $G_y$.
    Observe that $M_y \cap S = Y$ since the vertices of $Y$ are precisely those of $B_y$ matched in $M_y$.
    Moreover, $M_y \cap B_y \setminus Y = U \cup (S \setminus Y)$ are the vertices of $M_x$ not matched in $M_y$; they must, however, be half-matched vertices of $M_y$ since they are (half-)matched vertices of $M_x$.
    Consequently, $M_y$ is represented by a partition $(p, w_y) \in f_y(Y, U \cup (S \setminus Y))$.
    Now, let $M_z \subseteq M_x$ be the partial solution to $G_z$ where $S \setminus Y$ are the vertices of $S$ not matched by $M_y$.
    Note that $U \cup Y$ are precisely the vertices half-matched vertices of $M_z$ since they must be either half-matched in $M_x$ ($U$) or matched in $M_y$.
    As such, $M_z$ is represented by a partition $(q, w_z) \in f_z(S \setminus Y, U \cup Y)$.
    Finally, we have that $p \join q$ yields same partition of $S \cup U$ as $M_x$ since $M_y \cup M_z = M_x$, and, furthermore, $\rho(M_x) = \rho(M_y) + \rho(M_z) = w_y + w_z$ since no edge matched by $M_y$ is present in $M_z$ and vice-versa and $E(M_y) \cup E(M_z) = E(M_x)$.
    These are the exact properties given by the $\joinf$ operation; since the above equation runs over all subsets of $S$, $M_x$ will be represented by $f_x(S, U)$.
    
    In order to obtain the solution to $G$, first observe that, since $G$ is connected and the root $r$ of $T$ is a forget bag for $\pi$, $G$ has a connected matching of weight $w$ if and only if $\{(\{\pi\}, w)\}$ is the unique element of $f_{r'}(\{\pi\}, \emptyset)$, where $r'$ is the child bag of $r$.
    In the final step of the algorithm, we return the maximum weight obtained between all choices of $\pi$.
    
    As to the running time of the dynamic programming algorithm, note that, for each choice of $\pi$ we can compute all entries of $f_x$ in time bounded by {$2^{|B_x|} \cdot \sum_{i = 0}^{|B_x|} \binom{|B_x}{i}2^{\omega i}t^\bigO{1} \leq (1 + 2^{\omega})^{2t}t^\bigO{1}$}; the term $2^{\omega i}$ corresponds to the time needed to execute the algorithm of Theorem~\ref{thm:reduce} for an entry where $|S| = i$ and the term $2^{|B_x|}$ comes from all possible choices of $U$.
    Forget nodes can be computed in the same time since we make the same number $\bigO{t}$ fewer calls to Theorem~\ref{thm:reduce} for each entry.
    Finally, leaves can be solved in constant time and table $f_x$ for a join node $x$ can be computed in {$2^{|B_x|} \cdot \sum_{i = 0}^{|B_x|} \binom{|B_x}{i}2^{\omega i + i}t^\bigO{1} \leq (1 + 2^{2\omega})^{2t}t^\bigO{1}$} time; in this case, the $2^{\omega i + i}$ terms comes from the $2^i$ choices for $Y$, each of which requires one invocation of Theorem~\ref{thm:reduce}.
    Given that we have $\bigO{nt}$ nodes in a nice tree decomposition, our dynamic programming algorithm can be computed in $\bigO{nt}$ times the cost of the most expensive nodes, which are the join nodes, totaling the required $2^\bigO{t}n^\bigO{1}$ time.
    Since we have to apply it for each $\pi \in V(G)$, the entire algorithm runs in $2^\bigO{1}n^\bigO{1}$ time.
\end{proof}

\section{Conclusions and future works}\label{sec:conclusions}

Motivated by previous works on weighted $\mathscr{P}$-matchings, such as \pname{Weighted Induced Matching}~\cite{klemz2022}\cite{panda2020} and \pname{Weighted Acyclic Matching}~\cite{dieter2019}, in this paper we introduced and studied the \pname{Weighted Connected Matching} problem.

We begin our investigation on the complexity of the problem by imposing restrictions on the input graphs and weights.
In particular, we show that, the problem is \NPc\ on planar bipartite graphs and bipartite graphs of diameter 4 for binary weights, and on planar subcubic graphs and starlike graphs when weights are restricted to $\{-1, 1\}$.
On the positive side, we present polynomial time algorithms for \pname{Maximum Weight Connected Matching} on chordal graphs with non-negative weights, and on trees and subcubic graphs with arbitrary weights.
Our last contributions are in parameterized complexity, where we show that the problem admits a $2^\bigO{t}$ time algorithm when parameterized by treewidth, but does not admit a polynomial kernel when parameterized by vertex cover and the minimum required weight even with binary weights unless $\NP \subseteq \coNP/\poly$.

Possible directions for future work include determining the complexity of the problem for different combinations of graph classes and allowed edge weights.
In particular, we would like to know the complexity of \pname{Weighted Connected Matching} for diameter $3$ bipartite graphs when weights are non-negative, and for planar graphs of maximum degree at least 3 under the same constraint.
Other graph classes of interest include cactus graphs and block graphs, both with and without weight restrictions.
We are also interested in the parameterized complexity of the problem.
In terms of natural parameterizations, we see two possible directions: parameterizing by the number of edges in the matching or by the weight of the matching; while we have some negative kernelization results for these parameters, tractability is still an open question.
Other possibilities include the study of other structural parameterizations, with the main open question being tractability for the cliquewidth parameterization.
Finally, investigating other $\mathscr{P}$-matching problems, like \pname{Disconnected Matching} and \pname{Uniquely Restricted Matching} is also an interesting venue.
While most unweighted $\mathscr{P}$-matching problems are already {\NPH}, their weighted versions may be tractable for relevant graph classes.

\bibliographystyle{splncs04}
\bibliography{refs}

\end{document}